\renewcommand{\aleph}{\omega}
\newcommand{\tbSigma}{\mathbf{\Sigma}}
\newcommand{\tbPi}{\mathbf{\Pi}}
\newcommand{\tbDelta}{\mathbf{\Delta}}
\newcommand{\tbD}{\mathbf{D}}
\newcommand{\tlD}{\mbox{D}}
\newcommand{\bGd}{\mathbf{G}_\delta}
\newcommand{\bFs}{\mathbf{F}_\sigma}
\newtheorem{theorem}{Theorem}[section]
\newtheorem{lemma}[theorem]{Lemma}
\newtheorem{definition}[theorem]{Definition}
\newtheorem{example}[theorem]{Example}
\newtheorem{corollary}[theorem]{Corollary}
\newtheorem{proposition}[theorem]{Proposition}
\newtheorem{remark}[theorem]{Remark}
\newtheorem{problem}[theorem]{Open Problem}
\newtheorem{note}[theorem]{Note}
\newcommand{\BM}{\textit{BM}}
\newcommand{\Ch}{\textit{Ch}}
\newcommand{\N}{\mathbb{N}}
\newcommand{\Q}{\mathbb{Q}}
\newcommand{\R}{\mathbb{R}}
\newcommand{\segment}{\!\upharpoonright \!}
\newcommand{\means}[1]{\hbox{$[\kern -.4em [\, {#1}\, ]\kern -.4em]$}}
\newcommand{\conc}{^{\frown}}
\newcommand{\cantor}{\alphabet^\omega}
\newcommand{\alphabet}{{\bf 2}}
\newcommand{\Bool}{{\{0,1\}}}
\newcommand{\mix}{\alphabet^{\leq \omega}}
\newcommand{\Baire}{\N^{\omega}}
\newcommand{\nil}{\textit{nil}}
\newcommand{\rank}{\textit{rank}}
\newcommand{\uup}{\twoheaduparrow\!\!}
\newcommand{\up}{\uparrow\!\!}
\newcommand{\ddown}{\twoheaddownarrow}
\newcommand{\co}{\textit{co-}}
\newcommand{\CK}{\omega_1^{\textit{\scriptsize CK}}}
\newcommand{\PN}{\+P(\N)}
\newcommand{\FIN}{\+P_{\!<\omega}(\N)}
\newcommand{\INF}{\+P_{\!\infty\!}(\N)}
\newcommand{\dom}{\textit{dom}}
\newcommand{\leqpref}{\leq_{\text{pref}}}
\begin{document}

\title[Borel and Hausdorff Hierarchies in Topological Spaces of Choquet Games]{Borel and Hausdorff Hierarchies in Topological Spaces of Choquet Games
\\ and Their Effectivization}

\author[Ver\'onica Becher  and Serge Grigorieff]
{Ver\'onica Becher $^{1}$\thanks{
Members of the Laboratoire International Associ\'e INFINIS,
Universidad de Buenos Aires -- Universit\'e Paris Diderot-Paris 7.
This research was partially done whilst the first author was a visiting fellow
at the Isaac Newton Institute for Mathematical Sciences
in the programme `Semantics \& Syntax'.}
and  Serge Grigorieff $^{2\dagger}$
\\
$^1$ FCEyN,Universidad de Buenos Aires \& CONICET, Argentina.
\addressbreak
$^2$ LIAFA, CNRS \& Universit\'e Paris Diderot - Paris 7, France.}

\date{August 21, 2012}
\maketitle
\bibliographystyle{plain}

\begin{abstract}
What parts of classical descriptive set theory done in Polish spaces
still hold for more general topological spaces,
possibly $T_0$ or $T_1$, but  not $T_2$ (i.e. not Hausdorff)?
This question has been addressed by Victor Selivanov in a series of papers
centered on algebraic domains.
And recently it has been considered by Matthew de Brecht  for  quasi-Polish spaces, a framework that  contains both countably based continuous domains and Polish spaces. 
In this paper we present alternative unifying topological spaces, 
that we call {\em approximation spaces}. 
They are exactly the spaces for which player Nonempty has a stationary strategy in the Choquet game. 
A natural proper subclass of approximation spaces coincides with
the class of quasi-Polish spaces.
We study the Borel and Hausdorff difference hierarchies
in approximation spaces, 
revisiting the work done for the other topological spaces. 
We also consider the problem of effectivization of these results.
\end{abstract}



\section{Introduction}
%
The primary setting of descriptive set theory,
including the study of Borel  and Hausdorff hierarchies,
is that of Polish spaces.
These are spaces  homeomorphic to complete metric spaces
that have a countable dense subset,
for example the Cantor space, the Baire space, the real line and its intervals.
The question of what parts of classical descriptive set theory
still hold for \mbox{non-Polish} spaces,
specifically for  general  $T_0$ topological spaces,
has not been yet systematically studied. 
Major progress has been done by Victor Selivanov in his investigations centered mainly in  algebraic domains
(directed complete partial orders with a countable base of compact elements)
in an ongoing  series of papers on this topic that started in 1978.
Recently,  Matthew de Brecht (2011)  
presented the theory of {\em quasi-Polish} spaces,  
a unifying framework for Polish spaces and countably based domains 
(i.e., $\omega$-continuous domains,
or  directed complete partial orders with a countable basis).
De~Brecht characterized quasi-Polish spaces in terms of the Choquet topological games, and he proved that descriptive set theory can be nicely developed in such spaces.

In this paper we consider alternative unifying topological spaces, 
that we call {\em approximation spaces}.  
Not only they contain all Polish spaces and all continuous domains,
but a natural subclass of approximation spaces coincides with
the class of quasi-Polish spaces.
Approximation spaces can be viewed as the ``\`a la domain" version
of the ``\`a la Polish" unifying framework of de Brecht.
These spaces can also be characterized in terms of Choquet games.
We study the Borel and Hausdorff difference hierarchies in approximation spaces, revisiting the work done for the other topological spaces.
We also consider the problem of effectivization of these results.

The paper is organized as follows.
\S\ref{s:preliminary} presents the preliminary material.  
We recall the needed notions
about the Borel and Hausdorff hierarchies in a $T_0$
(possibly not $T_2$) topological context. 
We give an overview of the needed material on  domains with the Scott topology 
and quasi-Polish spaces.
We also present some prerequisites on the
Banach-Mazur and Choquet topological games.

\S\ref{s:app} is devoted to the class of approximation spaces.
We prove that both, Polish spaces and continuous domains,
 are approximation spaces. 
Indeed, we show that  all quasi-Polish spaces are approximation spaces.
Theorem~\ref{thm:approx games} characterizes  approximation spaces 
in terms of Choquet games.
 Theorem~\ref{thm:coincide} proves that 
quasi-Polish spaces and {\em convergent approximation spaces} are the same class.

In the context of  Polish spaces, the Baire property asserts that
any countable intersection of dense open sets is dense.
Thus, countable intersections of open sets,
the $\bGd$ sets, constitute the $\tbPi^0_2$ level of the Borel hierarchy.
In the context of $T_0$ but not $T_2$ spaces
this is not true any more: the $\tbPi^0_2$ level consists of
countable intersections of {\em Boolean combinations}
of open sets.
Then it is natural to consider the $\tbPi^0_2$ Baire property
which asserts that
any countable intersection of dense differences of open sets
is dense.
As shown by de Brecht (2011),
the usual $\bGd$ Baire property and  Hausdorff-Kuratowski's theorem
both hold for quasi-Polish spaces.
Consequently, these two results are ensured for convergent approximation spaces.
Theorem \ref{thm:Pi2 Baire} proves that, in fact, all approximation spaces
satisfy the $\tbPi^0_2$ Baire property.
Theorem \ref{thm:Hausdorff} extends Hausdorff's theorem
to spaces having a countable basis and such that
every closed subspace is an approximation space:
the $\tbDelta^0_2$ class coincides with the difference hierarchy.
This result was previously obtained  by Selivanov for
$\omega$-algebraic domains and then for $\omega$-continuous domains 
(Selivanov 2005, 2008).
De Brecht (2011) proved that the full  Hausdorff-Kuratowski Theorem
holds for quasi-Polish spaces; hence, it holds for  convergent approximation spaces.
We do not know whether it holds for all approximation spaces.

In \S\ref{s:Hausdorff domains} we revisit a part of Selivanov's work (2005, 2008)
on domains that does not apply to Polish spaces:
his characterization of the classes of the Hausdorff hierarchy in terms of alternating trees,
and his proof of non existence of ambiguous sets in this hierarchy.
We check that the assumption of $\omega$-algebraicity or $\omega$-continuity
can be replaced, mutatis mutandis, by that of continuity.

\S\ref{s:effective hierarchies} is devoted to effectivization.
The definition of approximation spaces admits a straightforward definition of 
an effective version. We make the first steps in developing the effective theory.
We first recall the notions of effective topological space and effective domains.
We also include the known machinery of effective Borel codes.
Thorem~\ref{thm:Hausdorff effective}
proves a weak effective version of Hausdorff's theorem in
effective approximation spaces.
We obtain this proof as an adaptation of Selivanov's work (2003) for the Baire space.
A general effective version of Hausdorff's theorem is still an open question.
\medskip
\\
{\bf Acknowledgement.} We thank Mart\'in Escard\'o
for pointing out  the work of Matthew de Brecht on quasi-Polish spaces
and for giving us very helpful references.
Also we are indebted to an anonymous reviewer for his/her insights
on approximation spaces; 
specifically, we owe this reviewer
the indication to consider Choquet games
and the convergence condition on approximation spaces that 
yields the equivalence with quasi-Polish spaces. 

\section{Preliminary Definitions and Results}
\label{s:preliminary}

We write $\N$ for the set of natural numbers,
$\PN$ for the set of all subsets of $\N$
and $\FIN$ for the set of all finite subsets of $\N$.
Finite sequences of elements of a set $X$ are denoted by
$(x_1,x_2,\ldots, x_n)$.
Concatenation of sequences $u,v$ and element $x$ are written simply as
$uv$, $ux$.
We use Greek letters to denote ordinals. We write 
$\omega$ for the first infinite ordinal, $\omega_1$ for the first uncountable ordinal,
and $\CK$ for the least not computable ordinal (the Church-Kleene ordinal).
For any two ordinals $\alpha,\beta$,  
$\alpha\sim\beta$ means that they have the same parity.

\subsection{Borel and Hausdorff Hierarchies in General Topological Spaces}
\label{s:hierarchies}

All the material of this subsection on the Borel and Hausdorff hierarchies
in general topological spaces has first appeared in \cite{selivanov2005}.
To make the presentation self contained, we reproduce here some of the  proofs.

\subsubsection{The Borel Hierarchy.}
\label{ss:Borel}

In general topological spaces, an open set may possibly not be
a countable union of closed sets, cf. Remark~\ref{rk:PN AH} infra.
In order to get the expected inclusion
$\tbSigma^0_1(E)\subseteq\tbSigma^0_2(E)$,
one has to distort the usual definition of Borel spaces given in
metric spaces.
This leads to define the hierarchy of Borel sets in a general setting as follows.

\begin{definition}[Borel sets] \label{def:Borel}
Let $E$ be a topological space.

\begin{enumerate}
\item  {\it Borel subsets of $E$} are those sets obtained from open sets 
by iterated complementation and countable unions and intersections.

\item The {\it Borel classes}
$\tbSigma_\alpha^0(E)$, $\tbPi_\alpha^0(E)$, $\tbDelta_\alpha^0(E)$,
where $\alpha\geq1$ varies over countable ordinals,
are inductively defined as follows:
$$
\begin{array}{rcl}
\tbSigma^0_1(E)&=&\textit{open subsets of $E$}
\\
\mbox{if $\alpha\geq2$ } \tbSigma^0_{\alpha}(E)
&=&
\textit{countable unions of Boolean combinations
of sets in $\bigcup_{\beta<\alpha}\tbSigma^0_\beta(E)$}
\\
\tbPi^0_\alpha(E)&=&
\{E\setminus X \mid X\in\tbSigma_\alpha^0(E)\}
\\
\tbDelta^0_\alpha(E)&=&
\tbSigma^0_\alpha(E) \cap \tbPi^0_\alpha(E)
\end{array}
$$

\item  The class $\bGd(E)$ (respectively $\bFs(E)$) is the family of
countable intersections of open sets (respectively unions of closed sets).
In general, it is a proper subclass of $\tbPi^0_2(E)$
(respectively $\tbSigma^0_2(E)$).

\end{enumerate}
\end{definition}
The following result follows from elementary set theory.

\begin{proposition}\label{p:Borel}
\begin{enumerate}
\item  $\tbSigma^0_\beta(E)\cup\tbPi^0_\beta(E)
         \subseteq    \tbDelta^0_\alpha(E)$  for any $\alpha>\beta\geq1$.
\item Each one of the Borel classes $\tbSigma^0_\alpha(E)$,
$\tbPi^0_\alpha(E)$, $\tbDelta^0_\alpha(E)$
is closed under finite unions and intersections
and continuous inverse images.
The $\tbSigma^0_\alpha(E)$ (respectively $\tbPi^0_\alpha(E)$) classes
are closed under countable unions (respectively intersections).
\end{enumerate}
\end{proposition}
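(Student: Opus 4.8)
The plan is to prove part~(1) first, extracting along the way the single fact about complementation that also feeds part~(2), and then dispatch part~(2) by a routine induction on the ordinal level. For part~(1), the key observation is that whenever $\gamma<\alpha$, a \emph{single} set $X\in\tbSigma^0_\gamma(E)$ is a degenerate one-term Boolean combination of members of $\bigcup_{\delta<\alpha}\tbSigma^0_\delta(E)$ (namely of $X$ itself), and so is its complement $E\setminus X$; since $\tbSigma^0_\alpha(E)$ is defined to consist of \emph{all} countable unions of such Boolean combinations, a one-term union witnesses both $X\in\tbSigma^0_\alpha(E)$ and $E\setminus X\in\tbSigma^0_\alpha(E)$, that is, $X\in\tbDelta^0_\alpha(E)$. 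Now for $\alpha>\beta\ge1$ we have $\tbSigma^0_\beta(E)\subseteq\bigcup_{\gamma<\alpha}\tbSigma^0_\gamma(E)$ immediately from Definition~\ref{def:Borel} (the case $\beta=1$ is included, and it forces $\alpha\ge2$), hence $\tbSigma^0_\beta(E)\subseteq\tbDelta^0_\alpha(E)$. Finally $\tbDelta^0_\alpha(E)=\tbSigma^0_\alpha(E)\cap\tbPi^0_\alpha(E)$ is closed under complementation, the two factors being interchanged by $X\mapsto E\setminus X$; therefore $\tbPi^0_\beta(E)=\{\,E\setminus X : X\in\tbSigma^0_\beta(E)\,\}\subseteq\tbDelta^0_\alpha(E)$ as well, completing~(1).

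For part~(2) I treat the base level and the higher levels separately. At level~$1$, open sets are closed under finite unions and intersections by the topology axioms, and closed sets dually. At level $\alpha\ge2$, closure of $\tbSigma^0_\alpha(E)$ under countable unions is immediate from the definition (a countable union of countable unions of Boolean combinations is again such a union); for a finite intersection, writing $X=\bigcup_m B_m$ and $Y=\bigcup_n C_n$ with each $B_m,C_n$ a Boolean combination of members of $\bigcup_{\gamma<\alpha}\tbSigma^0_\gamma(E)$, one distributes to get $X\cap Y=\bigcup_{m,n}(B_m\cap C_n)$, where each $B_m\cap C_n$ is again such a Boolean combination, so $X\cap Y\in\tbSigma^0_\alpha(E)$; closure under finite unions is then a fortiori. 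The corresponding statements for $\tbPi^0_\alpha(E)$ follow by De~Morgan, and $\tbDelta^0_\alpha(E)=\tbSigma^0_\alpha(E)\cap\tbPi^0_\alpha(E)$ inherits closure under finite unions and finite intersections. For continuous inverse images, let $f\colon F\to E$ be continuous; since $f^{-1}$ sends open sets to open sets and commutes with complementation, countable unions and countable intersections, a straightforward induction on $\alpha$ gives $f^{-1}(\tbSigma^0_\alpha(E))\subseteq\tbSigma^0_\alpha(F)$, whence also $f^{-1}(\tbPi^0_\alpha(E))\subseteq\tbPi^0_\alpha(F)$ and $f^{-1}(\tbDelta^0_\alpha(E))\subseteq\tbDelta^0_\alpha(F)$.

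I do not anticipate a genuine obstacle here, as the statement itself advertises. The only thing that needs care is the bookkeeping imposed by the nonstandard definition of $\tbSigma^0_\alpha$ as ``countable unions of Boolean combinations of strictly lower classes'': one must keep in mind that every set of a strictly lower class, together with its complement, already lies in $\tbSigma^0_\alpha(E)$ as a degenerate one-term union (this is exactly what powers part~(1)), and that the level-$1$ classes are the open and closed sets and must be handled via the topology axioms rather than via the inductive clause. With those two points in hand, everything reduces to elementary set theory and the continuity of $f$.
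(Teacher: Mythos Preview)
Your proposal is correct and is precisely the kind of elementary bookkeeping the paper has in mind: the paper does not give a proof at all, merely remarking that ``the following result follows from elementary set theory''. Your write-up supplies those details faithfully, with the right attention to the nonstandard inductive clause (countable unions of Boolean combinations) and to the base case $\alpha=1$. One tiny omission: at level~$1$ you note closure under \emph{finite} unions and intersections but do not explicitly say that $\tbSigma^0_1(E)$ is closed under \emph{countable} unions; of course open sets are closed under arbitrary unions, so this is immediate.
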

As expected, the above definition is equivalent to the
usual one for metric spaces.
Also, in the general case, the distortion can be done solely
for $\tbSigma^0_2(E)$.

\begin{proposition}\label{p:Borel normal}
\begin{enumerate}
\item  $\tbSigma^0_2(E)$ coincides with the family of countable unions
of differences of sets in $\tbSigma^0_1(E)$,
i.e. sets of the form $\bigcup_{n\in\N} U_n\setminus V_n$
where the $U_n, V_n$'s are open.
Moreover, if $\+B$ is a countable topological basis then one can take
the $U_n$'s in $\+B$.
\item If $\alpha\geq3$
then $\tbSigma^0_\alpha(E)$ is the family of countable unions
of sets in $\bigcup_{\beta<\alpha}\tbPi^0_\beta(E)$.
If $E$ is metrizable then this also holds for $\alpha=2$.
\end{enumerate}
\end{proposition}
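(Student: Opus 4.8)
For the first statement, the plan is to pass to disjunctive normal form. A Boolean combination of open sets $W_1,\ldots,W_m$ equals the finite union, over all $A\subseteq\{1,\ldots,m\}$, of the atoms $\bigcap_{i\in A}W_i\setminus\bigcup_{i\notin A}W_i$, and each atom has the form $U\setminus V$ with $U$ and $V$ open (reading $U=E$ when $A=\emptyset$ and $V=\emptyset$ when $A=\{1,\ldots,m\}$). Hence a countable union of Boolean combinations of open sets is already a countable union of differences $U_n\setminus V_n$ of open sets; the reverse inclusion is immediate since each $U\setminus V$ is a Boolean combination of two open sets. For the refinement with a countable basis $\+B$, I would then write $U_n=\bigcup_k B_{n,k}$ with $B_{n,k}\in\+B$, so that $U_n\setminus V_n=\bigcup_k(B_{n,k}\setminus V_n)$, and reindex the resulting $\N\times\N$-indexed union by $\N$.

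For the second statement, one inclusion holds for every $\alpha\geq2$: if $\beta<\alpha$ then a $\tbPi^0_\beta(E)$ set is the complement of a single $\tbSigma^0_\beta(E)$ set, hence a Boolean combination of sets from $\bigcup_{\gamma<\alpha}\tbSigma^0_\gamma(E)$, so every countable union of sets in $\bigcup_{\beta<\alpha}\tbPi^0_\beta(E)$ lies in $\tbSigma^0_\alpha(E)$. For the reverse inclusion when $\alpha\geq3$, take $X=\bigcup_n C_n\in\tbSigma^0_\alpha(E)$ with each $C_n$ a Boolean combination of sets $S_1,\ldots,S_m$, $S_i\in\tbSigma^0_{\beta_i}(E)$ and $\beta_i<\alpha$, and put $\gamma=\max(2,\beta_1,\ldots,\beta_m)$. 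By Proposition~\ref{p:Borel}(1) ($\tbSigma^0_1(E)\subseteq\tbSigma^0_2(E)$) each $S_i$ lies in $\tbSigma^0_\gamma(E)$, and $\gamma<\alpha$ holds \emph{because} $\alpha\geq3$ (for a successor $\alpha$, each $\beta_i\leq\alpha-1$ and also $2\leq\alpha-1$). Writing $C_n$ in disjunctive normal form gives $C_n=\bigcup_j(T_j\cap R_j)$ with $T_j$ a finite intersection of $S_i$'s, hence $T_j\in\tbSigma^0_\gamma(E)$, and $R_j$ a finite intersection of complements of $S_i$'s, hence $R_j\in\tbPi^0_\gamma(E)$. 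Since $\gamma\geq2$, the definition of $\tbSigma^0_\gamma(E)$ lets me write $T_j=\bigcup_k D_{j,k}$ with each $D_{j,k}$ a Boolean combination of sets from $\bigcup_{\delta<\gamma}\tbSigma^0_\delta(E)$; by Proposition~\ref{p:Borel}(1) all those sets and their complements lie in the Boolean algebra $\tbDelta^0_\gamma(E)$, so $D_{j,k}\in\tbDelta^0_\gamma(E)$. Then $C_n=\bigcup_{j,k}(D_{j,k}\cap R_j)$ with $D_{j,k}\cap R_j\in\tbPi^0_\gamma(E)$ (closure under finite intersections), and as $\gamma<\alpha$ this exhibits $X$ as a countable union of sets in $\bigcup_{\beta<\alpha}\tbPi^0_\beta(E)$.

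It remains to treat $\alpha=2$ under the hypothesis that $E$ is metrizable; here $\bigcup_{\beta<2}\tbPi^0_\beta(E)=\tbPi^0_1(E)$ is just the class of closed sets. The plan is to use the classical fact that in a metric space every open set is $\bFs$, so every difference $U\setminus V$ of open sets, being an $\bFs$ set intersected with a closed set, is itself $\bFs$; combining this with the first statement gives $\tbSigma^0_2(E)=\bFs(E)$, which is the class of countable unions of closed sets.

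The step I expect to be the main obstacle is the reverse inclusion of the second statement. The naive hope is to place each Boolean combination $C_n$ wholesale into a single class $\tbPi^0_\beta(E)$ with $\beta<\alpha$, but this breaks down exactly at the top level $\gamma=\alpha-1$, where the $\tbSigma^0_{\alpha-1}(E)$-parts $T_j$ need not belong to $\bigcup_{\beta<\alpha}\tbPi^0_\beta(E)$. The fix is to exploit the countable union already present in the conclusion and split each $T_j$ into countably many $\tbDelta^0_\gamma(E)$ pieces; carrying this out, together with the bookkeeping that makes $\gamma<\alpha$ hinge on $\alpha\geq3$---and hence explains why $\alpha=2$ genuinely needs metrizability---is where the content lies, the remainder being routine Boolean algebra and reindexing.
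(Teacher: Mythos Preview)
Your proof is correct and follows essentially the same route as the paper's: both reduce Boolean combinations to finite unions of differences (your DNF atoms), then at the critical top level decompose the $\tbSigma^0$-part one step further down so that each piece lands in $\tbPi^0_\gamma$ for some $\gamma<\alpha$, and both handle the metrizable $\alpha=2$ case via the classical fact that open sets are $\bFs$. The only cosmetic difference is that the paper argues on a single difference $X\setminus Y$ and splits into the easy case $\beta+1<\alpha$ versus the hard case $\alpha=\beta+1$, whereas your device $\gamma=\max(2,\beta_1,\ldots,\beta_m)$ uniformizes the two cases.
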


\begin{proof}
1. Observe that a Boolean combination of open sets is a finite union of
differences of two open sets.
For the last assertion, use  that $U_n$ is a union of sets in~$\+B$.

2. It suffices to prove that the difference $X\setminus Y$
of two sets in $\tbSigma^0_\beta(E)$, with $\beta<\alpha$,
is equal to a countable union of sets in $\tbPi^0_\gamma(E)$
with $\gamma<\alpha$.
In case $\beta+1<\alpha$ then, as the intersection of a
$\tbSigma^0_\beta(E)$ and a $\tbPi^0_\beta(E)$ set,
$X\setminus Y$ is $\tbPi^0_{\beta+1}(E)$ and we are done.
In case $\alpha=\beta+1$, since $\alpha\geq3$, we have $\beta\geq2$
and $X$ is of the form $X=\bigcup_{i\in\N}U_i\setminus V_i$
where $U_i,V_i$ are in $\bigcup_{\gamma<\beta}\tbSigma^0_\gamma(E)$.
Thus,
$X\setminus Y=\bigcup_{i\in\N}(U_i\setminus V_i)\setminus Y
=\bigcup_{i\in\N}U_i\cap(E\setminus(V_i\cup Y))$.
Now, $U_i\in\tbSigma^0_\gamma(E)$, with $\gamma<\beta$,
hence $U_i\in\tbPi^0_{\gamma+1}(E)$ where $\gamma+1\leq\beta$.
Also, $V_i\cup Y\in\tbSigma^0_\beta(E)$ hence
$E\setminus(V_i\cup Y)\in \tbPi^0_\beta(E)$.
Therefore, $X\setminus Y$ is a countable union of sets in $\tbPi^0_\beta(E)$.
Finally, in a metric space, the topological closure $\overline{X}$
of any set $X$ is $\bGd$ since
$\overline{X}=\bigcap_{n\in\N}
                   \{z\mid \exists x\in X\ d(z,x)<2^{-n}\}$.
Going to complements, any open set is $\bFs$.
Then, any difference of two open sets
hence also any $\tbSigma^0_2(E)$ set is also $\bFs$,
i.e. a countable union of $\tbPi^0_1(E)$ sets.
\end{proof}

\subsubsection{The Hausdorff Difference Hierarchy.}
Recall the Hausdorff difference infinitary operation,
cf. \cite{kuratowskiBook,kechrisBook}.

\begin{definition}
\label{def:D}
Let $\alpha$ be an ordinal.
\begin{enumerate}
\item
The {\it difference operation} $D_\alpha$
maps an $\alpha$-sequence of subsets $(A_\beta)_{\beta<\alpha}$
of a space $E   $ to the subset
$\displaystyle{D_\alpha((A_\beta)_{\beta< \alpha}) =
\bigcup_{\beta< \alpha,\ \beta\not\sim\alpha}
A_\beta \setminus \cup_{\gamma<\beta} A_\gamma}$
\item
We let
$\co D_\alpha((A_\beta)_{\beta< \alpha})
=E\setminus D_\alpha((A_\beta)_{\beta< \alpha})$.
\item
For a class of subsets $\+A$,
we let $\tbD_\alpha(\+A)$ (respectively $\co\tbD_\alpha(\+A)$)
be the class of all subsets
$D_\alpha((A_\beta)_{\beta< \alpha})$
(respectively $\co D_\alpha((A_\beta)_{\beta< \alpha})$),
where $A_\beta\in\+A$ for all $\beta< \alpha $.
\end{enumerate}
\end{definition}

\begin{remark}\label{rk:D2}
In particular, $\tbD_2(\+A)$ (respectively $\co\tbD_2(\+A)$)
is the family of sets
$A_1\setminus A_0$ (respectively $A_0\cup(E\setminus A_1)$)
with $A_0,A_1\in\+A$.
\end{remark}

\begin{proposition}\label{p:Hausdorff}
\begin{enumerate}
\item  If $\emptyset\in\+A$
then $\tbD_\beta(\+A) \subseteq \tbD_\alpha(\+A)$
for all $\beta<\alpha$.
\item 
 If $E\in\+A$
then $\co\tbD_\alpha(\+A) \subseteq \tbD_{\alpha+1}(\+A)$.
In particular, if $\emptyset,E\in\+A$ and $\beta<\alpha$
then
$\tbD_\beta(\+A)\cup\co\tbD_\beta(\+A)
\subseteq \tbD_\alpha(\+A)\cap\co\tbD_\alpha(\+A)$.
\item If $\+A$ is closed under countable unions then,
for $\alpha$ countable, in the definition of $\tbD_\alpha(\+A)$,
one can restrict to monotone increasing $\alpha$-sequences.
\end{enumerate}
\end{proposition}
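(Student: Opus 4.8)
The plan is to reduce all three parts to one elementary observation about ``first hitting times''. For a sequence $\bar A=(A_\gamma)_{\gamma<\alpha}$ and a point $x\in E$, write $\mu_{\bar A}(x)$ for the least $\gamma<\alpha$ with $x\in A_\gamma$, when such a $\gamma$ exists. Unwinding Definition~\ref{def:D}, the set $A_\gamma\setminus\bigcup_{\delta<\gamma}A_\delta$ is exactly $\{x:\mu_{\bar A}(x)=\gamma\}$; hence $x\in D_\alpha(\bar A)$ if and only if $\mu_{\bar A}(x)$ is defined and $\mu_{\bar A}(x)\not\sim\alpha$. Once this is recorded, each part amounts to engineering, from a given sequence, a new sequence of members of $\+A$ whose hitting-time function realizes the desired class.

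For (1), fix $\bar A=(A_\gamma)_{\gamma<\beta}$ with entries in $\+A$, using $\emptyset\in\+A$; I would produce $\bar B=(B_\gamma)_{\gamma<\alpha}$ in $\+A$ with $D_\alpha(\bar B)=D_\beta(\bar A)$. If $\beta\sim\alpha$, just append empty sets: $B_\gamma=A_\gamma$ for $\gamma<\beta$ and $B_\gamma=\emptyset$ otherwise, so that $\mu_{\bar B}=\mu_{\bar A}$ and the parity test against $\alpha$ coincides with the one against $\beta$. If $\beta\not\sim\alpha$, instead shift indices by one: put $B_\delta=A_\gamma$ when $\delta=\gamma+1$ for some $\gamma<\beta$, and $B_\delta=\emptyset$ when $\delta$ is $0$, a limit, or $\geq\beta$. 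Then $\mu_{\bar B}(x)=\mu_{\bar A}(x)+1$ whenever $\mu_{\bar A}(x)$ is defined, which flips parity; since $\beta\not\sim\alpha$, this is exactly what turns the $\beta$-parity test into the $\alpha$-parity test. The one delicate point of the whole proposition lives here: one must not phrase this as ``prepend a copy of $\emptyset$'', since $1+\beta=\beta$ for infinite $\beta$; the explicit successor-indexed recipe above sidesteps the issue.

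For (2), assume $E\in\+A$ and fix $\bar A=(A_\gamma)_{\gamma<\alpha}$. I would set $B_\gamma=A_\gamma$ for $\gamma<\alpha$ and $B_\alpha=E$, so that $\mu_{\bar B}(x)$ is always defined, equalling $\mu_{\bar A}(x)$ when that exists and $\alpha$ otherwise. Since $\alpha\not\sim\alpha+1$, testing $\mu_{\bar B}(x)\not\sim\alpha+1$ is the same as testing $\mu_{\bar B}(x)\sim\alpha$, and a two-case check (according to whether $\mu_{\bar A}(x)$ is defined) shows this holds precisely when $x\notin D_\alpha(\bar A)$; hence $D_{\alpha+1}(\bar B)=\co D_\alpha(\bar A)$. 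The ``in particular'' clause then follows formally: (1) gives $\tbD_\beta(\+A)\subseteq\tbD_\alpha(\+A)$, and (2) followed by (1) gives $\co\tbD_\beta(\+A)\subseteq\tbD_{\beta+1}(\+A)\subseteq\tbD_\alpha(\+A)$; applying $X\mapsto E\setminus X$ to these two class inclusions — an operation that interchanges $\tbD_\alpha(\+A)$ and $\co\tbD_\alpha(\+A)$ — yields $\tbD_\beta(\+A)\cup\co\tbD_\beta(\+A)\subseteq\co\tbD_\alpha(\+A)$ as well, whence the stated inclusion into $\tbD_\alpha(\+A)\cap\co\tbD_\alpha(\+A)$.

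For (3), suppose $\alpha$ is countable and $\+A$ is closed under countable unions, and fix $\bar A=(A_\gamma)_{\gamma<\alpha}$. Replace it by $\bar B=(B_\gamma)_{\gamma<\alpha}$ with $B_\gamma=\bigcup_{\delta\leq\gamma}A_\delta$: each $B_\gamma$ is a countable union of members of $\+A$, hence lies in $\+A$, and $\bar B$ is monotone increasing by construction. Since $\mu_{\bar B}(x)=\min\{\gamma:\exists\,\delta\leq\gamma,\ x\in A_\delta\}=\mu_{\bar A}(x)$, we get $D_\alpha(\bar B)=D_\alpha(\bar A)$, so every set in $\tbD_\alpha(\+A)$ is already $D_\alpha$ of a monotone increasing $\alpha$-sequence. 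I expect no real obstacle in parts (2) and (3); the only subtlety in the whole proposition is the infinite-ordinal bookkeeping in the parity-flip construction of (1).
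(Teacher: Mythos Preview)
Your proposal is correct and follows essentially the same approach as the paper: the same padding-by-$\emptyset$ and successor-shift in (1), the same appending of $E$ as $A_\alpha$ in (2), and the same replacement by cumulative unions $B_\gamma=\bigcup_{\delta\le\gamma}A_\delta$ in (3). Your hitting-time framing and the explicit derivation of the ``in particular'' clause add welcome detail, and your warning about ``prepending'' versus index-shifting at infinite $\beta$ is exactly the right subtlety to flag.
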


\begin{proof}
1. If $\beta\sim\alpha$ then
$D_\beta((A_\gamma)_{\gamma< \beta})
=D_\alpha((A'_\delta)_{\delta< \alpha})$
where $A'_\delta=A_\delta$ for $\delta<\beta$
and $A'_\delta=\emptyset$ for $\delta\geq\beta$.
If $\beta\not\sim\alpha$ then
$D_\beta((A_\gamma)_{\gamma< \beta})
=D_\alpha((A'_\delta)_{\delta< \alpha})$
where $A'_{\delta+1}=A_\delta$ for $\delta<\beta$
and $A'_\delta=\emptyset$ for $\delta=0$
or $\delta$ limit or $\delta\geq\beta$.

2. Observe that
$D_\alpha((A_\beta)_{\beta< \alpha})
=D_\alpha((A'_\beta)_{\beta< \alpha})$
where $A'_\beta=\bigcup_{\gamma\leq\beta}A_\gamma$.

3. Letting $A_\alpha=E$, we have
$\co D_\alpha((A_\beta)_{\beta< \alpha})
=D_{\alpha+1}((A_\beta)_{\beta\leq\alpha})$.
\end{proof}

\begin{definition}
For any $0<\beta<\aleph_1$,
the $\alpha$-th level of the {\it difference hierarchy over
$\tbSigma^0_\beta(E)$} is
$\tbD_\alpha(\tbSigma^0_\beta(E))$.
The difference hierarchy over $\tbSigma^0_1(E)$
is simply called the difference hierarchy and denoted  by
$\tbD_\alpha(E)$.
\end{definition}

\begin{remark}
\begin{enumerate}
\item  Using item 2 of Proposition~\ref{p:Hausdorff},
we can graphically represent sets in the first levels
of the difference hierarchy as in Figure~\ref{fig:D0 D4}.

\item
 This graphical representation makes it clear that
$\tbD_\alpha(E)$ is not closed under finite union
nor finite intersection:
for instance, if $A_0\subset A_1\subset A_2$ then
\\$\begin{array}{lccc}
&D_2(\emptyset,A_0)\cup D_2(A_1,A_2)&=&D_3(A_0,A_1,A_2)\\
\textit{hence\qquad}&
\co D_2(\emptyset,A_0)\cap\co D_2(A_1,A_2)
&=&\co D_3(A_0,A_1,A_2)\phantom{\co}\\
\textit{and\qquad}&
D_3(\emptyset,A_0,E)\cap D_3(A_1,A_2,E)
&=&\phantom{\co\ .}D_4(A_0,A_1,A_2,E)\ .
\end{array}$
\end{enumerate}
\end{remark}

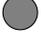
\begin{figure}
\vspace{18mm}
\begin{tikzpicture}
\tikz
\draw[fill=gray] (0,0) circle(2.3mm);
\hspace{5mm}
\tikz
\draw[line width=4.6mm,color=gray] (0,0) circle(4.6mm);
\hspace{5mm}
\tikz
\draw[line width=4.6mm,color=gray] (0,0) circle(9.2mm);
\draw[fill=gray] (-1.15,11.5mm) circle(2.3mm); 
\hspace{5mm}
\tikz
\draw[line width=4.6mm,color=gray] (0,0)
circle(4.6mm) circle(13.8mm);
\end{tikzpicture}
\vspace{-1.5cm}
\caption{In grey: $D_1(A_0)$,
$D_2(A_0,A_1)$,
$D_3(A_0,A_1,A_2)$,
$D_4(A_0,A_1,A_2,A_3)$
where $A_0\subset A_1\subset A_2\subset A_3$.
In white (including the unbounded complement of the largest disk):
$\co D_1(A_0)$,
$\co D_2(A_0,A_1)$,
$\co D_3(A_0,A_1,A_2)$,
$\co D_4(A_0,A_1,A_2,A_3)$.}
\label{fig:D0 D4}
\end{figure}

\begin{proposition}\label{p:Hausdorff Delta Borel}
$\bigcup_{\alpha<\aleph_1}\tbD_\alpha(\tbSigma^0_\beta(E))
\subseteq \tbDelta^0_{\beta+1}(E)$.
\end{proposition}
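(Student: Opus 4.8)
The plan is to fix an arbitrary countable ordinal $\alpha$ together with an $\alpha$-sequence $(A_\gamma)_{\gamma<\alpha}$ all of whose terms lie in $\tbSigma^0_\beta(E)$, and to show that $D_\alpha((A_\gamma)_{\gamma<\alpha})$ belongs to $\tbDelta^0_{\beta+1}(E)$; since $\alpha$ ranges over all countable ordinals, the stated inclusion follows immediately.

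First I would examine a single ``layer'' of the difference operation, i.e. a set of the form $A_\delta\setminus\bigcup_{\gamma<\delta}A_\gamma$ with $\delta<\alpha$. Because $\alpha$ is countable, $\bigcup_{\gamma<\delta}A_\gamma$ is a \emph{countable} union of members of $\tbSigma^0_\beta(E)$, hence is again in $\tbSigma^0_\beta(E)$ by the closure of that class under countable unions (Proposition~\ref{p:Borel}(2)); consequently its complement $E\setminus\bigcup_{\gamma<\delta}A_\gamma$ lies in $\tbPi^0_\beta(E)$. By Proposition~\ref{p:Borel}(1) both $A_\delta$ and $E\setminus\bigcup_{\gamma<\delta}A_\gamma$ then lie in $\tbDelta^0_{\beta+1}(E)$, and since $\tbDelta^0_{\beta+1}(E)$ is closed under finite intersections (Proposition~\ref{p:Borel}(2)), the layer $A_\delta\setminus\bigcup_{\gamma<\delta}A_\gamma$ is in $\tbDelta^0_{\beta+1}(E)$.

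Next, $D_\alpha((A_\gamma)_{\gamma<\alpha})$ is by definition the union of those layers indexed by the (countably many) $\delta<\alpha$ with $\delta\not\sim\alpha$, hence a countable union of sets in $\tbDelta^0_{\beta+1}(E)$; closure of $\tbSigma^0_{\beta+1}(E)$ under countable unions (Proposition~\ref{p:Borel}(2)) already yields $D_\alpha((A_\gamma)_{\gamma<\alpha})\in\tbSigma^0_{\beta+1}(E)$. For the $\tbPi^0_{\beta+1}(E)$ side I would invoke the complementation identity behind Proposition~\ref{p:Hausdorff}(2): since $E\in\tbSigma^0_1(E)\subseteq\tbSigma^0_\beta(E)$, the set $\co D_\alpha((A_\gamma)_{\gamma<\alpha})$ is of the form $D_{\alpha+1}((A'_\gamma)_{\gamma\leq\alpha})$ for suitable $A'_\gamma\in\tbSigma^0_\beta(E)$ (concretely, append $A_\alpha=E$), i.e. again a difference set of countable length over $\tbSigma^0_\beta(E)$; applying the argument of the previous sentence to it gives $\co D_\alpha((A_\gamma)_{\gamma<\alpha})\in\tbSigma^0_{\beta+1}(E)$, that is $D_\alpha((A_\gamma)_{\gamma<\alpha})\in\tbPi^0_{\beta+1}(E)$. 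Intersecting the two conclusions finishes the proof.

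I do not expect a genuine obstacle: the only points requiring care are that all the index sets involved stay \emph{countable}, so that the closure properties of the Borel classes recorded in Proposition~\ref{p:Borel} actually apply — and this is precisely where the bound $\alpha<\aleph_1$ is used — and that the complement of a difference set is handled via the identity $\co D_\alpha=D_{\alpha+1}$ (with $E$ appended) rather than by a direct computation.
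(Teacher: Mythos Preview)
Your proof is correct and follows essentially the same approach as the paper: show that $D_\alpha((A_\gamma)_{\gamma<\alpha})$ is a countable union of differences of $\tbSigma^0_\beta(E)$ sets (hence in $\tbSigma^0_{\beta+1}(E)$), then handle the $\tbPi^0_{\beta+1}(E)$ side via the identity $\co D_\alpha = D_{\alpha+1}$ obtained by appending $E$. The paper is slightly more direct in the first step---it observes immediately that a countable union of differences of $\tbSigma^0_\beta$ sets lies in $\tbSigma^0_{\beta+1}$ by definition, rather than detouring through $\tbDelta^0_{\beta+1}$---but this is purely cosmetic.
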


\begin{proof}
If the $A_\gamma$'s, $\gamma<\alpha$ are in $\tbSigma^0_\beta(E)$
then so are the $\bigcup_{\delta<\gamma}A_\delta$'s.
Thus, $D_\alpha((A_\gamma)_{\gamma<\alpha})$ is a countable
union of differences of sets in $\tbSigma^0_\beta(E)$
hence is in $\tbSigma^0_{\beta+1}(E)$.
By Proposition~\ref{p:Hausdorff},
we see that
$E\setminus D_\alpha((A_\gamma)_{\gamma<\alpha})
=D_{\alpha+1}((A_\gamma)_{\gamma<\alpha},E) \in \tbPi^0_{\beta+1}(E)$.
\end{proof}

\begin{proposition}\label{p:inverseimagesHausdorff}
Let $D,E$ be topological spaces and $f:D\to E$ be continuous.
If $Y\subseteq E$ is in some Hausdorff class $\tlD_\alpha(E)$,
$\alpha<\aleph_1$,
then $f^{-1}(\+Y)$ is in $\tlD_\alpha(D)$.
\end{proposition}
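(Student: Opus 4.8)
The plan is to reduce everything to the single observation that the difference operation commutes with preimages. Concretely, I would first establish the identity
$$
f^{-1}\bigl(D_\alpha((A_\beta)_{\beta<\alpha})\bigr)
= D_\alpha\bigl((f^{-1}(A_\beta))_{\beta<\alpha}\bigr),
$$
valid for \emph{any} map $f:D\to E$ (continuity plays no role here) and any $\alpha$-sequence $(A_\beta)_{\beta<\alpha}$ of subsets of $E$. This is a purely set-theoretic fact: $f^{-1}$ preserves arbitrary unions, arbitrary intersections, and relative complements, so applying $f^{-1}$ to $\bigcup_{\beta<\alpha,\ \beta\not\sim\alpha} A_\beta\setminus\bigcup_{\gamma<\beta}A_\gamma$ distributes termwise and turns each $A_\beta\setminus\bigcup_{\gamma<\beta}A_\gamma$ into $f^{-1}(A_\beta)\setminus\bigcup_{\gamma<\beta}f^{-1}(A_\gamma)$, while the parity index set $\{\beta<\alpha:\beta\not\sim\alpha\}$ is left unchanged because $f^{-1}$ acts on the sets $A_\beta$, not on their index $\beta$.

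Granting the identity, the conclusion is immediate. If $Y\in\tlD_\alpha(E)$, write $Y=D_\alpha((A_\beta)_{\beta<\alpha})$ with each $A_\beta$ open in $E$. By continuity of $f$, each $f^{-1}(A_\beta)$ is open in $D$, whence $f^{-1}(Y)=D_\alpha((f^{-1}(A_\beta))_{\beta<\alpha})$ lies in $\tlD_\alpha(D)$ by definition. The same argument applies verbatim to the complementary classes: since $f^{-1}(E\setminus Z)=D\setminus f^{-1}(Z)$, the map $f^{-1}$ also sends $\co\tlD_\alpha(E)$ into $\co\tlD_\alpha(D)$; and, more generally, it sends $\tbD_\alpha(\tbSigma^0_\gamma(E))$ into $\tbD_\alpha(\tbSigma^0_\gamma(D))$, now invoking that $\tbSigma^0_\gamma$ is closed under continuous preimages (Proposition~\ref{p:Borel}(2)) in place of the open-set case.

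There is essentially no obstacle here: the only point that deserves an explicit line is checking that the two pieces of bookkeeping built into $D_\alpha$ — the restriction to indices $\beta\not\sim\alpha$ and the inner cumulative union $\bigcup_{\gamma<\beta}A_\gamma$ — survive the application of $f^{-1}$, which they do for the reason just noted. So the write-up should consist of the displayed identity together with a one-sentence justification, followed by the specialization to open sets (and, optionally, the remark on the co-classes and on higher $\tbSigma^0_\gamma$ levels).
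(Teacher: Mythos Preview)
Your argument is correct and is exactly the standard one: preimages commute with the Boolean operations defining $D_\alpha$, and continuity turns the open $A_\beta$'s into open $f^{-1}(A_\beta)$'s. The paper states this proposition without proof (it is left as an immediate consequence of the definitions), so there is nothing to compare; your write-up supplies precisely the verification the paper omits.
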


\subsubsection{The Borel and Hausdorff Hierarchies May Collapse.}
%
As it is well-known, the Borel and Hausdorff hierarchies are proper
in uncountable Polish spaces:
$\tbSigma^0_\alpha(E) \subsetneq \tbSigma^0_\beta(E)$
and
$\tbD_\alpha(\tbSigma^0_\xi(E))
\subsetneq \tbD_\beta(\tbSigma^0_\xi(E))$
when $\alpha<\beta$.
The same for the effective hierarchies
(with $\alpha,\beta<\CK$)
relative to some fixed enumeration of a countable basis of open sets.
However, this is not true in general topological spaces.
For instance, if the space is $T_2$ and countable then
$\tbSigma^0_2(E)=\+P(E)$.
However, \cite{deBrecht2011} proves the non collapse of
the Borel and Hausdorff difference hierarchies
in uncountable quasi-Polish spaces,
a class containing Polish spaces and $\omega$-continuous domains.

\subsection{Domains}\label{s:domains}

Domain theory refers to the field initiated  by Dana Scott in the late 1960s
to specify denotational semantics for functional programming languages.
The theory formalizes the ideas of approximation and convergence
via some partially ordered sets called {\em domains}.

\begin{example}\label{ex:Scott examples}
Some examples of Scott topologies on partially ordered sets.
\begin{enumerate}
\item {\bf Scott topology on $(\PN,\subseteq)$.}
For $A\in\FIN$, let
$O_A=\{X\in\PN\  \mid\  X\supseteq A\}$.
The Scott topology on $\PN$ is has the $O_A$'s, for $A\in\FIN$, as a topological basis.
This is the topology of ``positive information'';
in contrast, the Cantor topology on $\cantor$ gives positive
and negative information.

\item {\bf Scott topology on $(\INF,\subseteq)$ :}
consider the family $\INF$ of infinite subsets of $\N$
as a topological subspace of $\PN$.

\item
 {\bf Scott topology on the family $(X^{\leq\omega},\leqpref)$ of finite or infinite $X$-sequences.}
We suppose $X$ is any set with at least two elementss.
For any $s\in X^{<\omega}$, let
$\+B_s=\{u\in X^{\leq\omega} \mid u\textit{ extends }s\}$.
The Scott topology on the space $X^{\leq\omega}$ is that which admits
the $\+B_s$'s, $s\in X^{<\omega}$, as a topological basis.

\item
 {\bf Scott topology on the right extended real line $(\overrightarrow{\R},\leq)$.}
Let $\overrightarrow{\R}=\R\cup\{+\infty\}$.
The open sets of the topology on $\overrightarrow{\R}$ are 
$\overrightarrow{\R}$ and the semi-intervals $]x,+\infty]$, for $x\in\R$.

\item {\bf Extended real line $(\widetilde{\R},\leq)$ with duplicated rationals.}
Let $\widetilde{\R}=\R\cup(\Q\times\{+\})\cup\{+\infty\}$
and $\leq$ be the following total order: 
\ $+\infty$ is a maximum element, and for all $x,y\in\R$ and $q\in\Q$,
\ $q<_{\widetilde{\R}}(q,+)$;
\  $x<_{\widetilde{\R}}y$ if and only if $x<y$;
and 
\ $x<_{\widetilde{\R}}(q,+)<_{\widetilde{\R}}y$
if and only if $x<q<y$.
The Scott topology on $\widetilde{\R}$ is that for which the $[(q,+),+\infty]$'s,
$q\in\Q$, are a topological basis.

\end{enumerate}
\end{example}
The following properties are straightforward.

\begin{proposition}\label{p:PN T0}
The Scott topologies of spaces in Examples~\ref{ex:Scott examples}
are not $T_2$  
but are $T_0$,
i.e. they satisfy Kolmogorov's axiom:
given two distinct points, one of them has a neighborhood
which does not contain the other one
(but this may not be symmetric).
\end{proposition}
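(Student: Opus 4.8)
The plan is to establish the two assertions separately, uniformly across all five examples. For the \emph{failure of $T_2$}, the key observation is that each basis exhibited in Example~\ref{ex:Scott examples} consists of \emph{upward closed} sets for the relevant order: the $O_A$ and the $O_A\cap\INF$ are $\subseteq$-up-sets, the $\+B_s$ are $\leqpref$-up-sets, the $]x,+\infty]$ are $\le$-up-sets, and the $[(q,+),+\infty]$ are $\le_{\widetilde\R}$-up-sets. Hence every open set is upward closed, so as soon as there exist two distinct \emph{comparable} points $a<b$, every open neighbourhood of $a$ contains $b$; thus $a$ and $b$ cannot be separated by disjoint open sets, and the space is not $T_2$ (in fact not even $T_1$). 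Such a pair is trivial to name in each case: $\emptyset\subsetneq\N$ in $(\PN,\subseteq)$, $2\N\subsetneq\N$ in $(\INF,\subseteq)$, $s\leqpref sx$ for some $x\in X$ (which exists since $|X|\ge2$) in $(X^{\le\omega},\leqpref)$, and $0<+\infty$ in both $\overrightarrow{\R}$ and $\widetilde{\R}$.

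For $T_0$, given distinct $a,b$ I will produce a basic open set containing exactly one of them. In $\PN$ and $\INF$, since $a\ne b$ there is $n$ in the symmetric difference, say $n\in a\setminus b$, and then $O_{\{n\}}$ (intersected with $\INF$ in the second example) contains $a$ but not $b$. In $X^{\le\omega}$, if $a,b$ are $\leqpref$-incomparable let $k$ be least with $a(k)\ne b(k)$ and put $t=(a(0),\dots,a(k))$: then $\+B_t\ni a$ while $b\notin\+B_t$; if instead $a\leqpref b$ with $a\ne b$, then $a$ is a finite proper prefix of $b$, and $t=(b(0),\dots,b(|a|))$ gives $b\in\+B_t$ but $a\notin\+B_t$ (as $a$ is too short to extend $t$). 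In $\overrightarrow{\R}$, assuming $a<b$ forces $a\in\R$ (the point $+\infty$ being the maximum), and then $]a,+\infty]$ contains $b$ but not $a$. In $\widetilde{\R}$, assuming $a<_{\widetilde\R}b$: if $b=(s,+)$ take $q=s$, and otherwise $b\in\R\cup\{+\infty\}$ and one chooses a rational $q$ with $a<_{\widetilde\R}(q,+)\le_{\widetilde\R}b$; in either case $[(q,+),+\infty]$ contains $b$ but not $a$.

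The one step needing a little care is that last choice of $q$ in $\widetilde{\R}$: it is a short case analysis on whether $a$ and $b$ are reals or duplicated rationals, using density of $\Q$ in $\R$ together with the fact that $(q,+)$ lies immediately above $q$ — so that, for a real $x$, one has $x<_{\widetilde\R}(q,+)$ iff $x\le q$. A cleaner alternative that bypasses this case analysis is to recall that the topologies of Example~\ref{ex:Scott examples} are precisely the Scott topologies of the posets concerned, and to use the general fact that the Scott topology of any poset $P$ is $T_0$: if $x\not\le y$ then $P\setminus\{z:z\le y\}$ is Scott-open, contains $x$, and omits $y$. I would nonetheless keep the direct up-set argument for the non-$T_2$ part, since it delivers the sharper conclusion that none of these spaces is $T_1$.
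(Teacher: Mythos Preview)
Your argument is correct. The paper itself does not give a proof of this proposition: it simply declares the properties ``straightforward'' and moves on. Your case-by-case verification is a perfectly good way to fill that gap, and your observation that all basic open sets are up-sets --- so that no two comparable points can be separated --- is exactly the right uniform reason for the failure of $T_2$ (indeed of $T_1$).

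One remark on economy: the paper later records, as Proposition~\ref{p:Scotttopology}, the general facts that the Scott topology on any dcpo is $T_0$, and that it is $T_1$ (equivalently $T_2$) if and only if the order is trivial. Since all five examples are dcpo's with nontrivial order (this is confirmed in Example~\ref{ex:domain}), that single proposition handles every case at once and is presumably what the authors had in mind by ``straightforward''. Your alternative argument using $U_y=\{z\mid z\not\sqsubseteq y\}$ is precisely the content of Example~\ref{ex:Scottopen} together with item~1 of Proposition~\ref{p:Scotttopology}. So your direct approach is sound but slightly more laborious than necessary; the paper's intended route is the general dcpo fact, applied uniformly.
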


\begin{remark}\label{rk:PN AH}
\begin{enumerate}
\item  As noticed in \cite{selivanov2005},
the finite levels of the Scott Borel
hierarchy on $\PN$ do not coincide
with the corresponding ones on the Cantor space~$\cantor$~:
$\tbSigma^0_n(\PN)
\subsetneq \tbSigma^0_n(\cantor)
\subsetneq \tbSigma^0_{n+1}(\PN)$
for all~$n\in\N$.
The same is true with the effective Borel hierarchy
(cf. \S~\ref{ss:Borel codes}).
For instance,
$\+X=\PN\setminus\{\N\}$, defined by the formula
$\exists x\ (x\notin X)$, is $\Sigma^0_1(\cantor)$ and
$\Sigma^0_2(\PN)$ but neither Scott open nor Scott closed.
However, the infinite levels of the Borel hierarchy 
on $\PN$ and $\cantor$ coincide.

\item  The only subsets of $\PN$ that are both open and $F_\sigma$
are $\emptyset$ and $\PN$.
Indeed, suppose $\+O$ is open and $\+X$ is $F_\sigma$
and $\+O,\+X$ are different from $\emptyset,\PN$.
Then there exist non empty finite subsets $X,Y$ of $\N$ such that
$O_X\subseteq\+O$ and $\PN\setminus O_Y\subseteq\+X$.
Observe that the set $X\cup Y$ is in $\+O\setminus\+X$,
showing $\+O\neq\+X$.
\end{enumerate}
\end{remark}

\subsubsection{The Scott Topology on dcpo's.}

We briefly recall the main definitions and notions
and refer the reader to classical papers and books,
for instance~\cite{abramsky1994,edalat1997,BookDomains}.

\begin{definition}\label{def:dcpo}
\begin{enumerate}
\item 
 A {\it directed complete partial order (dcpo)} is a partially ordered set
$(D,\sqsubseteq)$ such that every non empty directed subset $S$ has a
least upper bound (denoted by $\sqcup S$).
A dcpo is pointed if it has a least element $\bot$.

\item  The {\it Scott topology} on a dcpo is the topology that admits as closed sets
all sets $X$ satisfying conditions 
\begin{itemize}
\item
$X$ is a downset: $x\in X \wedge y\sqsubseteq x \Rightarrow y\in X$.
\item
$X$ is closed under suprema of directed subsets of $D$.
\end{itemize}
Then, $O\subseteq D$ is open in the Scott topology if it satisfies
the following conditions.
\begin{itemize}
\item
$O$ is an upset: $x\in O \wedge x\sqsubseteq y \Rightarrow y\in O$.
\item
Every directed set with supremum in $O$ has an element in $O$.
\end{itemize}
\end{enumerate}
\end{definition}

\begin{example}\label{ex:Scottopen}
For every $x\in D$, the set $U_x=\{z \mid z\not\sqsubseteq x\}$ is Scott open.
\end{example}

\begin{proposition}\label{p:Scotttopology}
\begin{enumerate}
\item The Scott topology on a dcpo is $T_0$, i.e.
if $x\neq y$ then there exists an open set which contains
only one of the two points $x,y$.
It is $T_1$ (respectively $T_2$ 
) if and only if the order on $D$
is trivial.
\item Let $x,y\in D$.
The order on $D$ can be recovered from the topology
as the specialization order:
$x\sqsubseteq y$ if and only if every Scott open set containing $x$ also contains $y$
if and only $x\in \overline{\{y\}}$
(where $\overline{\{y\}}$ is the topological closure of $\{y\}$).
\item
 A function $f:D\to E$ between two dcpo's is continuous with respect
to the Scott topologies if and only if it is monotone increasing
and preserves suprema of directed subsets: if $S\subseteq D$ is directed
then $f(\sqcup S)=\sqcup f(S)$.
\end{enumerate}
\end{proposition}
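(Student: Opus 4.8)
The plan is to dispatch the three items in the order $2 \to 1 \to 3$, since item~2 (the specialization order) is the natural engine for both the separation statements of item~1 and the monotonicity half of item~3. Throughout, the one concrete open set I would lean on is $U_x=\{z\mid z\not\sqsubseteq x\}$ from Example~\ref{ex:Scottopen}: recall it is an upset (if $z\not\sqsubseteq x$ and $z\sqsubseteq w$ then $w\sqsubseteq x$ would force $z\sqsubseteq x$ by transitivity) and it is inaccessible by directed suprema (if every element of a directed $S$ were $\sqsubseteq x$, then $x$ would bound $S$, so $\sqcup S\sqsubseteq x$). For item~2 the forward direction is immediate: if $x\sqsubseteq y$ and $O$ is Scott open with $x\in O$, then $y\in O$ because $O$ is an upset, which is exactly the statement that every open neighbourhood of $x$ meets $\{y\}$, i.e.\ $x\in\overline{\{y\}}$. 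For the converse I would argue the contrapositive: if $x\not\sqsubseteq y$ then $x\in U_y$ while $y\notin U_y$ (since $y\sqsubseteq y$), so $U_y$ is an open neighbourhood of $x$ avoiding $y$; this also identifies $\overline{\{y\}}$ with the downset $\{z\mid z\sqsubseteq y\}$.

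Item~1 then follows. For $T_0$: given $x\neq y$, antisymmetry rules out having both $x\sqsubseteq y$ and $y\sqsubseteq x$, so without loss of generality $x\not\sqsubseteq y$, and $U_y$ is an open set containing exactly one of the two points. For the $T_1$/$T_2$ clause: $T_1$ says every singleton is closed, i.e.\ $\overline{\{y\}}=\{y\}$ for all $y$, which by the previous paragraph means $\{z\mid z\sqsubseteq y\}=\{y\}$ for all $y$, i.e.\ the order is trivial. Conversely, if the order is trivial then every directed set is a singleton, so every subset of $D$ is an upset inaccessible by directed suprema, hence Scott open; the topology is discrete and so $T_2$. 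Since $T_2\Rightarrow T_1$ is a general topological fact, the three conditions coincide.

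For item~3, first assume $f$ continuous. Monotonicity: if $x\sqsubseteq y$ then $x\in\overline{\{y\}}$ by item~2, and continuity preserves closure, so $f(x)\in f(\overline{\{y\}})\subseteq\overline{\{f(y)\}}$, i.e.\ $f(x)\sqsubseteq f(y)$. Preservation of directed suprema: for $S\subseteq D$ directed, monotonicity makes $f(S)$ directed and gives $\sqcup f(S)\sqsubseteq f(\sqcup S)$; for the reverse, if $f(\sqcup S)\not\sqsubseteq\sqcup f(S)$ then $f(\sqcup S)\in U_{\sqcup f(S)}$, so $\sqcup S\in f^{-1}(U_{\sqcup f(S)})$, which is open, hence some $s\in S$ lies in it, i.e.\ $f(s)\not\sqsubseteq\sqcup f(S)$, contradicting the definition of $\sqcup f(S)$. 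Conversely, assume $f$ is monotone and preserves directed suprema, and let $O\subseteq E$ be Scott open. Then $f^{-1}(O)$ is an upset (monotonicity together with $O$ being an upset), and if $S$ is directed with $\sqcup S\in f^{-1}(O)$, then $f(\sqcup S)=\sqcup f(S)\in O$ with $f(S)$ directed, so some $f(s)\in O$, i.e.\ $s\in f^{-1}(O)$; hence $f^{-1}(O)$ is Scott open and $f$ is continuous.

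Essentially everything here is a direct unwinding of Definition~\ref{def:dcpo} together with item~2; the only step that is not one line is the reverse inequality $f(\sqcup S)\sqsubseteq\sqcup f(S)$ in item~3, and even that is just the standard trick that the sets $U_{(\cdot)}$ detect accessibility of a point by directed suprema. I would isolate that as the single genuine argument and treat the rest as bookkeeping, taking care only to invoke the directed-suprema closure condition each time I claim a set is Scott open.
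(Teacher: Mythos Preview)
Your proof is correct and complete. The paper states this proposition without proof (it is a classical result from domain theory, cf.\ the references to \cite{abramsky1994,edalat1997,BookDomains} that precede it), so there is no approach to compare against; your argument via the open sets $U_x$ of Example~\ref{ex:Scottopen} and the specialization order is exactly the standard one.
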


\subsubsection{The Scott Topology on Domains.}

\begin{definition}\label{def:way below}
\begin{enumerate}
\item Let $(D,\sqsubseteq)$ be a dcpo.
The {\it approximation (or way-below) relation} on $D$ is defined as follows:
let $x,y\in D$, $x\ll y$ if, for all directed subset $S$,
$y \sqsubseteq \sqcup S$ implies $x\sqsubseteq s$ for some $s\in S$.
We say $x$ approximates, or is way-below, $y$.

\item  An element $x\in D$ is {\it compact (or finite)} if $x\ll x$.
The set of compact elements is denoted by $K(D)$.

\item  $\uup x= \{y \mid \ x\ll y\}$ and $\ddown x=  \{y \mid \ y \ll x\}$
\end{enumerate}
\end{definition}

\begin{proposition}\label{p:waybelow}
Let $(D,\sqsubseteq)$ be a dcpo and $x,x',y,y'\in D$. 

\begin{enumerate}
\item 
$x\ll y\ \Rightarrow\ x\sqsubseteq y$, and 
$(x'\sqsubseteq x\ll y\sqsubseteq y')\ \Rightarrow\ x'\ll y'$.
\\
If $x$ is compact then
$\forall u,v\ ((u\sqsubseteq x\sqsubseteq v)\ \Leftrightarrow\ ( u\ll x\ll v))$.
\item 
 An element $x$ is compact if and only if 
$\up x=\{y\mid x\sqsubseteq y\}$ is Scott open.
\end{enumerate}
\end{proposition}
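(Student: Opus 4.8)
The statement to prove is Proposition~\ref{p:waybelow}, which has two parts about the way-below relation on a dcpo. Let me sketch a proof.

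\textbf{Plan for item 1.}
The first implication $x \ll y \Rightarrow x \sqsubseteq y$ follows by applying the definition of $\ll$ to the directed set $S = {\downarrow} y = \{z \mid z \sqsubseteq y\}$, which is directed with $\sqcup S = y$; hence $x \sqsubseteq s$ for some $s \in S$, and $s \sqsubseteq y$, so $x \sqsubseteq y$. For the monotonicity property $(x' \sqsubseteq x \ll y \sqsubseteq y') \Rightarrow x' \ll y'$: given a directed $S$ with $y' \sqsubseteq \sqcup S$, we have $y \sqsubseteq y' \sqsubseteq \sqcup S$, so $x \sqsubseteq s$ for some $s \in S$, whence $x' \sqsubseteq x \sqsubseteq s$. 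For the compact case, suppose $x$ is compact, i.e.\ $x \ll x$. The direction $(u \ll x \ll v) \Rightarrow (u \sqsubseteq x \sqsubseteq v)$ is immediate from the first implication just proved. Conversely, if $u \sqsubseteq x \sqsubseteq v$, then applying the monotonicity property to $u \sqsubseteq x \ll x \sqsubseteq v$ (using $x \ll x$) gives $u \ll v$; specializing $v = x$ gives $u \ll x$ and specializing $u = x$ gives $x \ll v$.

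\textbf{Plan for item 2.}
For the forward direction, suppose $x$ is compact; I want to show ${\uparrow}x = \{y \mid x \sqsubseteq y\}$ is Scott open. It is an upset by transitivity of $\sqsubseteq$. For the second Scott-open condition, let $S$ be directed with $\sqcup S \in {\uparrow}x$, i.e.\ $x \sqsubseteq \sqcup S$; since $x \ll x$, there is $s \in S$ with $x \sqsubseteq s$, i.e.\ $s \in {\uparrow}x$. Conversely, suppose ${\uparrow}x$ is Scott open. To show $x \ll x$, let $S$ be directed with $x \sqsubseteq \sqcup S$; then $\sqcup S \in {\uparrow}x$, which is open, so some $s \in S$ lies in ${\uparrow}x$, i.e.\ $x \sqsubseteq s$. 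Hence $x \ll x$, so $x$ is compact.

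\textbf{Main obstacle.}
Honestly there is no serious obstacle here — everything follows directly from unwinding the definitions of $\ll$, directedness, and the Scott topology. The only mild subtlety is remembering that ${\downarrow}y$ is genuinely directed (it is, since it has a top element $y$) so that it is a legitimate witness set, and being careful in item~1's compact case to invoke $x \ll x$ in the right place. I would present the whole thing in a handful of lines.

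\begin{proof}
1. Apply the definition of $\ll$ to the directed set $S={\downarrow}y=\{z\mid z\sqsubseteq y\}$, whose supremum is $y$: from $y\sqsubseteq\sqcup S$ we get $x\sqsubseteq s\sqsubseteq y$ for some $s\in S$. For the monotonicity statement, assume $x'\sqsubseteq x\ll y\sqsubseteq y'$ and let $S$ be directed with $y'\sqsubseteq\sqcup S$; then $y\sqsubseteq\sqcup S$, so $x\sqsubseteq s$ for some $s\in S$, and hence $x'\sqsubseteq s$. Now suppose $x$ is compact, i.e.\ $x\ll x$. If $u\ll x\ll v$ then $u\sqsubseteq x\sqsubseteq v$ by the first implication. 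Conversely, if $u\sqsubseteq x\sqsubseteq v$, apply the monotonicity statement to $u\sqsubseteq x\ll x\sqsubseteq v$ to get $u\ll v$; taking $v=x$ yields $u\ll x$ and taking $u=x$ yields $x\ll v$.

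2. If $x$ is compact then $\up x$ is an upset by transitivity of $\sqsubseteq$, and if $S$ is directed with $\sqcup S\in\up x$, i.e.\ $x\sqsubseteq\sqcup S$, then $x\ll x$ gives some $s\in S$ with $x\sqsubseteq s$, so $s\in\up x$; hence $\up x$ is Scott open. Conversely, if $\up x$ is Scott open and $S$ is directed with $x\sqsubseteq\sqcup S$, then $\sqcup S\in\up x$, so by openness some $s\in S$ lies in $\up x$, i.e.\ $x\sqsubseteq s$; thus $x\ll x$, so $x$ is compact.
\end{proof}
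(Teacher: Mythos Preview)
Your proof is correct and is the standard argument. The paper itself does not supply a proof of this proposition: it states it as a well-known fact and moves on, so there is nothing to compare your approach against. One cosmetic remark: in the first implication you could use the singleton directed set $S=\{y\}$ instead of $\downarrow y$, which avoids even the mild check that $\downarrow y$ is directed; but either choice works.
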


\begin{proposition}\label{p:domain}
Let $(D,\sqsubseteq)$ be a dcpo.
The following conditions are equivalent.
\begin{enumerate}[(iii)]
\renewcommand{\theenumi}{(\roman{enumi})}
\item
For every $x\in D$, the set $\ddown x =\{z\in D \mid z\ll x\}$
is directed and $x=\sqcup \ddown x$.
\item
There exists $B\subseteq D$ such that,
for every $x\in D$, $B\cap \ddown x$
is directed and $x=\sqcup(B\cap \ddown x)$.
\end{enumerate}
$D$ is a {\it continuous domain} if these conditions hold.
Any set $B$ satisfying condition (ii) is called a basis.
$D$ is an {\it $\omega$-continuous domain} if (ii) holds for some
countable set $B$.
$D$ is an algebraic (respectively $\omega$-algebraic) domain
if $K(D)$ is a basis (and is countable).
\end{proposition}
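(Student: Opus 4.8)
The plan is to prove the two implications separately; essentially all the content lies in (ii) $\Rightarrow$ (i), while (i) $\Rightarrow$ (ii) is immediate. For (i) $\Rightarrow$ (ii) I would simply take $B=D$: then $B\cap\ddown x=\ddown x$ for every $x\in D$, and condition (i) asserts precisely that each such set is directed with supremum $x$, which is exactly condition (ii) for this choice of $B$.

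For (ii) $\Rightarrow$ (i), fix a basis $B$ and a point $x\in D$, and abbreviate $M=B\cap\ddown x$; by hypothesis $M$ is directed — hence nonempty — and $\sqcup M=x$. I would first dispose of the equality $x=\sqcup\ddown x$: every $z\in\ddown x$ satisfies $z\sqsubseteq x$ by Proposition~\ref{p:waybelow}(1), so $x$ is an upper bound of $\ddown x$; conversely, any upper bound of $\ddown x$ is in particular an upper bound of the subset $M$, hence lies above $\sqcup M=x$, so $x$ is the \emph{least} upper bound of $\ddown x$. The crux is then the directedness of $\ddown x$. Given $z_1,z_2\in\ddown x$, each $z_i$ is way-below $x=\sqcup M$ with $M$ directed, so the definition of $\ll$, instantiated with the directed family $M$ itself, yields $b_i\in M$ with $z_i\sqsubseteq b_i$; using directedness of $M$ once more, pick $b\in M$ above both $b_1$ and $b_2$. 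Then $b\in M\subseteq\ddown x$ and $z_1,z_2\sqsubseteq b$, so $\ddown x$ is directed, and it is nonempty because $M$ is.

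The only thing worth flagging is that the ``interpolation'' step — producing, for each $z\ll x$, a basis element lying between $z$ and $x$ — requires no elaborate argument here: it is obtained simply by applying the definition of the way-below relation to the one directed set $M=B\cap\ddown x$, whose supremum is $x$. Once this observation is made there is no real obstacle; the only point needing a word of care is the degenerate case in which $\ddown x$ might be empty, which is ruled out because directed sets are taken to be nonempty and $M\subseteq\ddown x$.
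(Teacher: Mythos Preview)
The paper does not supply a proof of this proposition: it is stated without argument, as one of several classical facts about continuous domains recalled from the standard references (Abramsky \& Jung 1994; Gierz et al.\ 2003). Your proof is correct and is essentially the textbook argument for this equivalence; the key step---that any $z\ll x$ lies below some element of $B\cap\ddown x$ because $x=\sqcup(B\cap\ddown x)$ with that set directed---is exactly the observation that makes (ii) imply (i), and you have handled the nonemptiness and supremum clauses cleanly.
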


\begin{example}\label{ex:domain}
All spaces in Example~\ref{ex:Scott examples} are dcpo's with the Scott topology.
The spaces $\PN$, $X^{\leq\omega}$, $\widetilde{\R}$
are $\omega$-algebraic domains.
Their sets of compact elements are
respectively $\FIN$, $X^{<\omega}$ and $\Q\times\{+\}$.
The way-below relation on any of these three spaces is the restriction
of the partial order $\leq$ to $K(D)\times D$.
The space $\overrightarrow{\R}$ is an $\omega$-continuous domain but
is not algebraic: there is no compact element
and its way-below relation is the strict order $<$.
The space $\INF$ is not continuous: its way-below relation is empty.
\end{example}
Let us recall classical results in continuous domains.

\begin{proposition}\label{p:Scott topology}
Let $(D,\sqsubseteq)$ be a continuous domain with basis $B$.
\begin{enumerate}
\item  {\bf (Interpolation property).}
If $M\subseteq D$ is finite and $a\ll x$ for each $a\in M$
then there exists $x'\in B$ such that $M\ll x'\ll x$.
\item
 $x\ll y$ if and only if $y$ is interior to the upper cone
$\up x=\{z\mid   x\sqsubseteq z \}$.
\item 
 A set $O$ is open
if and only if $O=\bigcup_{x\in O} \uup x$
if and only if $O=\bigcup_{x\in O\cap B} \uup x$.
In particular, the family of sets $\uup z$,
where $z$ varies in $B$, is a basis of the Scott topology on $D$.
\end{enumerate}
\end{proposition}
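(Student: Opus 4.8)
The final statement to prove is Proposition~\ref{p:Scott topology}, which collects three classical facts about continuous domains: the interpolation property, the characterization of $\ll$ via topological interiors, and the fact that the sets $\uup z$ form a basis of the Scott topology.

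\medskip

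The plan is to prove the three items in the order stated, since the interpolation property is the engine that drives the other two. First I would establish the interpolation property. The key trick is to consider, for a single $a \ll x$, the set $S = \bigcup\{\ddown b \mid b \in B,\ b \ll x\}$; using that $D$ is continuous (so $x = \sqcup \ddown x$ and $\ddown x$ is directed, and similarly for the basis version $x = \sqcup(B \cap \ddown x)$), one checks that $S$ is directed and $\sqcup S = x$. Indeed directedness of $S$ follows from directedness of $B \cap \ddown x$ together with the fact that each $\ddown b$ is ``below'' a basis element, and $\sqcup S = x$ because $S$ contains cofinally large elements approximating $x$. Then $a \ll x = \sqcup S$ gives some element of $S$ above $a$, i.e. some $b \in B$ with $b \ll x$ and $a \sqsubseteq s \ll b$ for some $s$, whence $a \ll b \ll x$ by Proposition~\ref{p:waybelow}(1). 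For a finite set $M = \{a_1,\dots,a_n\}$ with each $a_i \ll x$, apply the single-element case $n$ times to get $b_1, \dots, b_n \in B$ with $a_i \ll b_i \ll x$; then use directedness of $\ddown x$ (really of $B \cap \ddown x$) to find a single $x' \in B$ with $b_i \sqsubseteq x'$ for all $i$ and $x' \ll x$, so that $a_i \ll x'$ for each $i$ and $M \ll x' \ll x$.

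\medskip

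Next, item~2: $x \ll y$ iff $y$ is in the topological interior of $\up x$. For the forward direction, suppose $x \ll y$. I claim $\uup x$ is Scott open and contains $y$; since $\uup x \subseteq \up x$ (by $x \ll z \Rightarrow x \sqsubseteq z$), this shows $y$ is interior to $\up x$. That $\uup x$ is an upset is immediate from Proposition~\ref{p:waybelow}(1). That every directed set with supremum in $\uup x$ meets $\uup x$: if $S$ is directed and $x \ll \sqcup S$... wait, we need the supremum to be in $\uup x$, i.e. $x \ll \sqcup S$, and then by definition of $\ll$ there is $s \in S$ with $x \sqsubseteq s$; to get $x \ll s$ we use the interpolation property (pick $x''$ with $x \ll x'' \ll \sqcup S$, get $s$ with $x'' \sqsubseteq s$, hence $x \ll s$). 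So $\uup x$ is open. For the converse, if $y$ is interior to $\up x$, let $O$ be open with $y \in O \subseteq \up x$; since $y = \sqcup \ddown y$ and $\ddown y$ is directed with supremum $y \in O$, some $z \in \ddown y$ lies in $O \subseteq \up x$, so $x \sqsubseteq z \ll y$, giving $x \ll y$ by Proposition~\ref{p:waybelow}(1).

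\medskip

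Finally, item~3. If $O$ is open and $x \in O$, then as in the previous paragraph $x = \sqcup \ddown x$ with $\ddown x$ directed, so some $z \ll x$ with $z \in O \cap B$ (using the basis version $x = \sqcup(B \cap \ddown x)$); then $x \in \uup z$, and $\uup z \subseteq O$ because $\uup z$ is an upset (Proposition~\ref{p:waybelow}(1)) and... actually to get $\uup z \subseteq O$ I argue: $z \in O$, $O$ open, and any $w$ with $z \ll w$ satisfies $z \sqsubseteq w$, but being an upset only gives $\up z \supseteq$... no — $O$ is an upset, so $z \in O$ and $z \sqsubseteq w$ already gives $w \in O$; hence $\uup z \subseteq \up z \subseteq O$. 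This shows $O = \bigcup_{z \in O \cap B} \uup z \subseteq \bigcup_{z \in O} \uup z \subseteq O$ (the last inclusion since each $\uup z$, for $z\in O$, is contained in $O$ as just argued). The reverse direction (such a union is open) follows since each $\uup z$ is open by item~2 (it equals the interior of $\up z$, or directly: $\uup z$ is open as shown above). The ``in particular'' clause is then immediate. I expect the interpolation property — specifically verifying that the auxiliary set $S$ is directed with the right supremum — to be the main obstacle; everything afterward is a routine unwinding of definitions using interpolation and continuity.
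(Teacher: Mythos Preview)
The paper does not prove this proposition: it is introduced with ``Let us recall classical results in continuous domains'' and stated without proof, deferring to the standard domain-theory references (Abramsky--Jung, Gierz et al.). Your argument is the standard one and is correct.

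A small remark on the one place you flag as the main obstacle: the directedness of $S=\bigcup\{\ddown b\mid b\in B,\ b\ll x\}$ uses not only the directedness of $B\cap\ddown x$ but also that each $\ddown b$ is itself directed (this is part of the definition of continuous domain). Concretely, given $s_1\ll b_1\ll x$ and $s_2\ll b_2\ll x$ with $b_1,b_2\in B$, pick $b_3\in B\cap\ddown x$ above $b_1,b_2$; then $s_1,s_2\in\ddown b_3$, and directedness of $\ddown b_3$ yields $s_3\in\ddown b_3\subseteq S$ above both. Your sketch is compatible with this but does not make the second ingredient explicit. Everything else (openness of $\uup x$ via interpolation, the basis characterization) is routine and correctly argued.
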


\subsection{Quasi-Polish Spaces}
\label{ss:quasi-Polish}

Quasi-Polish spaces, 
developed by de Brecht (2011), 
are a unifying framework of Polish spaces and
$\omega$-continuous domains. 
We recall here the definition and main results. 

\begin{definition}\label{def:quasi}
\begin{enumerate}
\item  Giving up the symmetry axiom of metrics,
a quasi-metric on a space $E$ is defined as a function
$d:E^2\to[0,+\infty[$
such that, for all $x,y,z\in E$,
$$
x=y\Leftrightarrow d(x,y)=d(y,x)=0
\quad,\quad
d(x,z)\leq d(x,y)+d(y,z)\ .
$$
The topology associated to $d$ is the one generated by the open balls.
\item
 A sequence $(x_n)_{n\in\N}$ is Cauchy if
$\lim_{n\to+\infty}\sup_{p\geq n}d(x_n,x_p)=0$.
\item
 A quasi-metric space $(E,d)$ is complete if every Cauchy
sequence is convergent relative to the metric $\widehat{d}$
such that $\widehat{d}(x,y)=\max\{d(x,y),d(y,x)\}$.
\item Topological spaces associated to
complete quasi-metrics with a countable topological basis
are called quasi-Polish.
\end{enumerate}
\end{definition}

\begin{example}\label{ex:quasiPolish}
The Scott topology on $\PN$ is quasi-Polish for the quasi-metric
such that $d(X,Y)=\sum_{n\in X\setminus Y}2^{-n}$.
\end{example}

\begin{theorem}\cite{kunzi1983}\label{thm:kunzi}
A quasi-metric space $(E,d)$ has a countable topological basis
if and only if the metric space $(E,\widehat{d})$ is separable
(i.e. has a countable dense subset).
\end{theorem}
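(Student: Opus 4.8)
My plan is to prove the two implications directly, relying throughout on the fact that the $d$-balls $B_d(x,r)=\{y\in E : d(x,y)<r\}$ form a base of the topology associated with $d$ (an immediate consequence of the triangle inequality), together with the remark that $\widehat{d}(x,y)<r$ is equivalent to the conjunction $d(x,y)<r$ and $d(y,x)<r$.

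For the implication from right to left I would start from a countable $\widehat{d}$-dense set $S$ (given by separability of $(E,\widehat{d})$) and check that the countable family $\{B_d(s,q) : s\in S,\ q\in\Q,\ q>0\}$ is a base of the $d$-topology. Indeed, given a $d$-open set $O$ and $x\in O$, one picks $\varepsilon>0$ with $B_d(x,\varepsilon)\subseteq O$, then $s\in S$ with $\widehat{d}(x,s)<\varepsilon/2$, and a rational $q$ with $d(s,x)<q<\varepsilon/2$; then $x\in B_d(s,q)$ and, by the triangle inequality, $B_d(s,q)\subseteq B_d(x,\varepsilon)\subseteq O$.

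For the converse --- the direction that actually needs an idea --- I would start from a countable base $\{U_n\}_{n\in\N}$ of the $d$-topology and, for $n,m\in\N$, consider the set $D_{n,m}=\{x\in E : x\in U_n\ \text{and}\ U_n\subseteq B_d(x,1/m)\}$. Two observations finish the argument. First, $\bigcup_n D_{n,m}=E$ for every $m$: since $B_d(x,1/m)$ is a $d$-open neighbourhood of $x$, some basic $U_n$ satisfies $x\in U_n\subseteq B_d(x,1/m)$, i.e. $x\in D_{n,m}$. Second, any two points $x,x'$ of a single $D_{n,m}$ satisfy $\widehat{d}(x,x')<1/m$: from $x'\in U_n\subseteq B_d(x,1/m)$ and $x\in U_n\subseteq B_d(x',1/m)$ we get both $d(x,x')<1/m$ and $d(x',x)<1/m$. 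Choosing a point $p_{n,m}$ in each nonempty $D_{n,m}$, the countable set $S=\{p_{n,m} : n,m\in\N\}$ is $\widehat{d}$-dense: given $x\in E$ and $\varepsilon>0$, pick $m$ with $1/m<\varepsilon$ and $n$ with $x\in D_{n,m}$, so that $\widehat{d}(x,p_{n,m})<1/m<\varepsilon$.

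The one delicate point, and the only place the asymmetry of $d$ bites, is the choice made in the converse direction: one should not look for basic open sets that are $\widehat{d}$-small (there may be none --- as in lower-limit-type topologies, where however the $d$-topology is not even second countable), but instead partition $E$ according to which basic set a point ``uses'' at scale $1/m$. Simultaneous membership $x,x'\in U_n\subseteq B_d(x,1/m)\cap B_d(x',1/m)$ then controls $d(x,x')$ and $d(x',x)$ at once, which is exactly what symmetrization needs. The remaining verifications (that $d$-balls are a base, elementary triangle-inequality estimates) are routine.
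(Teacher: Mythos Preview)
The paper does not give a proof of this theorem: it is merely stated with a citation to K\"unzi (1983), so there is nothing to compare your argument against at the level of approach.

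Your proof is correct. The right-to-left implication is routine: you use that $\widehat{d}(x,s)<\varepsilon/2$ controls both $d(x,s)$ and $d(s,x)$, so the usual metric argument for ``balls centred at a dense set form a base'' goes through verbatim for the $d$-balls. The left-to-right implication is where the asymmetry could cause trouble, and your device of indexing points by the pair $(n,m)$ such that $x\in U_n\subseteq B_d(x,1/m)$ is exactly the right idea: two points sharing the same index automatically lie in each other's $1/m$-ball, which bounds $\widehat{d}$ symmetrically. The verification that $\bigcup_n D_{n,m}=E$ and that any chosen representatives $p_{n,m}$ form a $\widehat{d}$-dense set is complete as written. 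Your closing remark correctly identifies why the naive attempt (looking for basic sets of small $\widehat{d}$-diameter) can fail.
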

The following theorem sums up some of the main results
in \cite{deBrecht2011}.

\begin{theorem}\cite{deBrecht2011}\label{thm:deBrecht}.
\begin{enumerate}
\item  A space is quasi-Polish if and only if it is homeomorphic to some
$\tbPi^0_2$ subspace of $\PN$ endowed with the Scott topology.

\item  Polish spaces and $\omega$-continuous domains are quasi-Polish.
\item  Every quasi-Polish space $E$ satisfies the following properties:
\\
- (Baire property)\  The intersection of a sequence
of dense open sets is a dense set.
\\
- (Hausdorff-Kuratowski property)\
$\bigcup_{\alpha<\aleph_1}\tbD_\alpha(\tbSigma^0_\beta(E))
= \tbDelta^0_{\beta+1}(E)$ for all
$0<\beta<\aleph_1$.
\end{enumerate}
\end{theorem}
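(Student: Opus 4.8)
The plan is to follow de~Brecht's development, taking the representation of item~1 as the structural core. For the ``if'' direction of item~1, since $\PN$ with the Scott topology is quasi-Polish (Example~\ref{ex:quasiPolish}), it suffices to establish the quasi-metric analogue of the fact that $\bGd$ subspaces of Polish spaces are Polish, namely: \emph{every $\tbPi^0_2$ subspace of a quasi-Polish space is quasi-Polish}. By Proposition~\ref{p:Borel normal}, a $\tbPi^0_2$ subset is a countable intersection of sets $(E\setminus U)\cup V$ with $U,V$ open; roughly following the classical Lavrentiev-Alexandroff construction, I would equip each such piece with a compatible complete quasi-metric obtained from $\widehat d$ by adding a correction term that blows up near the removed part, and then combine the countably many quasi-metrics by a weighted sum, checking that the result is complete and induces the subspace topology.

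For the ``only if'' direction, given a quasi-Polish $E$ with complete quasi-metric $d$ and countable basis $(B_k)_{k\in\N}$, I would use the map $e:E\to\PN$, $e(x)=\{k\in\N : x\in B_k\}$. Since the $B_k$'s form a basis and $E$ is $T_0$, $e$ is a homeomorphism onto its image. The crux is that $e(E)$ is $\tbPi^0_2$ in $\PN$: membership of $S\in\PN$ in $e(E)$ is expressed as the conjunction of a ``filter/consistency'' condition on $S$ (a Boolean combination of open conditions) with a ``$S$ is realised by a genuine point'' condition, the latter captured by $d$-Cauchyness of an explicit sequence read off from $S$ together with completeness; being a countable conjunction of Boolean combinations of open sets, this is $\tbPi^0_2$.

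Item~2 then follows easily. A Polish space is a complete separable metric space, hence carries a complete quasi-metric with $d=\widehat d$, and separability gives a countable basis (Theorem~\ref{thm:kunzi}); so Polish spaces are quasi-Polish. For an $\omega$-continuous domain $D$ with countable basis $B=\{b_n\}_{n\in\N}$ (Proposition~\ref{p:domain}) one can either invoke item~1 --- the map $x\mapsto\{n : b_n\ll x\}$ embeds $D$ as a $\tbPi^0_2$ subspace of $\PN$ --- or exhibit directly, in analogy with Example~\ref{ex:quasiPolish}, the quasi-metric
$$d(x,y)=\sum\{\,2^{-n} : b_n\ll x\mbox{ and }b_n\not\ll y\,\}.$$
The quasi-metric axioms hold termwise, the $d$-topology is the Scott topology because the $\uup b_n$ are $d$-open and, by Proposition~\ref{p:Scott topology}, form a basis of it, and completeness holds because a $d$-Cauchy sequence determines a directed family $C\subseteq B$ of basis elements eventually way-below it whose supremum $\sqcup C$ (which exists, $D$ being a dcpo, with $\ddown(\sqcup C)$ directed) is its $\widehat d$-limit --- the interpolation property of Proposition~\ref{p:Scott topology} being what makes this work.

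For item~3, the Baire property is obtained by the usual nested-ball argument carried out with $\widehat d$-closed balls of radii tending to $0$, so that the centres form a Cauchy sequence whose limit (by completeness with respect to $\widehat d$) lies in the given nonempty open set and in every dense open set of the sequence. For the Hausdorff-Kuratowski property, the inclusion $\bigcup_{\alpha<\aleph_1}\tbD_\alpha(\tbSigma^0_\beta(E))\subseteq\tbDelta^0_{\beta+1}(E)$ is Proposition~\ref{p:Hausdorff Delta Borel}. For the converse, given $A\in\tbDelta^0_{\beta+1}(E)$ I would run a transfinite derivative analysis: writing $A$ and $E\setminus A$ as countable unions of differences of $\tbSigma^0_\beta$ sets, define a derivative operation on $\tbPi^0_{\beta+1}$ ``boundary'' sets that strictly decreases until empty; since $E$ has a countable basis the process halts at a countable $\alpha$, giving $A\in\tbD_\alpha(\tbSigma^0_\beta(E))$. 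The strict decrease at successor stages rests on a level-$(\beta+1)$ Baire-type separation principle, which I would derive either directly from completeness of the quasi-metric (an iterated form of the $\tbPi^0_2$ Baire property) or by reducing the statement to $E=\PN$ via item~1, using that Borel and Hausdorff classes are preserved under continuous preimages and passage to subspaces (Propositions~\ref{p:Borel} and~\ref{p:inverseimagesHausdorff}). I expect the two genuinely hard steps to be showing that $e(E)$ is \emph{exactly} $\tbPi^0_2$ in item~1, and this reverse inclusion in item~3; both are precisely where completeness of the quasi-metric is used. Full proofs are in \cite{deBrecht2011}.
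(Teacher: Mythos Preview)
The paper does not prove this theorem: it is stated as a citation of results from \cite{deBrecht2011} and no proof is given, so there is nothing to compare your proposal against. Your outline is a reasonable sketch of de~Brecht's arguments (and you yourself note ``Full proofs are in \cite{deBrecht2011}''), but for the purposes of this paper the appropriate ``proof'' is simply to omit it and refer the reader to the cited work.
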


\begin{remark}
Two examples of subspaces of $\PN$ illustrate the above theorem.
\begin{enumerate}
\item
The subspace
$C=\{X\in\PN\mid\forall i\in\N\ (2i\in X\Leftrightarrow 2i+1\notin X)\}
=\bigcap_{i\in\N}(O_{2i}\Delta O_{2i+1})$
is $\tbPi^0_2$ in $\PN$ and is homeomorphic to the Cantor space
$\cantor$ (and is therefore Polish).
\item
The subspace $\INF=\bigcap_{i\in\N}\bigcup_{j\geq i}O_{\{j\}}$
is $\tbPi^0_2$ in $\PN$ hence is quasi-Polish.
It has a countable basis but is not Polish since it is $T_0$ and not $T_2$.
Although $(\INF,\subseteq)$ is a dcpo, it is not a continuous domain
since its way-below relation is empty.
\end{enumerate}
\end{remark}

\subsection{Topological Games}
\label{ss:games}

The Choquet topological games have been used to characterize
Polish spaces \cite{choquet1969} and quasi-Polish spaces
\cite{deBrecht2011}. We shall use them to characterize
approximation spaces (cf. Definition~\S\ref{def:approximation} infra).
Choquet games are a variant of Banach-Mazur games.
The subtlety of this variant is best understood by confronting
definitions and properties of both classes of games.
Also, the interest of de Brecht's Theorem 2.36 and our Corollary 3.12 about
convergent Markov/stationary strategies in the Choquet game
is highlighted by the counterpart (and more powerful) results
by Galvin \& Telg\' arsky for Banach-Mazur games
(cf. Corollary 2.34, Remark 2.35).

\subsubsection{Banach-Mazur and Choquet Games.}
\label{sss:games}
Let us recall the classical definitions.

\begin{definition}\cite{choquet1969,galvin1986}
Let $X$ be a topological space.
\begin{enumerate}
\item In the Banach-Mazur game $\BM(X)$ two players,
Empty and Nonempty, alternate turns for $\omega$ rounds.
On round $0$ (respectively $i+1$), Empty moves first,
choosing a non empty open subset $U_0\subseteq X$
(respectively $U_{i+1}\subseteq V_i$).
Then, Nonempty responds with a non-empty open set
$V_0\subseteq U_0$ (respectively $V_{i+1}\subseteq U_{i+1}$).
After all the rounds have been played,
Nonempty wins if $\bigcap_{i\in\N} V_i \neq \emptyset$.
Otherwise, Empty wins.

\item The Choquet game $\Ch(X)$ is the variant of the Banach-Mazur game
$BM(X)$ where at round $i$ Empty picks a pair $(x_i,U_i)$
such that $U_i$ is open and $x_i\in U_i$,
and Nonempty picks an open set $V_i$
such that $x_i\in V_i\subseteq U_i$.

\item  A winning strategy for a player (in any of the above games)
is a function that takes a partial play of the game
ending with a move by its opponent
and returns a move to play,
such that the player wins any play of the game that follows the strategy.

\item A winning strategy for Nonempty is convergent if, when he follows it,
the $V_i$'s are a basis of neighborhoods of some
$x\in\bigcap_{i\in\N}V_i$.
\end{enumerate}
\end{definition}

\begin{note}\label{note:games}
The denominations ``Banach-Mazur" and ``Choquet"
are the most commonly used (and are historically accurate).
However, in \cite{kechrisBook}  these games are respectively called
``Choquet" and ``strong Choquet".
\end{note}

\begin{remark}\label{rk:ChversusBM}
Every winning strategy for Nonempty in the Choquet game
$\Ch(X)$ yields one in the Banach-Mazur game $\BM(X)$.
\end{remark}
The following theorem sums up some known results
around these topological games.

\begin{theorem}\label{thm:games}
Let $X$ be a topological space.
\begin{enumerate}
\item  (Oxtoby 1957, cf. Kechris 1995, Theorem 8.11)
$X$ has the Baire property
(i.e. the intersection of countably many dense open sets is dense)
if and only if
player Empty has no winning strategy in the Banach-Mazur game $\BM(X)$.

\item  (Choquet 1969, cf. Kechris 1995, Theorem 8.18)
$X$ is Polish if and only if $X$ has a countable basis,
is $T_1$ and regular and
player Nonempty has a winning strategy
in the Choquet game $\Ch(X)$.
\item  (de Brecht 2011, Theorem 51)
$X$ is quasi-Polish if and only if $X$ has a countable basis,
and player Nonempty has a convergent winning strategy in the Choquet game $\Ch(X)$.
\end{enumerate}
\end{theorem}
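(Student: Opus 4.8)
Items~1 and~2 are the classical theorems of Oxtoby and Choquet, and item~3 is de Brecht's; the plan is to recall the standard proofs, in each case splitting the equivalence into its two directions. \emph{Item~1.} The implication from failure of the Baire property to the existence of a winning strategy for Empty is direct: fix dense open sets $(D_n)_{n\in\N}$ and a nonempty open $W$ with $W\cap\bigcap_n D_n=\emptyset$, and let Empty open with $U_0=W$ and reply to each $V_i$ by $U_{i+1}=V_i\cap D_i$ (nonempty since $D_i$ is dense); any run of this strategy has $\bigcap_i V_i\subseteq W\cap\bigcap_i D_i=\emptyset$, so Empty wins. For the converse (a winning strategy $\sigma$ for Empty entails failure of the Baire property) I would carry out Oxtoby's unfolding of $\sigma$, as in \cite{kechrisBook}, Theorem~8.11: one recursively builds the tree of $\sigma$-consistent partial runs, selecting at each node where Nonempty must answer inside an open set $U$ a maximal family of pairwise disjoint nonempty open subsets of $U$ as the admissible replies (Zorn's lemma), and extracts from this tree dense open sets $G_n$ such that any point of $U_0\cap\bigcap_n G_n$ would trace an infinite $\sigma$-run that Empty loses; since $\sigma$ is winning, $U_0\cap\bigcap_n G_n=\emptyset$, so the $G_n$ are dense open sets with non-dense intersection.

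\emph{Item~2.} If $X$ is Polish, fix a complete compatible metric and let Nonempty answer $(x_i,U_i)$ by an open $V_i$ with $x_i\in V_i$, $\overline{V_i}\subseteq U_i$ (regularity) and diameter at most $2^{-i}$; then $(\overline{V_i})_i$ decreases to a single point, so $\bigcap_i V_i\ne\emptyset$, Nonempty wins, and $X$ is plainly $T_1$, regular and second countable. Conversely, if $X$ is $T_1$, regular and second countable, then $X$ is separable metrizable by Urysohn's metrization theorem; I would fix a compatible metric, pass to a completion $\widehat X$ (a Polish space), and unfold a winning strategy $\tau$ for Nonempty over $\widehat X$ with a vanishing-diameter bookkeeping so as to obtain open sets $G_n\supseteq X$ in $\widehat X$ whose intersection consists exactly of the limits of infinite $\tau$-runs; since $\tau$ is winning these limits lie in $X$, so $X=\bigcap_n G_n$ is $\bGd$ in $\widehat X$ and hence Polish (this is the argument of \cite{kechrisBook}, Theorem~8.18).

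\emph{Item~3.} The forward direction copies item~2 in the quasi-metric setting: given a complete quasi-metric $d$ on $X$ with a countable basis, let Nonempty answer $(x_i,U_i)$ by a $d$-ball $V_i=B_d(x_i,\varepsilon_i)\subseteq U_i$, the radii $\varepsilon_i$ chosen decreasing and nested tightly enough that $(x_i)_i$ is forced to be $d$-Cauchy; by completeness it $\widehat d$-converges to some $x$, which then lies in every $V_i$ and admits $(V_i)_i$ as a neighborhood basis, so the strategy is winning and convergent. \textbf{The heart of the matter is the converse}: one must manufacture a \emph{complete} quasi-metric out of a countable basis $\{B_n\}_{n\in\N}$ and a convergent winning strategy for Nonempty. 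The route I would take uses de Brecht's embedding theorem, Theorem~\ref{thm:deBrecht}(1): embed $X$ into $\PN$ with the Scott topology by $x\mapsto\{n:x\in B_n\}$ --- a topological embedding, using that $X$ is $T_0$, a standing assumption in de Brecht's framework --- and then use the convergent winning strategy to verify that the image is $\tbPi^0_2$ in $\PN$. The convergence condition is exactly what is needed here, since it lets one recognize, from a run approximating a point by basic open sets, that the run really does converge to a genuine point of $X$ --- an ingredient that plain quasi-metrizability, without completeness, does not supply. By Theorem~\ref{thm:deBrecht}(1) the image, hence $X$, is quasi-Polish.
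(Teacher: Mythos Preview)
The paper does not prove this theorem: it is presented as a summary of known results, attributed to Oxtoby, Choquet, and de Brecht, with references to the literature, and no argument is supplied in the text. There is therefore no proof in the paper to compare your proposal against.

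For what it is worth, your sketches of items~1 and~2 are the standard arguments (essentially those of \cite{kechrisBook}, Theorems~8.11 and~8.18). For item~3, the forward direction is the right idea but needs a little more care than you indicate: the asymmetry of the quasi-metric means one must arrange the nesting so that the $\widehat d$-limit of the $x_i$ actually lies in each $V_i$ and that the $V_i$ form a neighborhood basis of it (de Brecht's argument handles this and in fact produces a Markov strategy, cf.\ Theorem~\ref{thm:quasiPolish strategies}). For the converse you rightly observe that a $T_0$ hypothesis is needed for the map $x\mapsto\{n:x\in B_n\}$ to embed $X$ into $\PN$; the paper's statement of item~3 omits it, but it is explicit later in Corollary~\ref{coro:Choquet games convergent}. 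The step you leave as a black box --- using the convergent strategy to show the image is $\tbPi^0_2$ in $\PN$ --- is where the real work lies in de Brecht's proof, but since the paper itself only cites de Brecht for it, your proposal is at the same level of detail as the paper.
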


\subsubsection{Markov and Stationary Strategies.}
\label{sss:stationary}
The pioneer work of Schmidt (1966) and Choquet (1969) considered strategies of a very simple form.  Then Galvin \& Telg\'arky (1986) obtained deep results for other strategies.

\begin{definition}\label{def:Markov stationary}
A winning strategy is stationary (respectively Markov)
if it depends only on the last move of the opponent
(respectively and on the ordinal rank of the round).
\end{definition}

\begin{theorem}\label{thm:Markov stationary}
(Galvin \& Telg\'arky 1986, Theorems 5 and 7)
Let $(S, \leq)$ be a non empty partially ordered set.
Let $R$ be a family of monotone (non strictly) decreasing sequences in $S$.
In the game $G(S,\leq,R)$,
player I starts and plays elements $a_0,a_1,\ldots$ of $S$,
player II plays elements $b_0,b_1,\ldots$ of $S$
in such a way that
$a_{i+1}\leq b_i$ and $b_i\leq a_i$ for all $i$.
Player II wins if and only if $(a_0,a_1,\ldots)\in R$.
Suppose $R$ is such that
for all  monotone (non strictly) decreasing sequences
$(x_i)_{i\in\N}$, $(y_i)_{i\in\N}$  in $S$,
$$
(\forall i\ \exists j\ y_j\leq x_i)
\wedge(\forall j\ \exists k\ x_k\leq y_j)
\wedge 
(x_i)_{i\in\N}
\in R
\ \Rightarrow\ 
(y_i)_{i\in\N} \in R\ .
$$
\begin{enumerate}
\item  If player II has a winning strategy in $G(S,\leq,R)$
then he has one that depends only on the last move of I
and the last move of II.
\item  If player II has a Markov winning strategy in $G(S,\leq,R)$
then he has a stationary one.
\end{enumerate}
\end{theorem}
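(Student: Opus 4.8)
The plan is to handle both parts by one mechanism: from a winning strategy for player~II of the allowed kind, build a winning strategy of the more restricted kind by having II \emph{simulate} the given strategy against a normalized record of the play, and transfer the win using the only genuine assumption on~$R$ --- that $R$ is insensitive to replacing a monotone decreasing sequence by a mutually coinitial one (each term of one lying $\le$-below some term of the other). The conceptual point I would isolate first is that this assumption makes the property ``II wins the sub-game continuing a position~$p$'' depend only on the last move of~$p$: if two positions end with the same move~$s$ and II then follows the same sub-game strategy, the two resulting infinite sequences of I-moves share a common tail while their initial segments consist of terms~$\ge s$, and such sequences are mutually coinitial, hence are simultaneously in or out of~$R$. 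In particular II wins the sub-game from every one-move position.

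For Part~1, fix a winning strategy $\sigma$ for~II. The idea is that, at a position whose last two moves are $(a_i,b_{i-1})$ --- so $a_i\le b_{i-1}$ --- II continues a simulated $\sigma$-play whose last II-move is exactly~$b_{i-1}$, by appending the I-move $a_i$ (legal, as $a_i\le b_{i-1}$) and answering with $\sigma$ of the result. To make this a function of $(a_i,b_{i-1})$ only, I would fix in advance, for each $r$ in the range of $\sigma$, one finite $\sigma$-play $P_r$ whose last II-move is~$r$, put $\tau(\langle a_0\rangle)=\sigma(\langle a_0\rangle)$, and let $\tau(a_i,b_{i-1})$ be $\sigma$ of the play obtained by extending $P_{b_{i-1}}$ with the I-move~$a_i$. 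A harmless preliminary normalization of $\sigma$ --- for instance replacing it by the strategy that answers the very first move $a_0$ by $a_0$ itself and thereafter plays $\sigma$ on the play with that trivial round deleted, which still wins because the shifted I-sequence is coinitial with the original one --- is what lets the $P_r$ be chosen coherently enough that, along any $\tau$-play, the successively used plays $P_{b_i}$ extend one another into a single infinite $\sigma$-play. Its sequence of I-moves is then coinitial with the actual I-sequence $(a_i)_i$ of the $\tau$-play; that play lies in $R$ since $\sigma$ wins, so the assumption on $R$ gives $(a_i)_i\in R$, i.e.\ $\tau$ wins.

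For Part~2, $\sigma$ is Markov, so $\sigma(s,n)$ depends on I's last move~$s$ and the round~$n$, and one wants a stationary $\tau(s)$. I would take $\tau(s):=\sigma(s,0)$ and, given a $\tau$-play with I-sequence $(a_i)_i$, reconstruct a genuine $\sigma$-play by inserting between the real rounds blocks of ``stalling'' rounds in which I merely replays II's last move, the block lengths chosen so that the real move $a_i$ always falls on a round whose clock value makes $\sigma$ answer exactly $\tau(a_i)$. The resulting $\sigma$-play has a sequence of I-moves that contains a coinitial copy of $(a_i)_i$, and as it lies in $R$ the assumption on $R$ yields $(a_i)_i\in R$.

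The step I expect to be the real obstacle is the bookkeeping in Part~1: arranging the canonical plays $P_r$ --- equivalently, the normalization of $\sigma$ --- so that the pieces glued along a $\tau$-play genuinely form one infinite $\sigma$-play with the stated coinitiality, while $\tau$ still consults only the last two moves. That one cannot push the dependence down to the last move of I alone is not an oversight: an arbitrary winning strategy need not give rise to a stationary winning one, which is precisely why Part~1 halts at two-move dependence and why Part~2, which does reach stationary strategies, is stated separately and only for the already more constrained Markov strategies. Part~2 is then the comparatively easy ``erase the clock'' case of the same simulation-and-transfer scheme.
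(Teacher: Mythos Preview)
The paper does not prove this theorem: it is quoted from Galvin \& Telg\'arsky (1986, Theorems~5 and~7) and stated without proof, so there is no argument in the paper to compare your proposal against.

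On the substance of your sketch: Part~1 is in the right spirit. The simulation idea---attach to each possible last II-move $r$ a canonical finite $\sigma$-play $P_r$ ending at $r$, then extend by the current I-move---is essentially how Galvin \& Telg\'arsky proceed, and your identification of the bookkeeping (making the $P_{b_i}$ cohere into a single infinite $\sigma$-play while $\tau$ depends only on $(a_i,b_{i-1})$) as the genuine work is accurate.

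Part~2, however, has a real gap. Setting $\tau(s)=\sigma(s,0)$ and inserting ``stalling'' rounds cannot be made to work as you describe. In the simulated $\sigma$-play the round counter strictly increases, so after round~$0$ the Markov strategy will never again return $\sigma(\cdot,0)$; there is no reason any later round $n$ should satisfy $\sigma(a_i,n)=\sigma(a_i,0)$. Worse, the legality constraint fails: in the $\tau$-play you only know $a_{i+1}\le b_i=\sigma(a_i,0)$, whereas in the simulated $\sigma$-play II's most recent move at the moment you want to insert $a_{i+1}$ is some $\sigma(c,k)\le b_i$, and you need $a_{i+1}\le\sigma(c,k)$, which you do not have. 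Galvin \& Telg\'arsky's passage from Markov to stationary does not use a single coordinate $\sigma(\cdot,0)$; it builds $\tau(s)$ from an iterated self-play of the whole family $(\sigma(\cdot,n))_n$ starting at $s$, and the coinitiality assumption on $R$ is applied to that auxiliary sequence. Your ``erase the clock'' intuition is right in outline, but the specific construction you propose does not realize it.
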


\begin{corollary}\label{coro:Markov stationary}
(Galvin \& Telg\'arky 1986, Corollaries 9 and  14)
\begin{enumerate}

\item If player Nonempty has a winning (respectively and convergent)
strategy in the Banach-Mazur game $\BM(X)$
then he has one which depends only on the last move of Empty
and the last move of Nonempty (respectively and is convergent).

\item  If player Nonempty has a Markov (respectively and convergent)
winning strategy in $\BM(X)$
then he has a stationary (respectively and convergent) one.

\item  (Debs 1984, 1985)
There exists a $T_2$ 
(even completely regular) space $X$ such that
player Nonempty has no stationary winning strategy in $\BM(X)$
but has a winning strategy 
which depends only on the last two moves of Empty.
\end{enumerate}
\end{corollary}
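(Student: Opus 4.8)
The plan is to obtain parts~1 and~2 by instantiating the abstract Theorem~\ref{thm:Markov stationary} of Galvin \& Telg\'arsky to the Banach-Mazur game, and to obtain part~3 simply by quoting the counterexamples of Debs; only the translation in the first step requires any work.

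First I would take $(S,\leq)$ to be the set of all non-empty open subsets of $X$ ordered by inclusion, so that a run of $G(S,\subseteq,R)$ has player~I playing non-empty opens $U_0,U_1,\ldots$ and player~II playing non-empty opens $V_0,V_1,\ldots$ subject to $V_i\subseteq U_i$ and $U_{i+1}\subseteq V_i$. Identifying player~I with Empty, player~II with Nonempty, $a_i$ with $U_i$ and $b_i$ with $V_i$, this is literally the position sequence of $\BM(X)$, and a legal move always exists (repeat the previous set). Since $U_0\supseteq V_0\supseteq U_1\supseteq V_1\supseteq\cdots$ we have $\bigcap_i U_i=\bigcap_i V_i$ and the sequences $(U_i)$ and $(V_i)$ are mutually cofinal; hence the family $R=\{(U_i)_{i\in\N}\mid\bigcap_i U_i\neq\emptyset\}$ of $\subseteq$-decreasing sequences in $S$ makes ``II wins'' coincide with ``Nonempty wins'', while the subfamily $R_{\mathrm{conv}}=\{(U_i)_{i\in\N}\mid \mbox{some }x\in\bigcap_i U_i\mbox{ has }\{U_i\mid i\in\N\}\mbox{ as a neighbourhood basis}\}$ makes ``II wins'' coincide with ``Nonempty wins convergently'' (using again that, by mutual cofinality, $\{U_i\}$ is a neighbourhood basis of $x$ iff $\{V_i\}$ is).

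Next I would check the hypothesis of Theorem~\ref{thm:Markov stationary} for $R$ and for $R_{\mathrm{conv}}$: if $(x_i)$ and $(y_i)$ are $\subseteq$-decreasing sequences in $S$ with $\forall i\,\exists j\ y_j\subseteq x_i$ and $\forall j\,\exists k\ x_k\subseteq y_j$, then $\bigcap_i x_i=\bigcap_j y_j$, so $(x_i)\in R$ implies $(y_i)\in R$, and the same cofinality shows that $\{x_i\}$ is a neighbourhood basis of a point of the common intersection iff $\{y_j\}$ is, which gives the implication for $R_{\mathrm{conv}}$ as well. With the hypothesis verified, Theorem~\ref{thm:Markov stationary}(1) applied with $R$, respectively with $R_{\mathrm{conv}}$, yields part~1 in its plain, respectively convergent, form, and Theorem~\ref{thm:Markov stationary}(2) applied with $R$, respectively $R_{\mathrm{conv}}$, yields part~2; the restrictions ``uses only the last two moves of Empty'', ``Markov'' and ``stationary'' transfer verbatim because a play of $G(S,\subseteq,\cdot)$ simply \emph{is} a play of $\BM(X)$, with the same rounds and the same opponent moves. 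Finally part~3 needs no proof: one invokes the $T_2$ (indeed completely regular) space of Debs on which Nonempty wins using only Empty's last two moves but has no stationary winning strategy, witnessing that the conclusions of parts~1 and~2 cannot be strengthened. The only delicate point in the whole argument is the bookkeeping of the middle paragraph---the choice of $S$, $R$, $R_{\mathrm{conv}}$ and the mutual-cofinality check---after which the statement is a direct transcription of Galvin \& Telg\'arsky's theorem, so I do not expect a genuine obstacle.
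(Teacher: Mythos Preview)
The paper does not supply its own proof of this corollary: it is stated purely as a citation to Galvin \& Telg\'arsky (1986, Corollaries~9 and~14) and to Debs (1984, 1985), with no accompanying argument. Your proposal is correct and is precisely the intended derivation---instantiating the abstract Theorem~\ref{thm:Markov stationary} with $S$ the non-empty open sets ordered by inclusion, checking the mutual-cofinality hypothesis for both $R$ and $R_{\mathrm{conv}}$, and reading off parts~1 and~2---so there is nothing substantive to compare; this is how the cited corollaries are obtained from the cited theorems in Galvin \& Telg\'arsky's paper, and part~3 is, as you say, a bare citation.
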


\begin{remark}
Theorem~\ref{thm:Markov stationary} does not apply to
the Choquet game because the players do not play
in the same partially ordered set.
For simple winning strategies for Nonempty in the Choquet game $\Ch(X)$
in particular spaces $X$, see \S5 of \cite{bennetlutzer2009}.
\end{remark}
The proof  of item 3 of Theorem~\ref{thm:games}  
given by de Brecht (Theorem 51 in de Brecht 2011) 
yields more:

\begin{theorem}\label{thm:quasiPolish strategies}
\cite{deBrecht2011}
If $X$ is quasi-Polish then player Nonempty has a Markov
convergent winning strategy in the Choquet game $\Ch(X)$.
\end{theorem}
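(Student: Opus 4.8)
The plan is to extract the strategy directly from the standard proof of item~3 of Theorem~\ref{thm:games}, tracking carefully that it can be made to depend only on the round number and the opponent's last move. First I would fix a complete quasi-metric $d$ compatible with the topology of $X$ (via Definition~\ref{def:quasi}) and a countable basis $\+B=\{B_n \mid n\in\N\}$. Recall that in the Choquet game, at round~$i$, player Empty plays a pair $(x_i,U_i)$ with $x_i\in U_i$ open, and Nonempty replies with an open $V_i$ with $x_i\in V_i\subseteq U_i$; for the strategy to be Markov it must be a function of $i$ and of $(x_i,U_i)$ alone. The key point is that, by completeness of the quasi-metric, convergence of the $V_i$'s to a point of $\bigcap_i V_i$ follows as soon as Nonempty can guarantee that the $\widehat{d}$-diameters of the $V_i$'s shrink to $0$ and that the chosen neighbourhoods form a decreasing sequence of basic-type sets containing a common point; the decreasing/nesting structure is already enforced by the rules of the game since $x_{i+1}\in U_{i+1}\subseteq V_i$.

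The definition of the Markov strategy: at round~$i$, presented with $(x_i,U_i)$, Nonempty chooses $V_i$ to be some fixed basic open set (or finite intersection of open balls) such that $x_i\in V_i\subseteq U_i$ and such that the $\widehat{d}$-diameter of $V_i$ is at most $2^{-i}$ — concretely, pick the least $n$ (in the fixed enumeration of $\+B$, or a canonical pair coding an open ball) with $x_i\in B_n$, $B_n\subseteq U_i$, and $\widehat{d}\text{-}\mathrm{diam}(B_n)\leq 2^{-i}$, and set $V_i=B_n$. Such a set exists because the open balls of radius $<2^{-i-1}$ around $x_i$ form a neighbourhood basis at $x_i$ and the basis $\+B$ refines the topology. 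This prescription depends only on $i$ and on the last move $(x_i,U_i)$ of Empty, so it is a Markov strategy. One must also carry along a choice of centre/radius so that the resulting sequence of radii is summable, ensuring the sequence $(x_i)$ is $d$-Cauchy.

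Next I would verify that this strategy is winning and convergent. Given any play following it, the points $x_0,x_1,\dots$ satisfy $x_{i+1}\in V_i$, and since $\widehat{d}\text{-}\mathrm{diam}(V_i)\to 0$ while the $V_i$ are nested (because $x_{i+1}\in U_{i+1}\subseteq V_i$, hence $V_{i+1}\subseteq U_{i+1}\subseteq V_i$), the sequence $(x_i)$ is $\widehat{d}$-Cauchy; by completeness it converges to some $x\in X$. Then $x\in\overline{V_i}$ for every $i$; but in fact, since $\widehat{d}\text{-}\mathrm{diam}(V_i)\leq 2^{-i}$ and $x$ is the $\widehat d$-limit of points eventually inside $V_j\subseteq V_i$ for $j\geq i$, one checks $x\in V_i$ using that $V_i$ contains a $\widehat d$-ball of the relevant radius around $x_i$ — here the asymmetry of $d$ needs a moment's care, which is exactly why the diameter is measured in the symmetrization $\widehat d$. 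Thus $x\in\bigcap_i V_i\neq\emptyset$, so Nonempty wins, and since the $\widehat d$-diameters of the $V_i$ shrink to $0$, the $V_i$ form a neighbourhood basis at $x$, so the strategy is convergent.

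The main obstacle I anticipate is the asymmetry of the quasi-metric: in an ordinary metric space ``diameter $\to 0$ plus nested plus complete'' immediately gives a point in the intersection, but here one must be careful that ``small $\widehat d$-diameter'' is what makes the Cauchy sequence converge inside each $V_i$ rather than merely in its closure, and that the basic sets $B_n$ can indeed be chosen with small $\widehat d$-diameter and still be open in the (possibly non-$T_2$) topology generated by the one-sided balls of $d$. Resolving this amounts to recalling de~Brecht's observation that every quasi-Polish space embeds as a $\tbPi^0_2$ subspace of $\PN$ (Theorem~\ref{thm:deBrecht}(1)) — or equivalently working with the explicit ball basis — so that the needed refinement of $\+B$ by $\widehat d$-small open sets is available; everything else is the routine bookkeeping already implicit in the proof of Theorem~\ref{thm:games}(3).
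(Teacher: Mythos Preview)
Your proposal has a genuine gap at the very place you flag as ``the main obstacle'': the instruction to pick $V_i$ with $\widehat d$-diameter at most $2^{-i}$ is in general impossible to carry out. In a quasi-metric space that is not metric, the $d$-open sets (which are the only open sets you are allowed to play) can all have large $\widehat d$-diameter around certain points. Concretely, in $\PN$ with the quasi-metric of Example~\ref{ex:quasiPolish}, every Scott-open neighbourhood of the point $\{5,10,15\}$ contains some basic set $O_B$ with $B\subseteq\{5,10,15\}$; such an $O_B$ contains both $B$ and $B\cup\{0\}$, and $\widehat d(B,B\cup\{0\})=2^{0}=1$. So this point has no open neighbourhood of $\widehat d$-diameter below~$1$, and your strategy is undefined from round~$1$ onward whenever Empty plays such a point. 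Your proposed rescue --- pass to the $\tbPi^0_2$ embedding into $\PN$ --- does not help, since the counterexample already lives in $\PN$ itself.

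What does work with the quasi-metric alone is to use one-sided $d$-balls with a safety margin: set $V_i=U_i\cap B_d(x_i,r_i)$ where $r_i\leq 2^{-i}$ is chosen (depending only on $(x_i,U_i)$) so that $B_d(x_i,2r_i)\subseteq U_i$. Then the asymmetric Cauchy condition of Definition~\ref{def:quasi}(2) gives a $\widehat d$-limit $x$ with $d(x_n,x)\leq r_n<2r_n$, hence $x\in U_n$ for every $n$, whence $x\in\bigcap_n U_n=\bigcap_n V_n$; convergence of the $V_i$'s then follows from $d(x,x_i)\to 0$. This is the missing idea your write-up does not supply. The paper itself does not argue via the quasi-metric at all: it cites de~Brecht for Theorem~\ref{thm:quasiPolish strategies} and then, in Theorem~\ref{thm:quasiPolish are approx}, obtains the (stronger) stationary convergent strategy by passing through the $\tbPi^0_2$-in-$\PN$ description and a clause-solving approximation relation, converting that relation into a stationary strategy via Theorem~\ref{thm:approx games}.
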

We shall improve this last result (replacing Markov by stationary),
cf. Corollary~\ref{coro:Choquet games convergent}.

\section{Approximation Spaces: the Spaces of Choquet Games}
\label{s:app}

\subsection{Approximation Spaces}
\label{ss:approximation spaces}

We introduce another class of topological spaces:
approximation spaces.
They include all continuous domains, all Polish spaces 
and, in fact, all quasi-Polish spaces. 
The definition is based on an {\it approximation relation}
 which formalizes a containment relation between basic open sets.
This containment relation ensures that inclusion-decreasing chains have a non empty intersection; however,  this intersection may not be reduced to a singleton set.
An  example of an approximation relation 
is obtained by lifting to basic open sets the way-below relation in a dcpo.
We borrowed the notation $\ll$ from this particular example.

Theorem \ref{thm:coincide} proves that a subclass of  second-countable
approximation spaces coincides with the class of quasi-Polish spaces. 
This gives an ``\`a la domain" characterization of quasi-Polish spaces.
Whether the notion of approximation space captures a substantial part
of the rich theory developed by de Brecht for quasi-Polish spaces
is a question still to be investigated.
In the next sections, we just  show two pleasant properties:
a $\tbPi^0_2$ Baire property and Hausdorff's characterization
of $\tbDelta^0_2$.

\begin{definition}\label{def:approximation}
Let $E$ be a topological space.

\begin{itemize}

\item 
An {\it approximation relation} for $E$ is a binary relation
$\ll$ on some topological basis $\+B$ such that, for all $U,V,T\in\+B$,
\begin{enumerate}
\item[(1)]\
If $U\ll V$ then $V\subseteq U$,
\item[(2)]\
If $U\subseteq T$ and $U\ll V$ then $T\ll V$
(in particular, $\ll$ is transitive),
\item[(3)]\
for all $x\in U$, there exists $W\in\+B$ such that $x\in W$ and $U\ll W$,
\item[(4)]\
For every sequence ($U_i)_{i\in\N}$ of sets in $\+B$
such that $U_i \ll U_{i+1}$ for all $i$,
the intersection set $\bigcap_{i\in\N}U_i$ is non empty.
\end{enumerate}

\item {\em Convergent approximation relations}
are obtained by strengthening condition (4) to
\begin{enumerate}
\item[(4${}^+$)]\ \
Every sequence ($U_i)_{i\in\N}$ of sets in $\+B$
such that $U_i \ll U_{i+1}$ for all $i$,
is a neighborhood basis of some $x\in\bigcap_{i\in\N}U_i$
(i.e.  each open set containing $x$ also contains some $U_i$).
\end{enumerate}

\item {\em An approximation relation is  hereditary} if,
for every closed subset $C$ of $E$,
$$\ll_C\ \ =\ \ \{(C\cap U,C\cap V) \mid
              U\ll V\textit{ and }C\cap U,C\cap V\neq\emptyset\}$$
is an approximation relation for the subspace $C$.              

\item A space $E$ is an {\em  approximation space}
(respectively {\em convergent approximation space},
respectively {\em hereditary approximation space})
if it admits an approximation
(respectively convergent approximation, respectively hereditary approximation) relation.
\end{itemize}
\end{definition}

\begin{example}\label{ex:trivial approx}
Every subspace $D$ of the Scott domain $\PN$ which is an upset
(i.e. if $X\subseteq Y$ and $X\in D$ then  $Y\in D$)
is a trivial approximation space:
containment is an approximation relation on the basis
$\{D\cap O_A\mid\alpha\in\FIN\}$ where $O_A=\{X\mid A\subseteq X\}$
since $\N$ belongs to all $D\cap O_A$'s.
In particular, the dcpo $(\INF,\subseteq)$ is an approximation space
which is neither Polish nor a continuous domain.
In $\PN$, inclusion is a
convergent approximation relation since
$\bigcup_{i\in\N}A_i\in\bigcap_{i\in\N}O_{A_i}$
and the $O_{A_i}$'s converge to $\bigcup_{i\in\N}A_i$.
\end{example}
Approximation relations exist on all topological basis or on none.

\begin{lemma}\label{l:approximation and basis}
Let $\+B$ and $\+C$ be topological basis of a space $E$.
If there exists an
(respectively  convergent; respectively  hereditary) 
approximation relation on $\+B$ 
then there exists one on $\+C$.
\end{lemma}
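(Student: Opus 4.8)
The plan is to transfer a given approximation relation $\ll_{\+B}$ on $\+B$ to a relation on $\+C$ by composing with inclusions in both directions. Concretely, for $U', V' \in \+C$ I would declare $U' \ll_{\+C} V'$ to hold when there exist $U, V \in \+B$ with $U' \subseteq U$, $U \ll_{\+B} V$, and $V \subseteq V'$ --- but this naive definition does not obviously give condition (3) back, so a slightly more careful version is needed. The right move is to build a ``sandwich'': declare $U' \ll_{\+C} V'$ iff there exist $U \in \+B$ and $V \in \+B$ with $U' \subseteq U \ll_{\+B} V$ and $V \subseteq V'$, \emph{and additionally} for every $x \in U'$ one can find such a witness with $x \in V$. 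In fact the cleanest route, which I would pursue first, is to interleave: given a $\ll_{\+C}$-increasing sequence one wants to extract an interleaved $\ll_{\+B}$-increasing sequence whose intersection is contained in the original one.

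The key steps, in order. \textbf{(i)} Verify conditions (1) and (2) for $\ll_{\+C}$: condition (1) is immediate since $U' \subseteq U \ll_{\+B} V$ forces $V \subseteq U$, whence $V' \supseteq V$ need not directly give $V' \subseteq U'$ --- so actually condition (1) requires the definition to be set up so that $V' \subseteq U'$; I would therefore define $U' \ll_{\+C} V'$ to mean $V' \subseteq U'$ \emph{and} there is a $\+B$-witness $U \ll_{\+B} V$ with $U' \subseteq U$, $V \subseteq V'$. With this, (1) is built in and (2) follows from transitivity of $\subseteq$ and property (2) of $\ll_{\+B}$. \textbf{(ii)} Verify condition (3): given $U' \in \+C$ and $x \in U'$, since $\+B$ is a basis pick $U \in \+B$ with $x \in U \subseteq U'$; apply (3) for $\ll_{\+B}$ to get $W \in \+B$ with $x \in W$ and $U \ll_{\+B} W$; since $\+C$ is a basis, pick $W' \in \+C$ with $x \in W' \subseteq W$; then the witness $U \ll_{\+B} W$ shows $U' \ll_{\+C} W'$ after checking $W' \subseteq U'$ (which holds as $W' \subseteq W \subseteq U \subseteq U'$ --- wait, $W \subseteq U$ needs $U \ll_{\+B} W \Rightarrow W \subseteq U$, true by (1)). \textbf{(iii)} Verify condition (4): given $(U'_i)$ with $U'_i \ll_{\+C} U'_{i+1}$, take $\+B$-witnesses $U_i \ll_{\+B} V_i$ with $U'_i \subseteq U_i$, $V_i \subseteq U'_{i+1}$; then $V_i \subseteq U'_{i+1} \subseteq U_{i+1} \ll_{\+B} V_{i+1}$, so by property (2) of $\ll_{\+B}$ we get $V_i \ll_{\+B} V_{i+1}$; hence $(V_i)$ is $\ll_{\+B}$-increasing, so $\bigcap_i V_i \neq \emptyset$ by (4) for $\ll_{\+B}$; and $\bigcap_i V_i \subseteq \bigcap_i U'_{i+1} = \bigcap_i U'_i$, giving (4). \textbf{(iv)} For the convergent case, observe that the sequence $(V_i)$ is cofinal (in the inclusion-reversed sense) with $(U'_i)$ --- indeed $V_i \subseteq U'_{i+1} \subseteq U_{i+1}$ and $U'_{i+1} \supseteq V_{i+1}$, so the two sequences are mutually cofinal in the filter sense --- hence if $(V_i)$ is a neighborhood basis of some $x$ then so is $(U'_i)$. \textbf{(v)} For the hereditary case, run the same argument inside each closed subspace $C$, using that the relativized relation $\ll_C$ on $\{C \cap U : U \in \+B\}$ is already an approximation relation and that $\{C \cap U' : U' \in \+C\}$ is a basis of $C$; one checks the construction commutes with intersecting with $C$.

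The main obstacle I anticipate is getting condition (3) to survive: it is a genuine ``density'' requirement saying every point of every basic set has an arbitrarily-way-below basic neighborhood, and a careless definition of $\ll_{\+C}$ (e.g. one that only records the existence of \emph{some} $\+B$-sandwich without tracking points) can fail it. The fix above --- shrinking the $\+B$-witness $W$ down to a $\+C$-set $W'$ containing the given point $x$ --- resolves it, but one must be slightly careful that $W' \subseteq U'$, which is where property (1) of $\ll_{\+B}$ (giving $W \subseteq U$) gets used. A secondary subtlety is the hereditary case in step (v): one needs the relativized witnesses $C \cap U_i \ll_C C \cap V_i$ to be non-trivial (i.e.\ the intersections with $C$ nonempty), which follows because $x \in \bigcap V_i$ can be taken in $C$ when the original sets meet $C$, but this bookkeeping should be spelled out.
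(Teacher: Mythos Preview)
Your overall strategy is right and matches the paper's, but the inclusions in your definition of $\ll_{\+C}$ are reversed, and this breaks several of your verification steps. You set $U' \ll_{\+C} V'$ to mean (with the added clause $V'\subseteq U'$) that there exist $U,V\in\+B$ with $U'\subseteq U\ll_{\+B} V$ and $V\subseteq V'$; the chain is then $V\subseteq V'\subseteq U'\subseteq U$. The correct sandwich goes the other way: one needs $U\subseteq U'$ and $V'\subseteq V$, i.e.\ the $\+B$-witness sits \emph{between} the two $\+C$-sets, giving $V'\subseteq V\subseteq U\subseteq U'$. With this definition condition~(1) is automatic (no need to impose $V'\subseteq U'$ by hand), condition~(2) is immediate from the inclusion $U\subseteq U'\subseteq T'$, and your own argument for condition~(3) --- where you pick $U\in\+B$ with $x\in U\subseteq U'$ and then $W'\subseteq W$ --- is already written for this corrected version, not for the one you stated.

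The concrete failure with your stated definition shows up in step~(iv). From $U'_i\subseteq U_i$ and $V_i\subseteq U'_{i+1}$ you get $V_i\subseteq U_{i+1}\ll_{\+B} V_{i+1}$ and then invoke property~(2) to conclude $V_i\ll_{\+B} V_{i+1}$. But property~(2) says that the \emph{larger} set on the left inherits the relation: from $A\subseteq T$ and $A\ll B$ one gets $T\ll B$. You would need $U_{i+1}\subseteq V_i$, whereas you only have $V_i\subseteq U_{i+1}$. With the corrected inclusions ($U_i\subseteq U'_i$ and $U'_{i+1}\subseteq V_i$) one gets $U_{i+1}\subseteq U'_{i+1}\subseteq V_i$ and then property~(2) applies legitimately. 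Once the definition is fixed, your arguments for the convergent and hereditary cases go through as written; this is exactly the paper's proof.
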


\begin{proof}
Let $\ll$ be an approximation relation on $\+B$.
Consider the relation $\lll$ on $\+C$ such that, for any $C,D\in\+C$,
$$
(*)\qquad
C\lll D \iff
\exists U,V\in\+B\ (C \supseteq U\ll V\supseteq D)\ .
$$
Let us check that $\lll$ satisfies conditions (1) to (4) of
Definition~\ref{def:approximation}.
Conditions (1) and (2) are straightforward.
We now look at Condition (3).
Suppose $x\in C\in\+C$. Let $U\in\+B$ be such that $x\in U\subseteq C$.
Applying condition (3) for $\ll$, there exists $V\in\+B$ such that
$x\in V$ and $U\ll V$.
Let $D\in\+C$ be such that $x\in D$ and $D\subseteq V$.
Then $C\lll D$ so that condition (3) holds for $\lll$.
Finally, we check Condition (4).
Suppose $C_i\lll C_{i+1}$ for all $i$.
Let $U_i,V_i\in\+B$ be such that
$C_i \supseteq U_i\ll V_i\supseteq C_{i+1}$.
In particular, $V_i \supseteq U_{i+1}\ll V_{i+1}$
hence $V_i\ll V_{i+1}$ by condition (2) for $\ll$.
Applying condition (4) for $\ll$, we see that
$\bigcap_{i\in\N}C_i=\bigcap_{i\in\N}V_i\neq\emptyset$
so that condition (4) holds for $\lll$.

In case $\ll$ is convergent then the $V_i$'s are a neighborhood basis
of some $x\in\bigcap_{i\in\N}V_i$.
Hence so are the $C_{i+1}$'s and $\lll$ is convergent.
Suppose $\ll$ is hereditary. 
To see that $\lll$ is also hereditary, observe that,
for any closed subset $F$ of $E$,
the relation $\lll_F$ is obtained from $\ll_F$ via condition $(*)$.
\end{proof}

\begin{proposition}\label{p:convergent implies hereditary}
Any convergent approximation relation is hereditary convergent.
\end{proposition}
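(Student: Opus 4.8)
The plan is to fix a convergent approximation relation $\ll$ on some basis $\+B$ of $E$ and a closed set $C\subseteq E$, and to verify that the induced relation $\ll_C$ on the trace basis $\+B_C=\{C\cap U\mid U\in\+B,\ C\cap U\neq\emptyset\}$ satisfies conditions (1), (2), (3) and the \emph{convergent} form $(4^{+})$ of Definition~\ref{def:approximation}, applied in the subspace $C$. Recall $\ll_C=\{(C\cap U,C\cap V)\mid U\ll V,\ C\cap U\neq\emptyset\neq C\cap V\}$. Condition (1) is immediate: if $U\ll V$ then $V\subseteq U$ by (1) for $\ll$, so $C\cap V\subseteq C\cap U$. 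Condition (2) is essentially the same bookkeeping as in the proof of Lemma~\ref{l:approximation and basis}: if $C\cap U\subseteq C\cap T$ (a relation between the trace sets, not between $U$ and $T$) and $C\cap U\ll_C C\cap V$, we need a witness in $\+B$; replacing $U$ by $U\cup T\in\+B$ (or noting that $C\cap U\subseteq C\cap(U\cup T)$ and $U\cup T\supseteq U\ll V$, so $U\cup T\ll V$ by (2) for $\ll$) gives $C\cap T=C\cap(U\cup T)$ covered, hence $C\cap T\ll_C C\cap V$; transitivity follows. Condition (3): given $x\in C\cap U$ with $U\in\+B$, apply (3) for $\ll$ in $E$ to get $W\in\+B$ with $x\in W$ and $U\ll W$; since $x\in C\cap W$ this set is nonempty, so $C\cap U\ll_C C\cap W$ and $x\in C\cap W$, as required.

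The only real content is $(4^{+})$, and here is where the convergence hypothesis does the work. Suppose $(C\cap U_i)_{i\in\N}$ is a sequence in $\+B_C$ with $C\cap U_i\ll_C C\cap U_{i+1}$ for all $i$; by definition of $\ll_C$ there are $V_i,W_i\in\+B$ with $C\cap U_i=C\cap V_i$, $C\cap U_{i+1}=C\cap W_i$ and $V_i\ll W_i$. As in Lemma~\ref{l:approximation and basis}, since $C\cap W_i=C\cap U_{i+1}=C\cap V_{i+1}$ we may absorb and arrange a single chain $V_0\ll V_1\ll V_2\ll\cdots$ in $\+B$ with $C\cap V_i=C\cap U_i$ for every $i$ (using condition (2) for $\ll$ to splice $W_i$ and $V_{i+1}$ the way it is done there — the one subtlety is that $W_i$ and $V_{i+1}$ need only agree \emph{on $C$}, so one should pass to $W_i\cup V_{i+1}\in\+B$, which lies above $V_{i+1}$, hence $\ll W_{i+1}$, and still has the same trace on $C$). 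Now invoke $(4^{+})$ for $\ll$ in $E$: the $V_i$'s form a neighbourhood basis of some point $x\in\bigcap_i V_i$. The key point is that $x$ actually lies in $C$: indeed $x$ is a limit of the $V_i$'s in the sense that every open set containing $x$ contains some $V_i$, and since each $C\cap V_i=C\cap U_i\neq\emptyset$, every neighbourhood of $x$ meets $C$, so $x\in\overline{C}=C$ because $C$ is closed. Hence $x\in\bigcap_i(C\cap V_i)=\bigcap_i(C\cap U_i)$. Finally, the traces $C\cap V_i$ form a neighbourhood basis of $x$ \emph{in the subspace $C$}: any relatively open neighbourhood of $x$ in $C$ has the form $C\cap O$ with $O$ open in $E$ and $x\in O$, and then $V_i\subseteq O$ for some $i$, so $C\cap V_i\subseteq C\cap O$. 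This establishes $(4^{+})$ for $\ll_C$, and since $C$ was an arbitrary closed subset, $\ll$ is hereditary (and the hereditary relations are again convergent), proving the proposition.

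I expect the main obstacle to be purely notational rather than conceptual: keeping straight the distinction between a relation among the \emph{sets $U_i\in\+B$} and a relation among their \emph{traces $C\cap U_i$}, since condition (2) of Definition~\ref{def:approximation} for $\ll_C$ refers to inclusions of traces while the witnesses must be produced in $\+B$. The clean way around this is, once and for all, to replace any basic set that only needs to be controlled "on $C$" by a larger basic set (a finite union with whatever it must dominate), using transitivity/monotonicity (2) of $\ll$ to see that the $\ll$-relations are preserved; this is exactly the splicing already carried out in the proof of Lemma~\ref{l:approximation and basis}, so the argument here is a routine adaptation of that lemma plus the observation that a limit of sets meeting the closed set $C$ must itself lie in $C$.
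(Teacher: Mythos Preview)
Your approach matches the paper's: the core of both arguments is that a $\ll$-increasing sequence whose members meet $C$ converges (by ($4^+$)) to a point $x$, and closedness of $C$ forces $x\in C$; the traces then form a neighbourhood basis of $x$ in $C$. The paper is much terser --- it addresses only ($4^+$) and starts directly from a $\ll$-increasing sequence $(U_i)$ with $C\cap U_i\neq\emptyset$, whereas you also treat (1)--(3) and correctly flag the need to splice witnesses.

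However, there is a gap in your write-up: you repeatedly assume that finite unions of basic open sets are basic, writing ``$U\cup T\in\+B$'' for condition (2) and ``$W_i\cup V_{i+1}\in\+B$'' for the splicing in ($4^+$). Definition~\ref{def:approximation} does not require $\+B$ to be closed under finite unions, and $\ll$ is only defined on $\+B$, so you cannot apply condition (2) of $\ll$ to these unions. The paper's proof glosses over the same issue by tacitly assuming a $\ll_C$-chain is witnessed by a single $\ll$-chain. Two remarks are in order. First, conditions (1) and (3) for $\ll_C$ are genuinely routine without any union trick: for (3), if $x\in C\cap U$ then applying (3) for $\ll$ to $x\in U$ gives $U\ll W$ with $x\in W$, and the pair $(U,W)$ \emph{itself} witnesses $C\cap U\ll_C C\cap W$. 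Second, for the downstream applications (Theorem~\ref{thm:Pi2 Baire}, Corollary to Theorem~\ref{thm:Hausdorff}) one can first invoke Lemma~\ref{l:approximation and basis} to transfer $\ll$ to a basis closed under finite unions, after which your splicing works verbatim --- but then one is showing that the closed subspace \emph{admits} a convergent approximation relation, not that the specific relation $\ll_C$ (as literally defined from the original $\+B$) satisfies condition (2).
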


\begin{proof}
Suppose $(U_i)_{i\in\N}$ is $\ll$-increasing and $C$ meets each $U_i$.
Since $\ll$ is a convergent approximation relation,
the $U_i$'s are a neighborhood basis of some
$x\in\bigcap_{i\in\N}U_i$.
Since the $U_i$'s meet $C$, $x$ is adherent to $C$, hence it is in $C$.
Also, the $U_i\cap C$'s are a neighborhood basis of $x$ in the subspace $C$.
\end{proof}

\subsection{Approximation Spaces versus quasi-Polish Spaces}
\label{ss:approx versus}

\begin{proposition}\label{p:approx}
Polish spaces and continuous domains are 
convergent approximation spaces.
\end{proposition}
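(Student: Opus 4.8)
The plan is to exhibit, on a natural basis, an explicit convergent approximation relation for each of the two cases. For a continuous domain $(D,\sqsubseteq)$ with basis $B$, I would work with the basic open sets $\uup z$ for $z\in B$ (which form a basis by Proposition~\ref{p:Scott topology}(3)), and define $\uup z \ll \uup z'$ to hold precisely when $z\ll z'$. Condition~(1) is immediate from $z\ll z'\Rightarrow z\sqsubseteq z'$ (Proposition~\ref{p:waybelow}(1)), which gives $\uup{z'}\subseteq\uup z$. Condition~(2) follows because $\uup z\subseteq \uup t$ amounts to every element way-above $z$ being way-above $t$; combined with the interpolation property this yields $t\ll z'$ whenever $z\ll z'$, hence $\uup t\ll\uup{z'}$. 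Condition~(3) is exactly the interpolation property of Proposition~\ref{p:Scott topology}(1) applied to the singleton $M=\{z\}$: given $x\in\uup z$, i.e. $z\ll x$, there is $z'\in B$ with $z\ll z'\ll x$, so $x\in\uup{z'}$ and $\uup z\ll\uup{z'}$. For the convergent version of Condition~(4), given a $\ll$-increasing sequence $\uup{z_0}\ll\uup{z_1}\ll\cdots$, i.e. $z_0\ll z_1\ll\cdots$, the sequence $(z_i)$ is directed, so it has a supremum $x=\sqcup_i z_i$; since each $z_i\ll z_{i+1}\sqsubseteq x$ one gets $x\in\bigcap_i\uup{z_i}$, and for any open $O\ni x$ one has $O=\bigcup_{w\in O}\uup w$, so $x\in\uup w\subseteq O$ for some $w$, and then $w\ll x=\sqcup z_i$ forces $w\sqsubseteq z_i$ for some $i$, giving $\uup{z_i}\subseteq\uup w\subseteq O$. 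So $(\uup{z_i})_i$ is a neighbourhood basis of $x$ and (4${}^+$) holds.

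For the Polish case, I would fix a complete metric $d$ and take as basis $\+B$ the collection of open balls $B(x,r)$ with rational radius. The natural candidate is to declare $B(x,r)\ll B(y,s)$ when the closed ball $\overline{B}(y,s)$ is contained in $B(x,r)$ and moreover $s\le r/2$ (or some fixed summable shrinking schedule). Condition~(1) is then trivial. Condition~(2) is immediate since enlarging the first ball only helps containment. Condition~(3): given $z\in B(x,r)$, pick a small rational $s\le r/2$ with $\overline{B}(z,s)\subseteq B(x,r)$, possible by openness, so $B(x,r)\ll B(z,s)$ and $z\in B(z,s)$. For (4${}^+$): a $\ll$-increasing sequence gives nested closed balls whose radii are summable, hence a Cauchy sequence of centres; by completeness they converge to a point $x$ lying in every $\overline{B}(y_i,s_i)$, the radii shrink to $0$, so the balls form a neighbourhood basis of $x$ — here one should use the open balls $B(y_i,s_i)$ themselves but note $x\in\bigcap_i B(y_{i+1},s_{i+1})\subseteq\bigcap_i\overline{B}(y_i,s_i)$, and the shrinking radii still give a neighbourhood basis of $x$. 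Finally, by Lemma~\ref{l:approximation and basis}, since an approximation relation exists on one basis it transfers to any other, so it does not matter that we used a metric-dependent basis.

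The main obstacle I anticipate is Condition~(2) in the domain case: one must verify that the hypothesis "$\uup z\subseteq\uup t$", which is an inclusion of open sets and a priori weaker than the order relation $t\sqsubseteq z$, still suffices to propagate the way-below relation. The cleanest route is to observe that $\uup z\subseteq\uup t$ together with $z\ll z'$ means $z'\in\uup z\subseteq\uup t$, i.e. $t\ll z'$ directly — so in fact Condition~(2) reduces to: $z'\in\uup z$ and $\uup z\subseteq\uup t$ imply $z'\in\uup t$, which is just transitivity of $\subseteq$. So the obstacle dissolves once one phrases $\ll$ on basic opens via membership rather than via the underlying order. A secondary subtlety is making sure the chosen radius schedule in the Polish case is genuinely summable so that the centres form a Cauchy sequence; any geometric schedule $s_{i+1}\le s_i/2$ works, and one should state it explicitly in the definition of $\ll$ to avoid circularity.
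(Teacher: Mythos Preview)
Your plan matches the paper's proof closely in both cases: the paper also defines $\ll$ on the $\uup b$'s essentially via the way-below relation for continuous domains, and via closure-containment plus a halving condition for Polish spaces. Your domain argument is correct; in particular your eventual observation that condition~(2) reduces to $z'\in\uup z\subseteq\uup t$, hence $t\ll z'$, is exactly how the paper handles it (the paper phrases the relation as ``$U\ll V$ iff $V=\uup c$ for some $c\in U$'', which makes this step a one-liner and avoids any well-definedness worry about representations of $V$).

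There is, however, a genuine glitch in your Polish-case verification of condition~(2). You claim it is ``immediate since enlarging the first ball only helps containment,'' but your relation also carries the clause $s\le r/2$, and radii are \emph{not} intrinsic to an open ball as a set: from $B(x,r)\subseteq B(x',r')$ one cannot infer $r\le r'$ (for instance, in a bounded space the whole space is $B(x,r)$ for every sufficiently large~$r$). So given $U\subseteq T$ and a representation $U=B(x,r)$ witnessing $s\le r/2$, nothing forces the representations $T=B(x',r')$ to satisfy $s\le r'/2$. The paper sidesteps this by using diameters rather than radii: it sets $U\ll V$ iff $\overline{V}\subseteq U$ and $\textit{diam}(V)\le\textit{diam}(U)/2$. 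Since diameter is intrinsic to the set and monotone under inclusion, condition~(2) then really is immediate, and your argument for~(4${}^+$) carries over unchanged because the diameters still shrink geometrically.
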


\begin{proof}
{\it Case of Polish spaces.}
Let $\+B$ be the basis consisting of open balls centered in some fixed
countable dense set and having rational radius.
For $U,V\in\+B$, let $U\ll V$ if and only if
$\overline{V}\subseteq U$ and
$\textit{diam}(V) \leq \textit{diam}(U)/2\}$,
where  $\overline{V}$ is the topological closure of $V$
and \textit{diam} is the diameter.
All wanted conditions on $\ll$ are straightforward. 

{\it Case of continuous domains.}
Let $B$ be a basis in the sense of continuous domains.
The family
$\+B=\{\uup b\mid b\in B\textit{ and }\uup b\neq\emptyset\}$ is a topological basis.
Define the relation $\ll$ on $\+B$ as
$U\ll V$ if and only if $V=\uup c$ for some $c\in U$.
To check condition (1), observe that
if $U=\uup b$ then $c\in U$ yields $b\ll c$,
so that $U=\uup b\supseteq\uup c=V$.
As for condition (2),
if $W\supseteq U\ll V$ and $V=\uup c$ with $c\in U$
then $c\in W$ hence $W\ll V$.
As for condition (3), let $x\in U=\uup b$, i.e. $b\ll x$.
Using the interpolation property, let $c$ be such that
$b\ll c\ll x$ and set $W=\uup c$.
Then $x\in W$ and $c\in\uup b=U$, so that $U\ll W$.
Finally, for condition (4${}^+$), suppose $U_i\ll U_{i+1}$
for all $i\in\N$ and choose $b_{i+1}\in U_i$ such that
$U_{i+1}=\uup b_{i+1}$.
Since $b_{i+2}\in U_{i+1}=\uup b_{i+1}$,
we have $b_{i+1}\ll b_{i+2}$.
Let $x$ be the supremum of the $b_j$'s for $j\geq 1$.
Then $x\in\bigcap_{j\geq1}\uup b_j=\bigcap_{i\in\N}U_i$
which is therefore a non empty set.
Also, the $\uup b_j=U_j$'s, $j\geq1$,
are a basis of neighborhoods of $x$.
\end{proof}

Using de Brecht's Theorem~\ref{thm:deBrecht}
for second-countable spaces,
the above Proposition is partly subsumed by the next theorem. 

\begin{theorem}\label{thm:quasiPolish are approx}
Quasi-Polish spaces are convergent approximation spaces.
\end{theorem}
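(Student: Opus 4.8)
The plan is to use de Brecht's characterization (Theorem~\ref{thm:deBrecht}, item~1) that every quasi-Polish space is homeomorphic to a $\tbPi^0_2$ subspace of $\PN$ with the Scott topology, together with Theorem~\ref{thm:quasiPolish strategies} which provides a Markov convergent winning strategy for Nonempty in the Choquet game $\Ch(X)$. The cleanest route, however, is probably to bypass games entirely: since being a convergent approximation space is a property inherited from a suitable basis (Lemma~\ref{l:approximation and basis}), it suffices to exhibit a convergent approximation relation on \emph{one} basis of a given quasi-Polish $X$. I would first do this for $\PN$ itself -- but that is already Example~\ref{ex:trivial approx}, where inclusion on the $O_A$'s is a convergent approximation relation. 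So the real work is to descend from $\PN$ to a $\tbPi^0_2$ subspace, i.e.\ to show that the class of convergent approximation spaces is closed under passing to $\tbPi^0_2$ subspaces.

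First I would handle the two extreme cases of a $\tbPi^0_2$ subspace and then combine. For a \emph{closed} subspace $C$ of a convergent approximation space $E$: by Proposition~\ref{p:convergent implies hereditary} every convergent approximation relation is hereditary convergent, so $\ll_C$ is already a convergent approximation relation on $C$ -- closed subspaces cost nothing. For a $\bGd$ subspace $G=\bigcap_{n} O_n$ with each $O_n$ open: here I would build a new approximation relation on the basis $\+B_G=\{G\cap U\mid U\in\+B,\ G\cap U\neq\emptyset\}$ by declaring $G\cap U\lll G\cap V$ iff $U\ll V$ \emph{and} (say) $\overline{V}\subseteq O_n$ for the $n$-th step of an $\lll$-increasing chain -- more precisely, interleaving the original approximation condition with the requirement that the $n$-th set be contained in $O_n$. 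Conditions (1)--(3) follow from those for $\ll$ together with density of $G$ in the relevant opens; for condition (4${}^+$), an $\lll$-increasing chain $(G\cap U_i)$ gives a $\ll$-increasing chain $(U_i)$ converging in $E$ to some $x$, and the built-in clause $U_i\subseteq O_i$ forces $x\in\bigcap_i O_i=G$, while convergence of the $U_i$ to $x$ in $E$ yields convergence of the $G\cap U_i$ to $x$ in $G$. The general $\tbPi^0_2$ case is obtained by noting a $\tbPi^0_2$ subspace of $\PN$ is a $\bGd$ subspace of a closed subspace (or directly a countable intersection of differences of opens, handled analogously by interleaving three conditions).

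The key step I expect to be the main obstacle is making condition (3) -- the density/interpolation clause ``for all $x\in G\cap U$ there is $W$ with $x\in G\cap W$ and $G\cap U\lll G\cap W$'' -- work simultaneously with the extra containment clauses I am forcing into the definition of $\lll$. The difficulty is that the index $n$ (``which $O_n$ must I now be inside'') is not a property of a single basic set but of its position in a chain, so $\lll$ as I have sketched it is not literally a binary relation. The fix is standard but needs care: either (a) build the approximation relation only on a \emph{refined} basis whose elements carry their ``level'' as extra data -- e.g.\ work with pairs $(G\cap U, n)$ re-presented as genuine basic open sets, exploiting that a countably based space always has a basis closed under the operations needed -- or (b) observe that one only needs condition (4${}^+$) for chains, and weaken the bookkeeping so that $G\cap U\lll G\cap V$ simply requires $\overline{V}\subseteq O_k$ for \emph{some} $k$ larger than any used so far along with $U\ll V$, then check that any $\lll$-increasing chain is automatically cofinal through all the $O_n$. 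Approach (b) is cleaner and I would pursue it: the transitivity and interpolation conditions are then immediate from those of $\ll$ plus the fact that in a quasi-metric-complete space every point of $G\cap U$ has arbitrarily small basic neighborhoods with closure inside any prescribed $O_k$ containing it. Once that bookkeeping is pinned down, the verification of (1), (2), (3), (4${}^+$) is routine, and Lemma~\ref{l:approximation and basis} transports the result back to the given basis of the original quasi-Polish space, completing the proof.
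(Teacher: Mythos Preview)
Your overall strategy matches the paper's: use de~Brecht's embedding into $\PN$ and construct a convergent approximation relation directly on a $\tbPi^0_2$ subspace $\+A$. But the execution has genuine gaps at exactly the point you flag as ``the main obstacle,'' and your proposed fixes do not work.

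First, two smaller issues. The decomposition ``a $\tbPi^0_2$ subspace of $\PN$ is a $\bGd$ subspace of a closed subspace'' is false in the Scott topology: the set $\{X\mid 0\in X\Rightarrow 1\in X\}$ is $\co\tbD_2$, contains the top element $\N$ (so the only closed superset is $\PN$), yet is not $\bGd$ (its complement $O_{\{0\}}\setminus O_{\{1\}}$ contains $\{0\}$ but not $\emptyset$, hence meets no nonempty closed set and is not $\bFs$). Your parenthetical ``directly a countable intersection of differences of opens'' is the correct framing. Second, the closure condition $\overline{V}\subseteq O_k$ is vacuous in $\PN$: for any nonempty basic $O_A$ one has $\overline{O_A}=\PN$, since every $Y$ lies below $Y\cup A\in O_A$.

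The central gap is the one you yourself name: approach~(b) as written is not a binary relation --- ``some $k$ larger than any used so far'' refers to a chain, not to a pair $(U,V)$. You need a device that attaches to each basic open $U$ an \emph{intrinsic} index telling you which constraint must be discharged next. The paper does exactly this: writing $\+A=\bigcap_n(U_n\cup F_n)$ with $F_n=\PN\setminus O_{\alpha_n}$, it calls clause~$n$ a ``$U$-clause'' if $U\subseteq O_{\alpha_n}$ and ``$U$-solved'' if $U\subseteq O_\gamma$ for some $\gamma\in I_n$, then sets $n_U$ to be the least $U$-unsolved $U$-clause, and defines $U\ll V$ (roughly) as ``$V\subseteq U$ and $V$ solves clause $n_U$ (or activates and solves an earlier clause).'' This makes $\ll$ an honest binary relation, and the ``least unsolved'' mechanism is what forces any $\ll$-chain to eventually solve every clause, yielding condition~(4${}^+$). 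Your sketch gestures at interleaving but never produces this kind of canonical index attached to $U$ alone; without it, conditions~(2) and~(3) cannot be verified.
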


\begin{proof}
By de Brecht's result  stated in item 1 of Theorem~\ref{thm:deBrecht},
it suffices to show that any $\tbPi^0_2$ subspace $\+A$
of $\PN$ (with the Scott topology)
is a convergent approximation space.
In this proof we use Greek letters to denote finite sets.
For $\alpha\in\FIN$,
let $O_\alpha=\{X\mid \alpha\subseteq X\}$.
As a basis $\+B$ of the subspace $\+A$, we consider
those $B_\alpha=O_\alpha\cap\+A$, $\alpha\in\FIN$,
which are non empty.
Observe that  if $U\in\+B$, there may be infinitely many
$\alpha$'s such that $U=B_\alpha$.
The family $\+A$ is of the form
$\+A=\bigcap_{n\in\N}(U_n\cup F_n)$
where $U_n=\bigcup_{\alpha\in I_n}O_\alpha$ is open in $\PN$
and $F_n$ is closed in $\PN$.
Since $F_n$ is a countable intersection of closed sets of the
form $\PN\setminus O_\alpha$,
by merging this intersection with the global one,
we can reduce to the case where
$F_n=\PN\setminus O_{\alpha_n}$
with $\alpha_n\in\FIN$.
Then, 
$$X\in\+A \Leftrightarrow \forall n\ 
(\alpha_n\not\subseteq X \ \vee\
\exists \gamma\in I_n\ \gamma\subseteq X)
\Leftrightarrow \forall n\ 
(\alpha_n\subseteq X
\Rightarrow \exists \gamma\in I_n\ \gamma\subseteq X)\ .
$$
Let us call clause~$n$ the clause
$\alpha_n\subseteq X
\Rightarrow \exists \gamma\in I_n\ \gamma\subseteq X$.
Thus, a set $X\subseteq\N$ is in $\+A$ if and only if it
satisfies clause~$n$ for all $n$.
We introduce two notions:
\begin{itemize}

\item
Clause~$n$ {\em is a $U$-clause} if $U\subseteq O_{\alpha_n}$
(i.e. the premiss $\alpha_n\subseteq X$ of clause~$n$
is satisfied by all $X\in U$),

\item
Clause~$n$ {\em is $U$-solved}
if $U\subseteq O_\gamma$ for some $\gamma\in I_n$.
(i.e. the conclusion
$\exists \gamma\in I_n\ \gamma\subseteq X$ of clause~$n$
is satisfied by all $X\in U$ with the same witness $\gamma$).

\end{itemize}
Observe  that if $W\supseteq U$ then 
any $W$-clause is a $U$-clause and
any $W$-solved $W$-clause is a $U$-solved $U$-clause.
We denote by $n_U$ the least $n$ such that clause~$n$
is a $U$-unsolved $U$-clause or $+\infty$ if there is no
such clause.
We now define the relation $\ll$ on $\+B$~:
for $U,V\in\+B$,
$$
U\ll V \iff V\subseteq U\textit{ and }
\left\{\begin{array}{cl}
(i)&\textit{either $n_U=+\infty$}
\\
(ii)&\textit{or $n_U<+\infty$ and clause~$n_U$ is $V$-solved}
\\
(iii)&\textit{or $n_U<+\infty$ and, for some $m<n_U$,
clause~$m$ }
\\
&\textit{is not a $U$-clause
and is a $V$-solved $V$-clause.}
\end{array}
\right.
$$
We check conditions (1), (2), (3) and (4${}^+$) of Definition~\ref{def:approximation}.
Condition (1) is trivial.
Condition (2). Suppose $W\supseteq U\ll V$.
If $n_W=+\infty$ then $W\ll V$ holds by condition {\it(i)}.
So we shall suppose $n_W<+\infty$.
Suppose $n_W<n_U$.
Then clause $n_W$, being a $W$-clause hence a $U$-clause,
is $U$-solved (by definition of $n_U$),
hence it is also $V$-solved (since $U\supseteq V$),
so that $W\ll V$ holds by condition {\it(ii)}.
We now assume $n_U\leq n_W<+\infty$.
Since  $n_U$ is finite, $U\ll V$ cannot hold by condition {\it (i)}.
If $U\ll V$ holds by condition {\it(iii)} then the witnessing
clause $m$ is not a $U$-clause hence is not a $W$-clause,
so that $W\ll V$ holds by condition {\it(iii)}.
Suppose now that $U\ll V$ holds by condition {\it(ii)}.
If $n_W=n_U$ then $W\ll V$ also holds by condition {\it(ii)}.
Suppose $n_W>n_U$.
If clause $n_U$ were a $W$-clause then it would be $W$-solved
(by definition of $n_W$ and inequality $n_U<n_W$)
hence it would be a $U$-solved $U$-clause,
contradicting the definition of $n_U$.
Thus, clause $n_U$ is not a $W$-clause
and $W\ll V$ holds by condition {\it (iii)}.

Condition (3).
Let $X$ be in $U\in\+B$. 
If $n_U=+\infty$ then it suffices to set $V=U$.
Suppose now that $n_U<+\infty$.
Since clause~$n_U$ is a $U$-clause and $X\in U$,
we have $\alpha_{n_U}\subseteq X$
so that $X$ satisfies the premiss of clause $n_U$.
Then, $X$ being in $\+A$, satisfies all clauses,
in particular clause~$n_U$.
So, there exists some $\gamma\in I_{n_U}$
such that $\gamma\subseteq X$.
Now, $U\in\+B$ hence $U=\+A\cap O_\beta$ for
some $\beta\in\FIN$.
Set $V=U\cap O_\gamma$. Then
$V=\+A\cap O_\beta\cap O_\gamma
=\+A\cap O_{\beta\cup\gamma}\in\+B$.
Also, $X\in V$ since $X\in U$ and $\gamma\subseteq X$.
Finally, $V$ solves clause~$n_U$, hence~$U\ll V$
by condition (ii) of~$\ll$.

Condition (4).
Let $(U_i)_{i\in\N}$ be a sequence of
families in $\+B$ such that $U_i\ll U_{i+1}$ for all~$i$.
Let $\beta_i$ be any set in $\FIN$ such that
$U_i=\+A\cap O_{\beta_i}$.
Set $\delta_i=\bigcup_{j\leq i}\beta_j$
(so that the $\delta_i$'s are increasing with $i$).
Since $U_j\supseteq U_i$ for $j<i$,
we have $U_i=\+A\cap\bigcap_{j\leq i}O_{\beta_i}
=\+A\cap O_{\delta_i}$.
To finish the proof, we show that the set
$X=\bigcup_{i\in\N}\delta_i$ is in
the family $\bigcap_{i\in\N} U_i
=\+A\cap\bigcap_{i\in\N} O_{\delta_i}$
(which is therefore non empty).
Clearly, $X\in \bigcap_{i\in\N} O_{\delta_i}$.
To show that $X\in\+A$,
i.e. $X$ satisfies clause~$n$ for all $n$,
we argue by contradiction.
Suppose clause~$n$ is the first clause not satisfied by~$X$.
Then, $X$ satisfies clause $m$ for all $m<n$.
This means that if $\alpha_m\subseteq X$ then there is
$\gamma_{X,m}\in I_m$ such that $\gamma_{X,m}\subseteq X$.
Also, $X$ satisfies the premiss of clause~$n$ (but not its conclusion),
i.e. $\alpha_n\subseteq X$.
Since $\alpha_n$ and those $\alpha_m$'s, $\gamma_{X,m}$'s
included in $X$ (for $m<n$) are finite,
they are all included in $\delta_i$ for some~$i$.
Thus,

\begin{itemize}

\item
for each $m<n$, clause~$m$ is a $U_i$-clause
if and only if it is a $U_{i+1}$-clause
(if and only if $\alpha_m\subseteq X$),

\item
for each $m<n$, if clause~$m$ is a $U_i$-clause
then it is $U_i$-solved,

\item
clause~$n$ is a $U_i$-clause and is not $U_j$-solved for any $j\geq i$.
\end{itemize}
Then, clause~$n$ is the first $U_i$-unsolved $U_i$-clause
and no clause~$m$, $m<n$, can be a $U_{i+1}$-solved $U_{i+1}$-clause
without being a $U_i$-clause.
As a consequence, the assumed property $U_i\ll U_{i+1}$ 
necessarily comes from condition {\it (ii)} in the definition of $\ll$,
i.e. clause $n$ is  $U_{i+1}$-solved.
Since $X\in U_{i+1}$, clause $n$ is satisfied by~$X$. This is a contradiction.

Condition (4${}^+$).
To show that the $U_i$'s are a basis of neighborhoods of $X$
in $\+A$, it suffices to prove that, for all $\beta\in\FIN$
such that $X\in O_\beta$,
there exists $i$ such that
$\+A\cap O_\beta\subseteq\+A\cap O_{\delta_i}$.
Since $X=\bigcup_{i\in\N}\delta_i\in O_\beta$
we have $\beta\subseteq\delta_i$ for some $i$
hence $O_{\delta_i}\subseteq O_\beta$
and the wanted inclusion
$\+A\cap O_{\delta_i}\subseteq\+A\cap O_\beta$.
\end{proof}

\begin{example}\label{ex:Pinfini}
Applied to the quasi-Polish space $\INF$
and its basis $\{\+O_A\cap\INF \mid A\in\FIN\}$
(cf. Examples~\ref{ex:Scott examples}, and \ref{ex:domain}),
the previous proof gives an approximation relation which is not
the containment relation (cf. Example~\ref{ex:trivial approx}).
Clause~$n$ is $\exists j\geq n\ j\in X$ (i.e. $A_n$ is empty).
It is an $\+O_A\cap\INF$-clause and it is
$\+O_A\cap\INF$-solved if and only if $n\leq\max A$.
Then, for $A,B\in\FIN$, we have
$\+O_A\cap\INF\ll\+O_B\cap\INF$
if and only if
$A\subseteq B$ and $\max(A)<\max(B)$,
with the convention: $\max\emptyset=-1$.
\end{example}
The converse of Theorem~\ref{thm:quasiPolish are approx} is false.

\begin{proposition}\label{p:approx not hereditary}
There exists a Hausdorff approximation space with a countable basis
which is not a hereditary approximation space
hence neither convergent nor quasi-Polish
(by Proposition~\ref{p:convergent implies hereditary}
and Theorem~\ref{thm:quasiPolish are approx} ).
\end{proposition}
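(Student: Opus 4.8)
The plan is to take for $E$ the set $\R$ equipped with the topology $\tau$ generated by all Euclidean open sets together with the single additional open set $P=\R\setminus\Q$ of irrationals. Since $\tau$ refines the Euclidean topology it is Hausdorff, and it is second-countable: a countable basis $\+B$ is formed by all nonempty sets $(a,b)$ and all sets $(a,b)\cap P$ with $a,b$ rational and $a<b$. In $E$ the set $\Q$ is closed, its complement being the open set $P$, and it is nowhere dense, since every nonempty member of $\+B$ meets $P$. Moreover the topology that $E$ induces on $P$ is the ordinary topology of the irrationals, so $P$ is Polish, while the topology that $E$ induces on $\Q$ is the ordinary topology of the rationals, so $\Q$, regarded as a subspace of $E$, is metrizable, second-countable, and not Polish.

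First I would record that $\Q$ with its ordinary topology is not an approximation space. It is meager in itself, being the countable union of the nowhere dense singletons $\{q\}$, hence fails the Baire property; but every approximation space has the Baire property. Indeed, from an approximation relation $\ll$ on a basis $\+B$ of a space $F$ one obtains a winning strategy for player Nonempty in $\Ch(F)$ by answering Empty's move $(x,U)$ with a set $V\in\+B$ such that $x\in V$ and $U'\ll V$, where $U'\in\+B$ with $x\in U'\subseteq U$ is chosen by condition~(3); by condition~(2) the successive answers form a $\ll$-chain, whose intersection is nonempty by condition~(4), so Nonempty wins. By Remark~\ref{rk:ChversusBM} this is also a winning strategy in $\BM(F)$, whence $F$ has the Baire property by item~1 of Theorem~\ref{thm:games} (this is the easy direction of Theorem~\ref{thm:approx games}). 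Consequently, for every approximation relation $\ll$ on $E$ the induced relation $\ll_{\Q}$ on the closed subspace $\Q$ is not an approximation relation for $\Q$ (since $\Q$ is not an approximation space), so no approximation relation on $E$ is hereditary, i.e. $E$ is not a hereditary approximation space; then it is not a convergent approximation space by Proposition~\ref{p:convergent implies hereditary} and not quasi-Polish by Theorem~\ref{thm:quasiPolish are approx}.

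The remaining and principal task is to exhibit an approximation relation on $\+B$, showing that $E$ itself is an approximation space. Fix an approximation relation $\ll_P$ on the basis of all sets $(a,b)\cap P$ (with $a,b$ rational and $a<b$) of the Polish space $P$, which exists by Proposition~\ref{p:approx} and Lemma~\ref{l:approximation and basis}, and define $\ll$ on $\+B$ by three clauses: $(a,b)\ll(a',b')$ iff $[a',b']\subseteq(a,b)$; $(a,b)\cap P\ll(a',b')\cap P$ iff $(a,b)\cap P\ll_P(a',b')\cap P$; $(a,b)\ll(a',b')\cap P$ iff $(a',b')\cap P\subseteq(a,b)$; with no further pairs in $\ll$, so that in particular no set $(a,b)\cap P$ is ever $\ll$-below a set $(c,d)$. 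Conditions~(1) and~(3) of Definition~\ref{def:approximation} are immediate. For condition~(2) one uses that if $U\subseteq T$ in $\+B$ and $U$ contains a rational then $T$ has the form $(c,d)$; each case then reduces to monotonicity of Euclidean closures, to condition~(2) for $\ll_P$, or — when $U=(a,b)\cap P\ll V$ and $T=(c,d)\supseteq U$ — to the observation that $V\subseteq U\subseteq T$. For condition~(4), the structural point is that a $\ll$-chain can never return from a set $(a,b)\cap P$ to a set of the form $(c,d)$, because a $\ll$-chain is $\subseteq$-decreasing by condition~(1) and a subset of $P$ contains no rational; so a $\ll$-chain is a finite descending run of sets $(a_0,b_0)\supseteq\cdots\supseteq(a_k,b_k)$ followed by a possibly empty $\ll_P$-chain of sets contained in $(a_k,b_k)$. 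If the second run is nonempty the whole intersection equals that of the run and is nonempty by condition~(4) for $\ll_P$; if the chain consists only of sets $(a_i,b_i)$ with $[a_{i+1},b_{i+1}]\subseteq(a_i,b_i)$, then setting $c=\sup_i a_i$ one checks $a_i<c<b_i$ for all $i$, so $c\in\bigcap_i(a_i,b_i)$.

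The step I expect to be the main obstacle is exactly this verification of condition~(4): one must be certain that no legal $\ll$-chain can contract to a rational point, which is why the relation is designed so that every ``passage into the irrationals'' goes through the relation $\ll_P$ of the complete space $P$ and the chain is then forbidden from leaving $P$; keeping the interaction of the two layers coherent in condition~(2) is the delicate bookkeeping. A secondary point to double-check is that $E$ genuinely induces the standard, non-Polish topology on the closed set $\Q$, since this is what powers the non-hereditariness conclusion.
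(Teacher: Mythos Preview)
Your proposal is correct and follows essentially the same approach as the paper: the same space $(\R,\tau)$ with the irrationals adjoined as an open set, the same two-layer basis, the same idea of using compact-closure inclusion on the interval layer while borrowing an approximation relation from the Polish subspace $P$ of irrationals for the $(a,b)\cap P$ layer, and the same non-hereditariness argument via the failure of the Baire property on the closed subspace~$\Q$. The only cosmetic differences are that the paper allows arbitrary real endpoints, uses the marginally stronger transition clause $]a,b[\ll]c,d[\setminus\Q\iff a<c<d<b$, and appeals to Theorem~\ref{thm:Pi2 Baire} rather than to the Choquet-game route (Theorem~\ref{thm:approx games}/Remark~\ref{rk:ChversusBM}/Theorem~\ref{thm:games}) for the Baire property of approximation spaces.
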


\begin{proof}
Consider the topology $\tau$ on the reals
generated by the usual open sets
and the set $\R\setminus\Q$ of irrational numbers.
A topological basis $\+B$ is
$\{]a,b[, ]a,b[\setminus\Q \mid a<b\}$.
Consider the relation $\ll$ on $\+B$ defined by the following clauses:
$$\begin{array}{rclcl}
]a,b[&\ll&]c,d[&\iff&a<c<d<b \textit{\quad(i.e. $[c,d]\subset]a,b[$)}\\
]a,b[&\ll&]c,d[\setminus\Q&\iff&a<c<d<b\\
]a,b[\setminus\Q&\ll&]c,d[\setminus\Q&\iff&
\theta(]a,b[\setminus\Q)\ll_{\textit{Baire}}\theta(]c,d[\setminus\Q)
\end{array}$$
where $\theta$ is the usual homeomorphism (given by continued fractions)
between $\R\setminus\Q$ and the Baire space $\Baire$
and $\ll_{\textit{Baire}}$ is any approximation relation on the family
of images by $\theta$ of bounded open real intervals
(cf. Propositions~\ref{p:approx} infra and
\ref{l:approximation and basis} supra).
It is easy to check that $\ll$ makes $(\R,\tau)$
an approximation space.
However, Theorem~\ref{thm:Pi2 Baire} infra ensures that
$(\R,\tau)$ is not a hereditary approximation space
since $\Q$ is closed in $(\R,\tau)$ and, as a subspace, has the usual
topology induced by $\R$ hence fails the Baire property.
\end{proof}

\begin{problem}
Suppose a space is ($T_0$ or $T_1$) hereditary approximation space 
and has a countable topological basis. Is it convergent (hence quasi-Polish)?
\end{problem}

\subsection{Characterization as Spaces of Choquet Games}

The next theorem asserts that approximation relations
on a topological basis of $X$ are essentially normalized strategies
for player Nonempty in the Choquet game $\Ch(X)$.

\begin{theorem}\label{thm:approx games}
If $X$ is a topological space with a well-orderable topological basis
(in particular, if $X$ has a countable basis)
then $X$ is an approximation (respectively convergent approximation) space
if and only if player Nonempty has a stationary (respectively and convergent)
winning strategy in the Choquet game $\Ch(X)$.
\end{theorem}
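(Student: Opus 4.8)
The plan is to establish both directions of the equivalence, carrying the ``convergent'' refinement along in parallel; in each direction the content is a translation between an approximation relation $\ll$ on a basis and a stationary strategy $\sigma$ for player Nonempty.

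\emph{From an approximation space to a stationary strategy.} Starting from an approximation relation on some basis of $X$, I would first apply Lemma~\ref{l:approximation and basis} to transport it to the given well-orderable basis $\+B$ (the lemma preserves convergence), and fix a well-ordering of $\+B$. To a move $(x,U)$ of Empty (so $x\in U$ is open) the strategy $\sigma$ assigns: the least $W\in\+B$ (in the fixed well-ordering) with $x\in W\subseteq U$, then---using condition~(3) of Definition~\ref{def:approximation} at the point $x\in W$---the least $W'\in\+B$ with $x\in W'$ and $W\ll W'$; put $\sigma(x,U)=W'$. This is a legal move by condition~(1), which gives $W'\subseteq W\subseteq U$, and the two canonical choices make $\sigma$ depend on Empty's last move alone. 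For the winning property, in a run $(x_0,U_0),V_0,(x_1,U_1),V_1,\dots$ following $\sigma$, write $W_i,W'_i$ for the sets chosen at round~$i$, so $V_i=W'_i$ and $W_{i+1}\subseteq U_{i+1}\subseteq V_i=W'_i$; since also $W_{i+1}\ll W'_{i+1}$, condition~(2) yields $W'_i\ll W'_{i+1}$. Thus $(W'_i)_{i\in\N}$ is $\ll$-increasing, so condition~(4) gives $\bigcap_i V_i=\bigcap_i W'_i\neq\emptyset$; and if the relation is convergent, condition~(4${}^+$) says further that the $V_i=W'_i$ form a neighborhood basis of a point of this intersection, i.e. $\sigma$ is convergent.

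\emph{From a stationary strategy to an approximation space.} Given a stationary winning strategy $\sigma$ for Nonempty and any basis $\+B$ of $X$, I would define, for $U,V\in\+B$,
\[
U\ll V \ \iff\ \exists\, x\in V \ \ \exists\, U'\ \mbox{open} \ \ \bigl( x\in U'\subseteq U \ \mbox{ and } \ V\subseteq\sigma(x,U') \bigr).
\]
Conditions~(1) and~(2) of Definition~\ref{def:approximation} are then immediate---for~(2) the same witnesses $x,U'$ still serve after replacing $U$ by any $T\supseteq U$, since $U'\subseteq U\subseteq T$---and condition~(3) follows by shrinking $\sigma(x,U)$ to a basic open set containing $x$ and taking $U'=U$. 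For condition~(4): from a $\ll$-increasing sequence $(U_i)_{i\in\N}$, with witnesses $x_i,U'_i$ for $U_i\ll U_{i+1}$, the list of moves $(x_0,U'_0),\sigma(x_0,U'_0),(x_1,U'_1),\sigma(x_1,U'_1),\dots$ is a legal run of $\Ch(X)$ in which Nonempty follows $\sigma$ (legality because $x_{i+1}\in U'_{i+1}\subseteq U_{i+1}\subseteq\sigma(x_i,U'_i)$), so $\sigma$ winning gives $\bigcap_i\sigma(x_i,U'_i)\neq\emptyset$; and since $U_{i+1}\subseteq\sigma(x_i,U'_i)\subseteq U'_i\subseteq U_i$, this common value equals $\bigcap_i U_i$. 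If $\sigma$ is convergent, the $\sigma(x_i,U'_i)$'s are a neighborhood basis of a point of $\bigcap_i U_i$, hence so are the $U_i$'s (each $U_{i+1}$ sits inside $\sigma(x_i,U'_i)$), which is condition~(4${}^+$).

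\emph{Main obstacle.} The one step that is not routine is condition~(2) in the second direction: an arbitrary stationary strategy need not be monotone in its open-set argument, so the naive definition $U\ll V\iff\exists x\in V\,(V\subseteq\sigma(x,U))$ is not stable under enlarging $U$. The remedy is to consult $\sigma$ on an auxiliary open set $U'$ with $x\in U'\subseteq U$ rather than on $U$ itself, which restores the monotonicity required by~(2) while leaving~(1),~(3),~(4) intact. A secondary care point, in the first direction, is that the strategy must be a function of Empty's last move alone; this is exactly why the two auxiliary choices have to be made canonically, and where the well-orderability of the basis enters.
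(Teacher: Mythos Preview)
Your proof is correct and follows essentially the same route as the paper's: both directions translate between an approximation relation and a stationary strategy in the same way, with the well-ordering used to make canonical choices in the first direction and an auxiliary open set (basic in the paper, arbitrary open in your version) inserted in the definition of $\ll$ to secure condition~(2) in the second direction. Your explicit invocation of Lemma~\ref{l:approximation and basis} to move $\ll$ onto the well-orderable basis is a nice bit of extra care, and your diagnosis of the ``main obstacle''---that the naive definition fails monotonicity---pinpoints exactly the idea the paper uses.
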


\begin{proof}
Suppose $\ll$ is an approximation relation
on some topological basis $\+B$ of $X$.
Fix some well-ordering of $\+B$.
Define a stationary strategy $\sigma$ for player Nonempty
in the Choquet game $\Ch(X)$ as follows.
If $U$ is a non empty set and $x\in U$ then define
$\sigma(x,U)=B$
where, letting $C$ be least in $\+B$ such that
$x\in C\subseteq U$,
applying condition (3) of Definition~\ref{def:approximation},
$B$ is least in $\+B$ such that $C\ll B$ and $x\in B$.
If 
$$(x_0,U_0),B_0,(x_1,U_1),B_1,\ldots$$ 
is a play where
Nonempty follows this strategy $\sigma$, there are open
sets $C_0,C_1,\ldots$ in $\+B$ such that
$$U_0\supseteq C_0\ll B_0
        \supseteq U_1\supseteq C_1\ll B_1\ldots$$
Applying condition (2),
we see that $B_0\ll B_1\ll\ldots$.
Condition (4) of Definition~\ref{def:approximation} ensures that
$\bigcap_{i\in\N}U_i=\bigcap_{i\in\N}B_i\neq\emptyset$.
Thus, $\sigma$ is a winning strategy for player Nonempty.
In case $\ll$ is convergent, condition (4${}^+$) ensures that
$\sigma$ is convergent.

Conversely, suppose that $\tau$ is a winning stationary strategy
for Nonempty in the Choquet game $\Ch(X)$.
In particular, $x\in\tau(x,U)\subseteq U$ for all open set $U$
and $x\in U$.
Fix some basis $\+B$ of $X$
and define a relation $\ll$ on this basis as follows:
for $B,C\in\+B$
$$
B\ll C
\iff
\exists D\in\+B\ \exists x\in D\ 
(B\supseteq D
\ \wedge\ x\in C\subseteq \tau(x,D))\ .
$$
Let us check the four conditions of Definition~\ref{def:approximation}.
Condition (1) is trivial since
$C\subseteq\tau(x,D)\subseteq D\subseteq B$.
Condition (2) is obvious from the definition of $\ll$.
As for condition (3), if $x\in B$ then $x\in\tau(x,B)$ so that,
for any $C\in\+B$ such that
$x\in C\subseteq \tau(x,B)$, we have $B\ll C$.
We now show condition (4).
Suppose that $B_n\ll B_{n+1}$ for all $n$.
Let $D_n\in\+B$, $x_n\in D_n$ be such that
$B_n\supseteq D_n$
and $x_n\in B_{n+1}\subseteq \tau(x_n,D_n)$.
Observe that the infinite sequence
$$(x_0,D_0),\tau(x_0,D_0),(x_1,D_1),\tau(x_1,D_1),
(x_2,D_2),\tau(x_2,D_2),\ldots$$
is a legal play of the Choquet game $\Ch(X)$ in which
Nonempty follows his winning strategy $\tau$.
Since $\tau(x_n,D_n)\supseteq B_{n+1}
                                   \supseteq D_{n+1}
                                   \supseteq \tau(x_{n+1},D_{n+1})$,
we have
$\bigcap_{n\in\N}B_n=\bigcap_{i\in\N}D_n\neq\emptyset$.
In case $\tau$ is a convergent stationary strategy,
it is clear that $\ll$ is a convergent approximation relation.
\end{proof}

\begin{remark}
The above proof fails if we consider the Banach-Mazur
game in place of the Choquet game:
we fail to obtain condition (3) for the relation $\ll$ associated to $\tau$.
\end{remark}

\subsection{Quasi-Polish and Second Countable
Convergent Approximation Spaces Coincide }
\label{ss:coincide}

\begin{theorem}\label{thm:coincide}
Quasi-Polish spaces coincide with $T_0$ convergent
approximation spaces with a countable basis.
\end{theorem}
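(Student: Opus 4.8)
The plan is to prove the two inclusions separately, using the Choquet-game characterizations already in hand. For the direction "quasi-Polish $\Rightarrow$ $T_0$ convergent approximation space with countable basis", I would simply combine Theorem~\ref{thm:quasiPolish are approx} (quasi-Polish spaces are convergent approximation spaces) with the fact that every quasi-Polish space is second countable (Definition~\ref{def:quasi} together with Theorem~\ref{thm:kunzi}) and is $T_0$ (being quasi-metrizable, the condition $d(x,y)=d(y,x)=0 \Leftrightarrow x=y$ separates points in the Kolmogorov sense). So this half is essentially a bookkeeping exercise citing earlier results.

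The substantive direction is "$T_0$ convergent approximation space with countable basis $\Rightarrow$ quasi-Polish". Here the natural route is through the Choquet game. By Theorem~\ref{thm:approx games}, since $X$ has a countable basis, $X$ being a convergent approximation space means exactly that player Nonempty has a stationary convergent winning strategy in $\Ch(X)$; in particular Nonempty has a convergent winning strategy. Then by de Brecht's characterization (item 3 of Theorem~\ref{thm:games}), a space with a countable basis in which Nonempty has a convergent winning strategy in $\Ch(X)$ is quasi-Polish. So the chain is: countable basis $+$ convergent approximation relation $\xRightarrow{\ref{thm:approx games}}$ Nonempty has a stationary convergent winning strategy in $\Ch(X)$ $\Rightarrow$ Nonempty has a convergent winning strategy in $\Ch(X)$ $\xRightarrow{\ref{thm:games}(3)}$ $X$ is quasi-Polish. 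The $T_0$ hypothesis is what makes de Brecht's theorem applicable and is also visibly necessary: a non-$T_0$ space like an indiscrete two-point space is trivially a convergent approximation space but is not quasi-Polish.

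The main obstacle, such as it is, is making sure the hypotheses line up precisely with what de Brecht's theorem (Theorem~\ref{thm:games}(3), i.e.\ his Theorem~51) demands: that theorem as stated only asks for a countable basis and a convergent winning strategy for Nonempty, with no separation axiom, so I should double-check whether $T_0$ is already subsumed or genuinely needed on that side. If de Brecht's statement truly has no separation hypothesis, then $T_0$ is needed only to rule out pathological non-$T_0$ approximation spaces from being falsely called quasi-Polish, and the forward implication goes through verbatim; I would then remark explicitly that $T_0$ cannot be dropped. A secondary point to verify is that "convergent approximation space" as defined (admitting \emph{some} convergent approximation relation on \emph{some} basis) transfers correctly — but Lemma~\ref{l:approximation and basis} already guarantees the choice of basis is immaterial, and Theorem~\ref{thm:approx games} is stated for well-orderable (in particular countable) bases, so no gap remains. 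I expect the whole proof to be about five lines, essentially a diagram of implications among Theorems~\ref{thm:games}, \ref{thm:quasiPolish are approx}, and~\ref{thm:approx games}.
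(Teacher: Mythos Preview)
Your proposal is correct and follows essentially the same route as the paper: one direction by Theorem~\ref{thm:quasiPolish are approx}, the other by composing Theorem~\ref{thm:approx games} with item~3 of Theorem~\ref{thm:games}. You are in fact slightly more careful than the paper's own three-line proof, since you spell out why quasi-Polish spaces are $T_0$ and second countable and correctly flag that the $T_0$ hypothesis is genuinely needed (the indiscrete two-point example shows it cannot be dropped, so de~Brecht's characterization must be read with $T_0$ implicit).
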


\begin{proof}
By Theorem~\ref{thm:quasiPolish are approx},
quasi-Polish spaces are convergent approximation spaces.
If $X$ is a convergent approximation space then, 
using Theorem~\ref{thm:approx games},
Nonempty has a convergent winning strategy
in the Choquet game $\Ch(X)$.
By de Brecht's result (cf. item 3 of Theorem~\ref{thm:games}),
if $X$ is also $T_0$ and has a countable basis then it is quasi-Polish.
\end{proof}
Putting together Theorems~\ref{thm:coincide}, \ref{thm:approx games}
and \ref{thm:quasiPolish strategies} we can 
complement Corollary~\ref{coro:Markov stationary}.

\begin{corollary}\label{coro:Choquet games convergent}
Let $X$ be a $T_0$ space with a countable basis.
The following conditions  are equivalent:
\begin{enumerate}
\item Nonempty has a convergent wining strategy
in the Choquet game $\Ch(X)$,
\item Nonempty has a stationary convergent wining strategy
in the Choquet game $\Ch(X)$,
\item $X$ is a quasi-Polish space.
\end{enumerate}
\end{corollary}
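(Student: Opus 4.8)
The plan is to deduce Corollary~\ref{coro:Choquet games convergent} purely by stitching together results already proved, since each of the three implications is essentially a citation. First I would observe that the corollary is a statement about a fixed $T_0$ space $X$ with a countable basis, so in particular $X$ has a well-orderable topological basis and Theorem~\ref{thm:approx games} applies to it.

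The implication $(3)\Rightarrow(2)$: if $X$ is quasi-Polish then by Theorem~\ref{thm:quasiPolish are approx} it is a convergent approximation space, and then by the forward direction of Theorem~\ref{thm:approx games} (applied to a convergent approximation relation on some basis of $X$) player Nonempty has a stationary convergent winning strategy in $\Ch(X)$. Alternatively one could invoke Theorem~\ref{thm:quasiPolish strategies} to get a Markov convergent strategy and then note that Theorem~\ref{thm:approx games} upgrades it; but the cleanest route is through Theorem~\ref{thm:quasiPolish are approx} followed by Theorem~\ref{thm:approx games}. The implication $(2)\Rightarrow(1)$ is trivial, since a stationary strategy is in particular a strategy and the convergence condition is unchanged. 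Finally $(1)\Rightarrow(3)$: if Nonempty has a convergent winning strategy in $\Ch(X)$, then by the converse direction of Theorem~\ref{thm:approx games}, $X$ admits a convergent approximation relation, i.e.\ $X$ is a convergent approximation space; since $X$ is moreover $T_0$ with a countable basis, Theorem~\ref{thm:coincide} gives that $X$ is quasi-Polish.

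There is no genuine obstacle here — the whole point is that Theorem~\ref{thm:approx games} (for well-orderable bases) plus Theorem~\ref{thm:coincide} plus Theorem~\ref{thm:quasiPolish are approx} already isolate exactly the equivalence of ``Nonempty has a convergent winning strategy'' with ``convergent approximation space'' with ``quasi-Polish'', and the only added content is that the strategy can be taken \emph{stationary}, which is the non-trivial direction of Theorem~\ref{thm:approx games}. So the ``hard part'' was already done in proving Theorem~\ref{thm:approx games}; the corollary is the bookkeeping that records it alongside Galvin \& Telg\'arsky's Corollary~\ref{coro:Markov stationary} for Banach--Mazur games. I would therefore write the proof as a three-line chase of implications with the appropriate cross-references, making sure to note at the outset that the countable-basis hypothesis supplies the well-orderability needed for Theorem~\ref{thm:approx games} and the $T_0$ hypothesis is what Theorem~\ref{thm:coincide} needs.

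\begin{proof}
Since $X$ has a countable basis, it has a well-orderable one, so
Theorem~\ref{thm:approx games} applies to $X$.

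$(3)\Rightarrow(2)$: If $X$ is quasi-Polish then, by
Theorem~\ref{thm:quasiPolish are approx}, it is a convergent
approximation space, and Theorem~\ref{thm:approx games} then provides a
stationary convergent winning strategy for Nonempty in $\Ch(X)$.

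$(2)\Rightarrow(1)$: Immediate, as a stationary convergent winning
strategy is in particular a convergent winning strategy.

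$(1)\Rightarrow(3)$: If Nonempty has a convergent winning strategy in
$\Ch(X)$ then, by Theorem~\ref{thm:approx games}, $X$ is a convergent
approximation space; being also $T_0$ with a countable basis, it is
quasi-Polish by Theorem~\ref{thm:coincide}.
\end{proof}
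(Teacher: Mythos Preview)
Your implications $(3)\Rightarrow(2)$ and $(2)\Rightarrow(1)$ are fine, but the argument for $(1)\Rightarrow(3)$ has a gap. You write that from a convergent winning strategy, Theorem~\ref{thm:approx games} yields that $X$ is a convergent approximation space. But Theorem~\ref{thm:approx games} only says that $X$ is a convergent approximation space if and only if Nonempty has a \emph{stationary} convergent winning strategy; in its converse direction the proof explicitly starts from a stationary $\tau$ and uses stationarity to build the relation $\ll$. Condition~(1) only gives you an arbitrary convergent winning strategy, so Theorem~\ref{thm:approx games} does not apply, and you cannot conclude that $X$ is a convergent approximation space this way.

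The correct route for $(1)\Rightarrow(3)$ is to invoke de~Brecht's characterization directly: item~3 of Theorem~\ref{thm:games} says that a $T_0$ space with a countable basis is quasi-Polish if and only if Nonempty has a convergent winning strategy in $\Ch(X)$. This is exactly what the proof of Theorem~\ref{thm:coincide} itself uses at the last step, so you may either cite Theorem~\ref{thm:games} or point to that part of the proof of Theorem~\ref{thm:coincide}. With this fix, your chain of citations matches the paper's ``putting together Theorems~\ref{thm:coincide}, \ref{thm:approx games} and~\ref{thm:quasiPolish strategies}'', and the rest of your write-up is correct.
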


\subsection{The $\tbPi^0_2$ Baire Property in Approximation Spaces}
\label{ss:Pi2Baire}

The Baire property holds in Polish spaces and
in compact  $T_2$  (i.e. Hausdorff) spaces, 
and it is also true in quasi-Polish spaces \cite{deBrecht2011}:
the intersection of countably many dense open sets is dense.
It trivially implies that the intersection of countably many
dense $\bGd$ sets is dense.
The next result is a formulation of the Baire property
for spaces where the $\tbPi^0_2$ and $\bGd$ classes do not coincide.
As a corollary, we see that the classical Baire property on
$\omega$-continuous domains and
also that on quasi-Polish
spaces (cf. Theorem~\ref{thm:deBrecht}) can be strengthened
from $\bGd$ to $\tbPi^0_2$.

\begin{theorem}\label{thm:Pi2 Baire}
If $X$ is an approximation space
with a well-orderable topological basis
(in particular, if $X$ has a countable basis)
then it satisfies the Baire property for $\tbPi^0_2(E)$ sets:
{\it the intersection of countably many dense $\tbPi^0_2(E)$ sets is dense}.
\end{theorem}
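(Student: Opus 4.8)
The plan is to mimic the classical Banach–Mazur argument, but play the Choquet game $\Ch(X)$ using the stationary winning strategy $\sigma$ for Nonempty provided by Theorem~\ref{thm:approx games}, and arrange that the point produced at the end of the play lands in every one of the given dense $\tbPi^0_2$ sets. First I would fix a well-ordering of a topological basis $\+B$ of $X$, an approximation relation $\ll$ on $\+B$, and the associated stationary strategy $\sigma$ for Nonempty. Let $(P_k)_{k\in\N}$ be the given sequence of dense $\tbPi^0_2(X)$ sets and fix a nonempty open set $W$; the goal is to show $W\cap\bigcap_k P_k\neq\emptyset$. By Proposition~\ref{p:Borel normal}(1), each $P_k$ is a countable intersection $\bigcap_{m} (U_{k,m}\setminus V_{k,m})$ of differences of open sets, so the whole target reduces to showing that $W$ meets a countable intersection $\bigcap_{n\in\N}(G_n\setminus H_n)$ where each $G_n,H_n$ is open; re-index the countably many differences as $(G_n\setminus H_n)_{n\in\N}$.

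The key construction is a play $(x_0,T_0),B_0,(x_1,T_1),B_1,\dots$ of $\Ch(X)$ in which Nonempty follows $\sigma$, and in which Empty (played by us) makes sure that by stage $n$ the ``$n$-th difference constraint'' $G_n\setminus H_n$ is permanently secured. Concretely, having built $B_{n-1}$ (with $B_{-1}:=W$), we want to choose $(x_n,T_n)$ with $x_n\in T_n\subseteq B_{n-1}$ so that $T_n\subseteq G_n\setminus H_n$ — equivalently, so that every point of the remaining play is forced into $G_n$ and kept out of $H_n$. This is possible because of the density hypothesis: the set $\bigcap_k P_k$ is dense (each $P_k$ is dense, and a finite — indeed here we only need that each individual $G_n\setminus H_n$ is ``dense enough'' along the play). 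More carefully, the right density fact to use is that each $P_k$ is dense, hence $B_{n-1}\cap P_k\neq\emptyset$ for the relevant $k$; picking a point $z$ in this intersection, $z\in G_n$ and $z\notin \overline{H_n}$ is too strong, so instead one uses $z\in G_n\setminus H_n$ and chooses a basic open $T_n\in\+B$ with $z\in T_n\subseteq B_{n-1}\cap G_n$ and additionally $T_n\cap H_n=\emptyset$ — the last clause being obtainable since $z\notin H_n$ only when $H_n$ is also closed-complemented, which is exactly why one must first pass (again via density of the $P_k$) to shrink into the open complement... Here the cleanest route is: since $z\in G_n\setminus H_n\subseteq X\setminus H_n$ and $X\setminus H_n$ need not be open, one instead peels off the constraints one at a time using that $P_k\subseteq G_n\setminus H_n$ forces $z$ to avoid $H_n$, and takes $T_n$ a basic neighbourhood of $z$ inside $B_{n-1}\cap G_n$; the constraint ``stay out of $H_n$'' is then re-imposed as a later difference constraint, which is legitimate because we process infinitely many constraints and each one, once it is a ``$U$-clause'' for the current basic set, can be permanently discharged. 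This bookkeeping — ensuring each of the countably many open/closed pieces is eventually and permanently handled along the play — is the technical heart of the proof, and it is where I expect the real work to be.

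Once the play is built, Theorem~\ref{thm:approx games} (via condition (4), or condition (4$^+$) in the convergent case) guarantees $\bigcap_{n\in\N} B_n=\bigcap_n T_n\neq\emptyset$; pick $x$ in this intersection. By construction $x\in B_{-1}=W$, and for every $n$ the stage-$n$ requirement forces $x\in G_n\setminus H_n$ (or, in the peeling version, forces $x\in G_n$ and, at the later stage where ``avoid $H_n$'' was discharged, forces $x\notin H_n$). Hence $x\in W\cap\bigcap_n(G_n\setminus H_n)\subseteq W\cap\bigcap_k P_k$, proving density. The main obstacle is organizing the interleaving so that \emph{every} constraint $G_n\setminus H_n$ is guaranteed to be addressed at some finite stage and then never violated thereafter — a standard but delicate diagonalization; the rest (legality of the play, nonemptiness of the intersection) is immediate from Theorem~\ref{thm:approx games} and Proposition~\ref{p:Borel normal}.
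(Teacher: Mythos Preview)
Your overall shape --- build a nested sequence of basic open sets inside $W$ whose nonempty intersection (guaranteed by the approximation relation, equivalently the stationary Choquet strategy) lands in every $P_k$ --- matches the paper. But two things go wrong.

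First, the decomposition is backwards. Proposition~\ref{p:Borel normal}(1) gives $\tbSigma^0_2(X)$ as countable \emph{unions} of differences of open sets, so $\tbPi^0_2(X)$ consists of countable intersections of \emph{complements} of such differences, i.e.\ sets $\bigcap_n (U_n\cup F_n)$ with $U_n$ open and $F_n$ closed (the $\co\tbD_2$ form the paper uses). A $\tbPi^0_2$ set need not be a countable intersection of differences $G_n\setminus H_n$: in $\PN$ with the Scott topology, $O_{\{0\}}\cup(\PN\setminus O_{\{1\}})$ is $\co\tbD_2$ but not $\tbD_2$, since any difference of Scott-open sets is order-convex while this set is not ($\emptyset$ and $\{0,1\}$ belong, $\{1\}$ does not).

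Second, and this is the real gap: even with the correct decomposition $\bigcap_n(U_n\cup F_n)$, you cannot arrange any basic $T_m$ to lie inside $U_n\cup F_n$, let alone inside $F_n$, when $F_n$ has empty interior. Your ``re-impose the constraint later'' idea does not help: no amount of postponement makes a nonempty basic open set fit inside a closed set with empty interior. The idea you are missing, and which the paper supplies, is a \emph{dichotomy plus adherence} argument. At each stage assigned to index $j$ (the paper schedules these via a surjection $\pi:\N\to\N$ with infinite fibres), one \emph{attempts} to shrink into the open part $U_j$: look for $W^*\in\+B$ with $O_n\ll W^*\subseteq U_j$. If this ever succeeds, the limit point $x$ lies in $U_j$ and we are done with $j$. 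If it fails at some stage $n$ with $\pi(n)=j$, one records that no $V\in\+B$ with $O_n\ll V$ lies inside $U_j$. Now, for any basic neighbourhood $T\ni x$, density of $U_j\cup F_j$ produces a point in $(U_j\cup F_j)\cap T\cap O_n$; condition~(3) of the approximation relation rules out that point lying in $U_j$ (it would manufacture such a forbidden $V$), so the point lies in $F_j\cap T$. Hence $x$ is adherent to $F_j$, so $x\in F_j$. This adherence step is exactly the ``technical heart'' you flag but do not provide, and without it the argument does not close.
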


\begin{proof}
Using Lemma~\ref{l:approximation and basis},
consider a well-orderable basis $\+B$ and an approximation
relation $\ll$ on $\+B$.
We fix some well-order on $\+B$ and we will speak freely
of the least $U$ in $\+B$ satisfying a given property.
Since $\tbPi^0_2(E)$ sets are intersections of countably many
sets in $\co\tbD_2(E)$ (i.e. unions of an open and a closed sets),
and any superset of a dense set is dense,
it suffices to prove the Baire property for $\co\tbD_2$ sets.
Suppose $X=\bigcap_{n\in\N}U_n\cup F_n$
where, for each $n$, $U_n$ is open, $F_n$ is closed
and $U_n\cup F_n$ is dense.
We prove that $X$ is dense, i.e. $X$ meets $O$ for each $O\in\+B$.
Fix some $\pi:\N\to\N$ such that $\pi^{-1}(j)$ is infinite for each
$j\in\N$.
Fix some $O\in\+B$.
We inductively define a sequence of basic open sets $(O_n)_{n\in\N}$
such that $O\ll O_0$ and $O_n\ll O_{n+1}$ for all $n$.

(a) Let $O_0$ be the least $W\in\+B$ such that $O\ll W$
(there exists such a $W$~:
take any element $x$ in $O$ and apply condition (3) of Definition~\ref{def:approximation}).

(b) Suppose $n\geq1$, $O_{n-1}$ is defined and let $W\in\+B$
be least such that $O_{n-1}\ll W$ (as above there exists such a $W$).
If $U_{\pi(n)}$ contains some $W^*$ such that $W\ll W^*$
then set $O_n=W^*$. Otherwise, set $O_n=W$.
The following Claim finishes the proof of the Theorem.

{\it Claim.} The set $\bigcap_{n\in\N}O_n$ is non empty and
is included in $O \cap\left(\bigcap_{j\in\N}U_j\cup F_j\right)$.

{\it Proof of Claim.}
Since $\ll$ is transitive, $O_n\ll O_{n+1}$ holds for all $n$'s
and the non emptiness of $\bigcap_{n\in\N}O_n$ is ensured
by condition (4) of Definition~\ref{def:approximation}.
Let $x\in\bigcap_{n\in\N}O_n$.
By clause (a) we have $O\ll O_0$. Thus, $O_0\subseteq O$ and $x\in O$.
Let $j\in\N$.
Suppose $x\notin U_j$, we show that $x\in F_j$.
Since $x\notin U_j$ we have $O_n\not\subseteq U_j$ for all $n\in\N$.
Let $n$ be such that $\pi(n)=j$, clause (b) ensures that
$O_n=W$ (instead of $W^*$) and
\smallskip

(*) \qquad
$U_j$ contains no $V\in\+B$ such that $O_n\ll V$.
\smallskip
\\
Since $F_j$ is closed, to show that $x\in F_j$,
it suffices to prove that $x$ is adherent to $F_j$.
Consider $T\in\+B$ such that $x\in T$,
we have to prove that $T$ meets $F_j$.
Let $n$ be such that $\pi(n)=j$.
Since $U_j\cup F_j$ is dense and $T\cap O_n$
is non empty (it contains $x$), there exists
$x_n\in(U_j\cup F_j)\cap T\cap O_n$.
By way of contradiction, suppose $x_n$ is in $U_j$.
Then let $S\in\+B$ be such that
$x_n\in S\subseteq U_j\cap T\cap O_n$.
Condition (3) yields some $V\in\+B$ such that
$x_n\in V$ and $S\ll V$.
Using condition (2) and inclusion $S\subseteq O_n$
we have $O_n\ll V$.
Now, we also have $V\subseteq S\subseteq U_j$, 
contradicting (*).
Thus, $x_n\in F_j$; hence, $T$ meets $F_j$.
\end{proof}

\begin{remark}
The fact that every continuous domain $D$
satisfies the $\tbPi^0_2(D)$ Baire property
questions on a possible relation with
the Lawson topology.
Recall that the Lawson topology on a continuous domain is
the refinement of the Scott topology such that
the complement of the uppercone set $\up x$ is open, for each $x$ in $D$.
The Lawson topology is  $T_2$ and compact,
so it satisfies the usual Baire property with $\bGd$ sets.
Also, the Scott $\tbPi^0_2$ class is included in the Lawson class $\bGd$.
However, the $\tbPi^0_2$ Baire property for the Scott topology
differs from the usual $\bGd$ Baire property for the Lawson topology 
because the notion of dense set is not the same in the two topologies.
\end{remark}

\subsection{Hausdorff's Theorem and the $\tbPi^0_2$ Baire Property}
\label{ss:Hausdorff theorem}

Hausdorff-Kuratowski's theorem establishes that  for Polish spaces  the inclusion
in Proposition~\ref{p:Hausdorff Delta Borel} is an equality:
$\bigcup_{\alpha<\aleph_1}\tbD_\alpha(\tbSigma^0_\beta(E))
= \tbDelta^0_{\beta+1}(E)$.
One of the properties of Polish spaces used in the proof of this
result is the classical Baire property.
Matthew de Brecht (2011) has proved that Hausdorff-Kuratowski's theorem holds
for quasi-Polish spaces. 
Our next theorem gives conditions of  approximaton spaces to establish  the case $\beta=1$
i.e., Hausdorff's theorem
$\tbDelta^0_2(E)=\bigcup_{\alpha<\aleph_1}\tbD_\alpha(E)$.
The theorem pinpoints some topological properties
that suffice to prove this equality.
These properties are true in second countable hereditary
approximation spaces.
A similar result was obtained in
\cite{tang1981} for the Scott domain $\PN$.
Using a different proof  Selivanov (2005)  proved it  for
$\omega$-algebraic domains.

\begin{theorem}\label{thm:Hausdorff}
Let $E$ be a topological space satisfying the following properties.
\begin{enumerate}[(iii)]
\renewcommand{\theenumi}{(\roman{enumi})}

\item
There exists a countable basis of open sets.

\item
Every closed subspace $F$ of $E$ satisfies the
$\tbPi^0_2(E)$-Baire property: 
the intersection of countably many $\tbPi^0_2(E)$ subsets of $F$
which are dense in $F$ is also dense in $F$.
\end{enumerate}
Then,
$\bigcup_{\alpha<\aleph_1}\tbD_\alpha(E)= \tbDelta^0_2(E)$.
\end{theorem}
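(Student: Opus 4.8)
The plan is to deduce the theorem from Proposition~\ref{p:Hausdorff Delta Borel} (which already gives the inclusion $\bigcup_{\alpha<\aleph_1}\tbD_\alpha(E)\subseteq\tbDelta^0_2(E)$) by proving the converse through a transfinite ``resolution'' argument in the style of Hausdorff and Kuratowski. Fix $A\in\tbDelta^0_2(E)$ and write $A^c=E\setminus A$, so that both $A$ and $A^c$ lie in $\tbPi^0_2(E)$. I would attach to $A$ a decreasing transfinite chain of closed sets $E=F_0\supseteq F_1\supseteq\cdots$, continuous at limit stages ($F_\lambda=\bigcap_{\xi<\lambda}F_\xi$), such that each ``layer'' $L_\xi:=F_\xi\setminus F_{\xi+1}$ is relatively open in $F_\xi$ and is \emph{monochromatic}, meaning $L_\xi\subseteq A$ or $L_\xi\cap A=\emptyset$, with the colours strictly alternating. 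Taking $A_\xi:=E\setminus F_\xi$ then produces an increasing transfinite sequence of open sets, and once the chain reaches $\emptyset$ at a \emph{countable} ordinal $\theta$ one reads off $A$ as $D_\theta((A_\xi)_{\xi<\theta})$ or its complement, which by Proposition~\ref{p:Hausdorff} puts $A$ in $\tbD_{\theta+1}(E)$.

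The heart of the matter, and the only place where hypotheses (i) and (ii) enter, is the following local dichotomy: \emph{for every nonempty closed $F\subseteq E$, the sets $A\cap F$ and $A^c\cap F$ cannot both be dense in $F$}. Indeed, the inclusion $F\hookrightarrow E$ is continuous and $\tbPi^0_2$ is stable under continuous inverse images (Proposition~\ref{p:Borel}), so $A\cap F$ and $A^c\cap F$ are $\tbPi^0_2(E)$ subsets of $F$; were both dense in $F$, hypothesis (ii) applied to the two-element family $\{A\cap F,A^c\cap F\}$ would make $A\cap A^c\cap F=\emptyset$ dense in $F$, contradicting $F\neq\emptyset$. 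Equivalently, in every nonempty closed $F$ at least one of $A,A^c$ has nonempty relative interior, and whenever $A\cap F$ is dense in $F$ it already has nonempty relative interior in $F$.

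With this lemma the chain is built greedily. At a successor stage one peels the maximal relatively open subset of $F_\xi$ of the current colour, i.e. $L_\xi=\mathrm{int}_{F_\xi}(A\cap F_\xi)$ or $L_\xi=\mathrm{int}_{F_\xi}(A^c\cap F_\xi)$, and sets $F_{\xi+1}=F_\xi\setminus L_\xi$. Nonemptiness is guaranteed by the dichotomy: peeling colour $c$ leaves $F_{\xi+1}=\overline{c^c\cap F_\xi}^{\,F_\xi}$, in which $c^c$ is dense, so the next peel (of colour $c^c$) is legitimate. To keep the colouring globally coherent I would anchor the colour to $A$ at stage $0$ and at every limit stage, inserting an \emph{empty} layer ($L_\xi=\emptyset$, $F_{\xi+1}=F_\xi$) whenever the colour dictated by the anchor is not the one the lemma makes available there; this is harmless and forces $L_\xi\subseteq A$ to hold exactly for even $\xi$. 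The chain terminates at a countable ordinal because every genuine (nonempty) peel strictly shrinks $F_\xi$ and, by hypothesis (i), can be tagged with a distinct basic open set meeting $F_\xi\setminus F_{\xi+1}$; hence $F_\theta=\emptyset$ for some $\theta<\aleph_1$.

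It then remains to unwind the difference operator. The open sets $A_\xi=E\setminus F_\xi$ are increasing, with $A_0=\emptyset$, $A_\theta=E$ and $A_\lambda=\bigcup_{\xi<\lambda}A_\xi$ at limits, so $A_\beta\setminus\bigcup_{\gamma<\beta}A_\gamma$ equals $L_\xi$ when $\beta=\xi+1$ and is empty when $\beta$ is a limit. Since $A$ is the union of the layers $L_\xi$ with $\xi$ even, this yields $A=D_\theta((A_\xi)_{\xi<\theta})$ when $\theta$ is even and $A=E\setminus D_\theta((A_\xi)_{\xi<\theta})$ when $\theta$ is odd; in either case $A\in\tbD_\theta(E)\cup\co\tbD_\theta(E)\subseteq\tbD_{\theta+1}(E)\subseteq\bigcup_{\alpha<\aleph_1}\tbD_\alpha(E)$ by Proposition~\ref{p:Hausdorff}, which together with Proposition~\ref{p:Hausdorff Delta Borel} proves the equality. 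I expect the real obstacle to be the dichotomy lemma — arranging the ``$\tbPi^0_2$ bookkeeping'' so that hypothesis (ii) applies literally — together with the routine but fiddly handling of limit stages in the resolution.
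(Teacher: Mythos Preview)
Your proposal is correct and follows essentially the same approach as the paper: Hausdorff's transfinite residue construction, with the key step being the dichotomy that $A\cap F$ and $A^c\cap F$ cannot both be dense in a nonempty closed $F$, obtained from hypothesis~(ii). The paper phrases the iteration via closures ($F_{2\alpha+1}=\overline{A\cap F_{2\alpha}}$, $F_{2\alpha+2}=\overline{A^c\cap F_{2\alpha+1}}$) rather than relative interiors, and applies the dichotomy once at the stabilized stage rather than at each step, but these are cosmetic differences; your final expression $A\in\tbD_{\theta+1}(E)$ matches the paper's exactly.
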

\begin{proof}
The $\subseteq$ inclusion is Proposition~\ref{p:Hausdorff Delta Borel}.
For the $\supseteq$ inclusion, we follow, mutatis mutandis,
Hausdorff's original proof with residues and adjoins, as exposed
in \cite{kechrisBook}, Theorem 22.27 pages~176-177.
For any subset $A\subseteq E$, we define by transfinite recursion a family
$(F_\alpha)_{\alpha<\aleph_1}$:
$$
F_0=E
\ ,\ F_{2\alpha+1}=\overline{A\cap F_{2\alpha}}
\ ,\ F_{2\alpha+2}=\overline{(E\setminus A)\cap F_{2\alpha+1}}
\ ,\ F_\lambda=\bigcap_{\alpha<\lambda} F_\alpha\quad\textit{if $\lambda$ is limit.}
$$
Applying property (i), let $(O_n)_{n\in\N}$ be a countable basis of open sets of $E$.
The $F_\alpha$'s are a decreasing sequence of closed sets.
If $F_{\xi+1}=F_\xi$ then $F_\alpha=F_\xi$ for all $\alpha\geq\xi$. 
If $F_{\alpha+1}$ is a strict subset of $F_\alpha$
then there is some $O_n$ which meets $F_\alpha$ but not $F_{\alpha+1}$.
Since there are countably many $O_n$'s,
this implies that there is some countable $\theta$ such that
$F_\theta=F_\alpha$ for all $\alpha\geq\theta$.
We shall consider the least even such $\theta$.

{\it Claim.} If $A$ is $\tbDelta^0_2(E)$ then $F_\theta=\emptyset$.

{\it Proof of Claim}. Suppose $F_\theta\neq\emptyset$.
Applying property (ii), the subspace $F_\theta$ is in $E$
hence it satisfies the Baire property.
Since $F_\theta=F_{\theta+2}$, we have
$F_\theta=\overline{A\cap F_\theta}=\overline{(E\setminus A)\cap F_\theta}$.
Thus, arguing in the subspace $F_\theta$,
the sets $A\cap F_\theta$ and $(E\setminus A)\cap F_\theta$ are
$\tbPi^0_2(F_\theta)$ dense subsets of $F_\theta$.
Since they are disjoint, this contradicts property (ii) in $F_\theta$.

Letting $\theta=2\zeta$ and
$B=\bigcup_{\alpha<\zeta}F_{2\alpha+1}\setminus F_{2\alpha+2}$,
we claim that $A=B$.
Indeed, suppose $x\in A$ and let $\eta$ be least such that
$x\notin F_\eta$. The inductive definition of the $F_\alpha$'s ensures
that $\eta=2\alpha+2$ for some $\alpha$.
Therefore, $x\in F_{2\alpha+1}\setminus F_{2\alpha+2}$  hence $x\in B$.
Similarly, if $x\notin A$ and $\eta$ is least such that
$x\notin F_\eta$ then $\eta=2\alpha+1$ for some $\alpha$.
Thus, $x\in F_{2\alpha}\setminus F_{2\alpha+1}$ hence $x\notin B$.
Observe that, for $\lambda$ limit, we have
$(E\setminus F_\lambda)\setminus
\bigcup_{\alpha<\lambda}(E\setminus F_\alpha)=\emptyset$.
Thus,
$A=B=\bigcup_{\alpha<\zeta}
(E\setminus F_{2\alpha+2})\setminus(E\setminus F_{2\alpha+1})
=D_{\theta+1}((E\setminus F_\alpha)_{\alpha\leq\theta})$
hence $A$ is in $\tbD_{\theta+1}(E)$.
\end{proof}

\begin{corollary}
Equality
$\displaystyle{\bigcup_{\alpha<\aleph_1}\tbD_\alpha(E)= \tbDelta^0_2(E)}$
holds in any hereditary approximation space having a countable basis.
\end{corollary}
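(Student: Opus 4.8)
The plan is to verify that any hereditary approximation space $E$ with a countable basis satisfies the two hypotheses (i) and (ii) of Theorem~\ref{thm:Hausdorff}, and then invoke that theorem. Hypothesis (i) is given, so all the work is in establishing the $\tbPi^0_2$ Baire property \emph{inside every closed subspace} of $E$, and the key is to apply Theorem~\ref{thm:Pi2 Baire} not to $E$ itself but to each such subspace.

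First I would fix a basis $\+B$ of $E$ carrying a hereditary approximation relation $\ll$. Since $E$ is second countable, Lemma~\ref{l:approximation and basis} lets me take $\+B$ countable. For a closed set $C\subseteq E$, the induced relation $\ll_C$ is, by the hereditary hypothesis, an approximation relation for the subspace $C$, living on the basis $\{C\cap U\mid U\in\+B,\ C\cap U\neq\emptyset\}$, which is countable (hence well-orderable). Thus $C$ is an approximation space with a countable basis, and Theorem~\ref{thm:Pi2 Baire} gives: the intersection of countably many dense $\tbPi^0_2(C)$ subsets of $C$ is dense in $C$. To match hypothesis (ii) of Theorem~\ref{thm:Hausdorff}, note that any $\tbPi^0_2(E)$ set $X$ contained in $C$ equals $X\cap C$, the inverse image of $X$ under the continuous inclusion $C\hookrightarrow E$, hence lies in $\tbPi^0_2(C)$ by Proposition~\ref{p:Borel}(2); so a family of dense-in-$C$ sets taken from $\tbPi^0_2(E)$ is in particular a family of dense $\tbPi^0_2(C)$ subsets of $C$, and its intersection is dense in $C$. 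This is exactly hypothesis (ii), and Theorem~\ref{thm:Hausdorff} then yields $\bigcup_{\alpha<\aleph_1}\tbD_\alpha(E)=\tbDelta^0_2(E)$.

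There is essentially no obstacle here: the corollary is a bookkeeping exercise once Theorems~\ref{thm:Hausdorff} and~\ref{thm:Pi2 Baire} are in hand. The single point requiring attention is not to confuse the ambient space --- the Baire-type property must be secured within each closed subspace, which is precisely the role of the hereditary condition --- and the harmless discrepancy between $\tbPi^0_2(E)$ and $\tbPi^0_2(C)$ is dissolved by the closure of the Borel classes under continuous preimages.
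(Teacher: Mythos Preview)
Your proposal is correct and is precisely the intended argument: the paper states the corollary without proof, since it follows immediately from Theorem~\ref{thm:Hausdorff} once one observes that the hereditary condition makes every closed subspace an approximation space (with countable basis), so that Theorem~\ref{thm:Pi2 Baire} delivers hypothesis~(ii). Your care in passing from $\tbPi^0_2(E)$ to $\tbPi^0_2(C)$ via continuous preimages is the only detail worth spelling out, and you handle it correctly.
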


\begin{problem}
What topological conditions ensure  higher levels of Hausdorff-Kuratowski theorem ?
\end{problem}

\section{The Hausdorff Hierarchy in Continuous Domains}
\label{s:Hausdorff domains}

Selivanov  made  a fine analysis of the Hausdorff difference hierarchy
in $\omega$-algebraic domains and he proved Hausdorff's theorem 
for these spaces (Selivanov 2005). He also showed the non existence 
of ambiguous sets in the Hausdorff hierarchy, provided there is a least 
element in the $\omega$-algebraic domain.
As stated in (Selivanov 2008, Theorem~3.4), the methods in his  paper of 2005
can be pushed from $\omega$-algebraic to $\omega$-continuous domains.

In this section we reconsider the question of the non existence of 
ambiguous sets in the Hausdorff hierarchy
for continuous domains (possibly not $\omega$-continuous).
We prove that the hypothesis of a  least element considered by Selivanov 
can be removed  for  successor levels of the Hausdorff hierarchy,
but not for limit levels.
Although  this improvement is a modest addition to Selivanov's result
(proofs being, mutatis mutandis, the same),
it requires to use the machinery of well-founded alternating trees developed
in \cite{selivanov2005}.
The extension from $\omega$-continuous domains to continuous domains
led us to consider possibly non countable alternating trees.
In fact, the countability of the domain basis is useful
only to ensure that the tree is countable, hence the rank of an alternating
tree is countable. But this hypothesis appears directly as an assumption
on the ordinal $\alpha$ in
Theorems~\ref{thm:Dalpha} and \ref{thm:no ambiguous degree}.

\subsection{Alternating Trees}
\label{ss:alternating trees}

First, we recall some simple notions about possibly uncountable trees.

\begin{definition}\label{def:tree}
\begin{enumerate}
\item
A {\it tree} is any non empty set of finite sequences closed under prefix.
The root of a tree is the empty sequence $\nil$.
If $\sigma$ is in a tree $T$ we let $T_\sigma$ be the tree
$\{\tau \mid \sigma\tau\in T\}$.
\item
A tree $T$ is {\it well-founded} if it has no infinite branch.
The ranks of the elements in a well-founded tree $T$ are defined inductively:
$\rank_T(\sigma)=\sup\{\rank_T(\sigma n)+1 \mid \sigma n\in T\}$
(convention for leaves of $T$ : $\sup \emptyset=0$).
The rank of $T$ is that of its root.
\item
An {\it alternating tree} is a map $f:T\to\Bool$ such that
$T$ is a tree and $f(\tau)\neq f(\sigma)$
whenever $\sigma$ is a son of $\tau$
(i.e. $\tau$ is a prefix of $\sigma$ and the length of $\sigma$ 
is the successor of that of $\tau$).
We say $f$ is $\varepsilon$-alternating if $f(\nil)=\varepsilon$.
\item
An {\it embedding} of an alternating tree $g:S\to\Bool$ into
an alternating tree $f:T\to\Bool$ is a monotone increasing
(with respect to the prefix ordering on finite sequences)
injective map $\theta:S\to T$ such that $g(\sigma)=f(\theta(\sigma))$.
We write $g\leq f$.
\end{enumerate}
\end{definition}
The next basic result is taken from \cite{selivanov2005}, Lemma 3.6 page 48.

\begin{lemma}\label{l:subtree}
Let $f:T\to\Bool$ be a well-founded alternating tree with rank $\alpha$.
For every $\beta<\alpha$ and every $\varepsilon\in\Bool$,
there exists an $\varepsilon$-alternating tree $g_\varepsilon\leq f$
with rank $\beta$.
\end{lemma}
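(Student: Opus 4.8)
The plan is to prove, by induction on $\beta$, the following reformulation: if $f:T\to\Bool$ is a well-founded alternating tree with $\rank(f)>\beta$, then for each $\varepsilon\in\Bool$ there is an $\varepsilon$-alternating $g\le f$ of rank $\beta$. Throughout I would use two elementary facts about ranks in a well-founded tree: (a) $\rank_T(\sigma)>\delta$ if and only if $\sigma$ has a son of rank $\ge\delta$, since $\rank_T(\sigma)$ is a supremum of the successor ordinals $\rank_T(\sigma n)+1$; and (b) if $\lambda$ is a limit ordinal with $\rank_T(\sigma)\ge\lambda$ and every son of $\sigma$ has rank $<\lambda$, then in fact $\rank_T(\sigma)=\lambda$ and the ranks of the sons of $\sigma$ are cofinal in $\lambda$. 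From (b) it follows that for every $\delta\le\rank_T(\sigma)$ the set $\{\nu\mid\sigma\preceq\nu\in T,\ \rank_T(\nu)\ge\delta\}$ is a well-founded subtree all of whose leaves have rank exactly $\delta$. I would also use that, $f$ being alternating, $f(\sigma)$ is determined by $f(\nil)$ and the parity of $|\sigma|$; in particular, since $\rank(f)\ge1$, the root and its sons realize both colours, and more generally (by (a)) nodes of colour $\varepsilon$ and rank $\ge\beta$ always exist.

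The base case $\beta=0$ is immediate: a node coloured $\varepsilon$ exists, and the one-point tree mapped there is the required $g$. For the inductive step, I would fix $\alpha:=\rank(f)>\beta>0$ and $\varepsilon$, and choose among the nodes $\rho$ coloured $\varepsilon$ with $\rank_T(\rho)\ge\beta$ one of minimal rank $\mu$. If $\mu=\beta$, then $g:=f\segment T_\rho$, with the embedding $\sigma\mapsto\rho\sigma$, already works. If $\mu>\beta$ and $\beta=\gamma+1$, then by (a) $\rho$ has a son $c$ with $\rank_T(c)\ge\gamma+1>\gamma$, and $c$ is coloured $\overline\varepsilon$; applying the induction hypothesis to $f\segment T_c$ with the colour $\overline\varepsilon$ yields a $\overline\varepsilon$-alternating $h$ of rank $\gamma$ embedded in $T_c$, and letting $g$ be the tree obtained by placing a copy of $h$ below a fresh root coloured $\varepsilon$, mapped to $\rho$, one gets an $\varepsilon$-alternating $g\le f$ of rank $\gamma+1=\beta$.

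In the remaining case $\mu>\beta$ with $\beta$ a limit ordinal, I would pass to the well-founded subtree $U=\{\nu\mid\rho\preceq\nu\in T,\ \rank_T(\nu)\ge\beta\}$; its root $\rho$ is not a leaf, since $\rank_T(\rho)=\mu>\beta$. By the consequence of (b) every leaf of $U$ has rank exactly $\beta$, and by minimality of $\mu$ every leaf is coloured $\overline\varepsilon$, since a leaf coloured $\varepsilon$ would be a colour-$\varepsilon$ node of rank $\beta<\mu$. Fix a leaf $\nu^*$ of $U$; by (b) the ranks of its sons are cofinal in $\beta$ while, by the definition of $U$, all of them are $<\beta$. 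Fixing a strictly increasing sequence $(\beta_i)_{i<\mathrm{cf}(\beta)}$ cofinal in $\beta$, I would pick, pairwise distinct as $i$ varies, a son $w_i$ of $\nu^*$ with $\rank_T(w_i)>\beta_i$ (possible since those ranks are cofinal in $\beta$) and then, by (a), a son $e_i$ of $w_i$ with $\rank_T(e_i)\ge\beta_i$. Each $e_i$ is coloured $\overline\varepsilon$ (it is at distance $2$ from $\nu^*$), has rank $\gamma_i:=\rank_T(e_i)\in[\beta_i,\beta)$, and the $e_i$ are pairwise incomparable (they lie in the disjoint subtrees $T_{w_i}$) and properly extend $\rho$. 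Then the tree $g$ having a root coloured $\varepsilon$ with, for each $i<\mathrm{cf}(\beta)$, a copy of $f\segment T_{e_i}$ hanging below its $i$-th son is $\varepsilon$-alternating, has $\rank(g)=\sup_i(\gamma_i+1)=\beta$, and embeds into $f$ via the root $\mapsto\rho$ together with the maps $\sigma\mapsto e_i\sigma$ on the $i$-th subtree.

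The step I expect to be the main obstacle is the limit case: one cannot simply invoke the induction hypothesis separately for each $\beta_i$ and amalgamate, because the resulting embeddings might overlap inside $T$ and destroy injectivity of the composite map. Securing pairwise incomparable room inside $T$ for cofinally many ranks below $\beta$ is exactly what forces the detour through a node $\nu^*$ of rank exactly $\beta$ (extracted from the auxiliary well-founded tree $U$) together with two further levels of descent to restore the correct colour parity. A subsidiary technicality, relevant because the excerpt allows uncountable alternating trees, is that when $\mathrm{cf}(\beta)$ is uncountable the $w_i$ must still be chosen pairwise distinct; this is harmless, since the set of ranks of the sons of $\nu^*$ is cofinal in $\beta$ and therefore contains a cofinal subset of order type $\mathrm{cf}(\beta)$ with distinct representatives.
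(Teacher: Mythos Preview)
Your proof is correct and proceeds by a genuinely different route than the paper's. The paper argues by induction on $\alpha=\rank(f)$, splitting into three cases: $\alpha$ finite (trivial via a single branch), $\alpha$ limit (pass to a child subtree of rank strictly between $\beta$ and $\alpha$ and invoke the induction hypothesis), and $\alpha=\lambda+m+1$ with $\lambda$ limit. In this last case, when $\beta=\lambda+m$, the paper finds a node $s$ at depth $m+1$ with $\rank_T(s)=\lambda$, uses the path from the root to $s$ as a spine, then branches at $s$ into cofinally many children, applying the induction hypothesis inside each branch to obtain subtrees of the required ranks; the colour at the root is adjusted implicitly by shifting the spine by one step.

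Your induction is on the target rank $\beta$ instead, with the case split $\beta=0$ / successor / limit. Your limit case avoids the induction hypothesis altogether: you locate a colour-$\varepsilon$ node $\rho$ of minimal rank $\mu\ge\beta$, pass to the well-founded subtree of nodes of rank $\ge\beta$ below $\rho$, pick a leaf $\nu^*$ (necessarily of rank exactly $\beta$ and, by minimality of $\mu$, of colour $\overline\varepsilon$), and branch below $\nu^*$ through two further levels to restore the colour parity while guaranteeing pairwise-disjoint subtrees for the amalgamation. The trade-off: the paper's argument is shorter but leaves the colour bookkeeping at the root somewhat implicit, whereas your minimal-rank device and two-level descent make both the colour constraint and the injectivity of the composite embedding transparent. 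Both arguments work for uncountable trees once one notes, as you do, that the sons of $\nu^*$ have ranks cofinal in $\beta$ and hence supply $\mathrm{cf}(\beta)$ many distinct representatives.
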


\begin{proof}
We reproduce Selivanov's proof.
Argue by induction on $\alpha$.
If $\alpha$ is finite then take a branch of length $\alpha$ and remove
an appropriate tail and/or the root.
If $\alpha$ is limit then, for some $x$, $T_{(x)}$
(cf. Definition~\ref{def:tree}) has rank $\gamma$ such that $\beta<\gamma<\alpha$.
Use the induction hypothesis with $f_{(x)}:T_{(x)}\to\Bool$
such that $f_{(x)}(\sigma)=f(x\sigma)$.
Finally, suppose $\alpha=\lambda+m+1$ with $\lambda$ limit and $m\in\N$.
If $\beta<\lambda+m$ then let $x$ such that $T_{(x)}$ has rank $\lambda+m$
and use the induction hypothesis.
Suppose now $\beta=\lambda+m$.
Let $s$ with length $m+1$ such that $T_s$ has rank $\lambda$.
Let $\eta$ be the cofinality of $\lambda$ and
let $(\lambda_\xi)_{\xi<\eta}$ be strictly increasing with supremum $\lambda$
and $x_\xi$ be such that $T_{sx_\xi}$ has rank $\lambda_\xi$.
Let $g_{\xi,\delta}\leq f_{sx_{\xi+1}}$ for $\delta\in\Bool$
be $\delta$-alternating with rank $\lambda_\xi$.
Set $g_\varepsilon(0^p)=f(s\segment p)$ for $p\leq m$
and $g_\varepsilon(0^m\xi\sigma)=g_{\xi,\delta}(\sigma)$ 
where $\delta$ is chosen so as to get alternation from
$0^m$ to $0^m\xi$.
\end{proof}

\subsection{Alternating Trees and the Hausdorff Hierarchy}
\label{ss:alternating trees Hausdorff}

The following notion plays the role of intervals $[p,x]$ in \cite{selivanov2005}.

\begin{definition}\label{def:interval}
Let $b,x$ be elements of the domain $D$.
We let $\llbracket b,x]=\{y\mid b\ll y\leq x\}$.
\end{definition}

\begin{proposition}\label{p:interval}
Let $(D,\sqsubseteq)$ be a dcpo and $A\in\tbSigma^0_2(D)$.
If $(x_i)_{i\in I}$ is a directed system with supremum in $A$
then $x_i\in A$ for all $i$ large enough.
In particular, if $D$ is a continuous domain with basis $B\subseteq D$
and $x\in A$
then there exists $b\in B$ such that $b\ll x$ and $\llbracket b,x]\subseteq A$.
\end{proposition}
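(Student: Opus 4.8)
The plan is to prove the first assertion directly from the definition of $\tbSigma^0_2$ sets and then derive the ``in particular'' clause using the interpolation property of continuous domains. For the first part, write $A\in\tbSigma^0_2(D)$ using Proposition~\ref{p:Borel normal}(1) as $A=\bigcup_{n\in\N}(U_n\setminus V_n)$ with $U_n,V_n$ open in the Scott topology. Let $(x_i)_{i\in I}$ be a directed system whose supremum $x=\sqcup_i x_i$ lies in $A$. Then $x\in U_n\setminus V_n$ for some fixed $n$. Since $x\in U_n$ and $U_n$ is Scott open, by the defining property of Scott-open sets (every directed set with supremum in $U_n$ meets $U_n$), there is $i_0\in I$ with $x_{i_0}\in U_n$; because $U_n$ is an upset and the system is directed (so $x_i\sqsupseteq x_{i_0}$ for $i$ beyond some index, using directedness upward toward the sup), in fact $x_i\in U_n$ for all $i$ large enough. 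On the other hand $x\notin V_n$, and $V_n$ is Scott open hence an upset, so $x_i\notin V_n$ for every $i$ (since $x_i\sqsubseteq x$ would force $x\in V_n$ otherwise). Combining, $x_i\in U_n\setminus V_n\subseteq A$ for all $i$ large enough. I should be slightly careful about the phrase ``large enough'' in a directed system: the clean statement is that the set of indices $i$ with $x_i\in A$ is cofinal and upward closed along the directedness, which is what ``large enough'' means here; this is the only point requiring a moment's attention, but it follows routinely from $U_n$ being Scott open and $V_n$ being an upset.

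For the ``in particular'' clause, assume $D$ is a continuous domain with basis $B$ and $x\in A$. By Proposition~\ref{p:domain}, $\ddown x\cap B$ is directed with supremum $x$, so the directed system $(b)_{b\in\ddown x\cap B}$ has supremum $x\in A$. Applying the first part, there is $b_0\in\ddown x\cap B$ such that every $b\in\ddown x\cap B$ with $b\sqsupseteq b_0$ lies in $A$. Now I claim $\llbracket b_0,x]=\{y\mid b_0\ll y\leq x\}\subseteq A$. Let $y$ satisfy $b_0\ll y\sqsubseteq x$. Using the interpolation property (Proposition~\ref{p:Scott topology}(1)) applied to $b_0\ll y$, there is $b'\in B$ with $b_0\ll b'\ll y$; then $b'\in B$, $b'\ll y\sqsubseteq x$ so $b'\in\ddown x\cap B$, and $b_0\sqsubseteq b'$ (by Proposition~\ref{p:waybelow}(1), $b_0\ll b'$ implies $b_0\sqsubseteq b'$). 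Hence $b'\in A$. But we want $y\in A$, not merely $b'\in A$, so a single interpolation is not enough; instead I should build a directed system inside $\llbracket b_0,y]\cap B$ converging to $y$ and conclude $y\in A$ from the first part again. Concretely: every $b'\in\ddown y\cap B$ with $b'\sqsupseteq b_0$ lies in $\ddown x\cap B$ and is $\sqsupseteq b_0$, hence in $A$; and the set of such $b'$ is cofinal in the directed set $\ddown y\cap B$ (given any $c\in\ddown y\cap B$, interpolate $b_0\ll y$ and use directedness of $\ddown y\cap B$ to find an upper bound of $c$ and an interpolant above $b_0$), so it is itself directed with supremum $y$. Applying the first part of the proposition to this system yields $y\in A$.

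The main obstacle is the last step: passing from ``$b'\in A$ for basis elements $b'$ above $b_0$ below $x$'' to ``$y\in A$ for arbitrary $y$ with $b_0\ll y\sqsubseteq x$.'' One cannot simply substitute $y$ for a basis element, so one has to re-invoke the first assertion of the proposition with a directed system living in $\ddown y\cap B$, and verify that the sub-system of elements above $b_0$ is still directed and still has supremum $y$ — this uses the interpolation property together with directedness of $\ddown y\cap B$. Everything else (the Scott-openness manipulations, the upset arguments) is routine.
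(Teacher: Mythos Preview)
Your argument for the first assertion is correct and is exactly the paper's proof: fix $n$ with $x\in U_n\setminus V_n$, use that $U_n$ is Scott open to get $x_i\in U_n$ for $i$ large, and use that $V_n$ is an upset together with $x\notin V_n$ to get $x_i\notin V_n$ for all $i$.

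The ``in particular'' clause, however, has a genuine gap. In your final step you write: ``Applying the first part of the proposition to this system yields $y\in A$.'' But the first part goes in the opposite direction: it assumes the \emph{supremum} lies in $A$ and concludes that the \emph{elements} eventually do. You are trying to deduce that the supremum $y$ lies in $A$ from the fact that the elements of a directed system below $y$ lie in $A$; that is the converse implication, and it is false for general $\tbSigma^0_2$ sets (e.g.\ in $\PN$ the set $\FIN$ of finite subsets is $\tbSigma^0_2$, every $\{0,\ldots,n\}$ is in $\FIN$, but their supremum $\N$ is not).

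The fix is to argue directly with the \emph{specific} $U_n,V_n$ from the first paragraph rather than to invoke the first assertion as a black box. Given $x\in U_n\setminus V_n$, use Proposition~\ref{p:Scott topology}(3) to choose $b\in B$ with $b\ll x$ and $b\in U_n$. Now if $b\ll y\sqsubseteq x$ then $b\sqsubseteq y$ gives $y\in U_n$ (upset), and $y\sqsubseteq x\notin V_n$ gives $y\notin V_n$ (upset); hence $y\in U_n\setminus V_n\subseteq A$. This is the argument the paper has in mind (its written proof only spells out the first assertion), and it avoids the circularity entirely.
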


\begin{proof}
Let $A=\bigcup_{n\in\N}U_n\setminus V_n$ where $U_n, V_n$ are open.
Let $x=\sqcup_i x_i$. Since $x\in U_n$ and $U_n$ is Scott open,
the $x_i$'s are in $U_n$ for $i$ large enough.
Since $x\notin V_n$ and $V_n$ is an upset, no $x_i$ is in $V_n$.
Thus, the $x_i$'s are in $U_n\setminus V_n$ hence in $A$ for $i$ large enough.
\end{proof}

To keep the presentation self-contained, we prove the extensions to continuous domains
of Theorems 3.10 and 3.14 in \cite{selivanov2005}.
Proofs are almost the same:
compact elements are replaced by elements of some fixed basis $B$
and the countability of the basis
(which ensures the countability of the rank of alternating $B$-trees, cf. Definition~\ref{def:A alt})
is replaced by the assumed countability of the considered ordinal $\alpha$.

\begin{definition}\label{def:A alt}
Let $(D,\sqsubseteq)$ be a dcpo and $B\subseteq D$.
If $A$ is any subset of $D$ we let $\chi_A:A\to\Bool$
be the characteristic function of $A$ (which takes value $1$ on $A$).
\begin{enumerate}
\item
A $B$-tree is a map $f:T\to B$ where $T$ is a tree.
\item
A $B$-tree $f:T\to B$ is $(A,\varepsilon)$-alternating if the tree
$\chi_A\circ f:T\to\Bool$ is $\varepsilon$-alternating
in the sense of Definition~\ref{def:tree}, i.e.
$f(\sigma n)\in A\Leftrightarrow f(\sigma)\notin A$
for all $n\in\N$ and $\sigma,\sigma n\in T$.
\item
Let $\preceq$ be $\sqsubseteq$ or $\ll$.
A $B$-tree $f$ is $\preceq$-increasing
if $f(\sigma)\preceq f(\sigma n)$ for all $\sigma,\sigma n\in T$.
\end{enumerate}
\end{definition}

\begin{proposition}\label{p:no alt omega chain}
Let $(D,\sqsubseteq)$ be a dcpo and $B\subseteq D$.
If $A$ is $\tbDelta^0_2(D)$ then every $A$-alternating 
$\sqsubseteq$-increasing $B$-tree $f:T\to B$ is well-founded.
\end{proposition}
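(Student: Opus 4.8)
The plan is to argue by contradiction: suppose $f:T\to B$ is an $A$-alternating $\sqsubseteq$-increasing $B$-tree that is \emph{not} well-founded, so $T$ has an infinite branch $\sigma_0\prec\sigma_1\prec\sigma_2\prec\cdots$. Applying $f$ along this branch produces an infinite $\sqsubseteq$-increasing sequence $b_0\sqsubseteq b_1\sqsubseteq b_2\sqsubseteq\cdots$ in $B\subseteq D$. Since a $\sqsubseteq$-increasing sequence is in particular a directed subset of the dcpo $D$, it has a supremum $x=\sqcup_i b_i$. The key point is that $x$ lies either in $A$ or in $E\setminus A$, and since $A\in\tbDelta^0_2(D)$ both $A$ and its complement are $\tbSigma^0_2(D)$; so in either case $x$ belongs to some $\tbSigma^0_2(D)$ set that contains the tail of the directed system $(b_i)_i$.

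First I would invoke Proposition~\ref{p:interval}: a $\tbSigma^0_2(D)$ set containing the supremum of a directed system contains all sufficiently large members of that system. Thus, whether $x\in A$ or $x\in E\setminus A$, there is an index $N$ such that for all $i\geq N$ the element $b_i$ lies on the \emph{same side} of $A$ as $x$ does --- i.e. the truth value of ``$b_i\in A$'' is constant for $i\geq N$. But the branch $(\sigma_i)_i$ is a branch of an $A$-alternating tree, meaning $f(\sigma_i)\in A \Leftrightarrow f(\sigma_{i+1})\notin A$ for consecutive elements, so the values $\chi_A(b_i)$ must alternate between $0$ and $1$ forever. This alternation is incompatible with eventual constancy, giving the desired contradiction. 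Hence no infinite branch exists and $f$ is well-founded.

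The only delicate step is making sure Proposition~\ref{p:interval} genuinely applies on both sides of $A$, i.e. that we may treat $E\setminus A$ just like $A$. This is immediate because $A\in\tbDelta^0_2(D)=\tbSigma^0_2(D)\cap\tbPi^0_2(D)$, so $E\setminus A\in\tbSigma^0_2(D)$ as well, and Proposition~\ref{p:interval} is stated for arbitrary $\tbSigma^0_2(D)$ sets. A minor bookkeeping point is that consecutive elements $\sigma_i,\sigma_{i+1}$ on the branch are a parent-son pair (the length increases by exactly one along a branch), so the alternation hypothesis of Definition~\ref{def:A alt}(2) applies verbatim to the pair $b_i,b_{i+1}$; no interpolation or continuity of the domain is needed here, which is why the statement holds for all dcpo's and not just continuous domains.
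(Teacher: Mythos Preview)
Your proposal is correct and follows exactly the same approach as the paper's proof: take an infinite branch, form the supremum $x$ of the $\sqsubseteq$-increasing sequence $f(\sigma_i)$, and invoke Proposition~\ref{p:interval} on whichever of $A$, $D\setminus A$ contains $x$ (both being $\tbSigma^0_2$) to force the values $\chi_A(f(\sigma_i))$ eventually constant, contradicting alternation. The paper's version is just a terser rendition of the same argument.
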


\begin{proof} 
Let $(n_i)_{i\in\N}$ be an infinite branch of $f$.
The $f(n_0\ldots n_i)$'s are $\sqsubseteq$-increasing.
Let $x$ be their supremum.
Suppose $x\in A$. Proposition~\ref{p:interval} establishes that
$f(n_0\ldots n_i)$ is in $A$ for all $i$ large enough,
contradicting alternation of $f$.
Idem if $x\notin A$. 
\end{proof}
The next result is a slight variant of
(Selivanov 2005, Proposition 3.8 and Theorem 3.10).

\begin{theorem}\label{thm:Dalpha}
Let $(D,\sqsubseteq)$ be a dcpo, $B\subseteq D$,
$A\in\tbDelta^0_2(D)$ and $0<\alpha<\aleph_1$.
\begin{enumerate}
\item
If $A\in\tbD_\alpha(D)$ then there is no $\leq$-increasing
$(A,1)$-alternating $B$-tree with rank $\alpha$.

\item
If $D$ is a continuous domain with basis $B$,
the following conditions are equivalent.

\begin{enumerate}

\item[$i.$]\
$A\in\tbD_\alpha(D)$,
\item[$ii.$]\
there is no $\sqsubseteq$-increasing
$(A,1)$-alternating $B$-tree with rank $\alpha$.
\item[$iii.$]\
there is no $\ll$-increasing
$(A,1)$-alternating $B$-tree with rank $\alpha$.
\end{enumerate}

\end{enumerate}
\end{theorem}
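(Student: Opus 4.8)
The plan is to prove item~1 by contraposition in the style of Selivanov: assuming a $\leq$-increasing $(A,1)$-alternating $B$-tree $f\colon T\to B$ of rank $\alpha$ exists, I will show $A\notin\tbD_\alpha(D)$. The key tool is Lemma~\ref{l:subtree}, which lets me extract, for every $\beta<\alpha$ and every $\varepsilon\in\Bool$, an $\varepsilon$-alternating subtree of rank $\beta$; passing through the embedding this yields $\leq$-increasing $(A,\varepsilon)$-alternating $B$-trees of every smaller rank and either starting value. Now suppose $A=D_\alpha((A_\beta)_{\beta<\alpha})$ with each $A_\beta$ open; by Proposition~\ref{p:Hausdorff}(3) I may take the $A_\beta$'s increasing. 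The idea is to walk down the tree $T$ from the root, using the values $f(\sigma)$ to test membership in the successive $A_\beta$'s: because $f$ is $\leq$-increasing and the $A_\beta$'s are open upsets (Scott-open in a dcpo means upset), once $f(\sigma)\in A_\beta$ all descendants are in $A_\beta$, so the ``first $\beta$ with $f(\sigma)\in A_\beta$'' can only decrease along a branch. Matching the parity bookkeeping of $D_\alpha$ against the alternation of $\chi_A\circ f$ along a maximal branch, the rank-$\alpha$ hypothesis forces a branch along which the indices stay $\geq\alpha$, i.e.\ an infinite branch, contradicting well-foundedness of $T$ (equivalently, this is exactly where the rank count $\alpha$ is sharp). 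This is the same combinatorial argument as in \cite{selivanov2005}, with compact elements replaced by basis elements $b\in B$.

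For item~2 the three implications to establish are $i.\Rightarrow ii.$, $ii.\Rightarrow iii.$, and $iii.\Rightarrow i.$, under the extra hypothesis that $D$ is a continuous domain with basis $B$. The implication $ii.\Rightarrow iii.$ is immediate since $\ll$ implies $\sqsubseteq$ (Proposition~\ref{p:waybelow}(1)), so a $\ll$-increasing tree is in particular $\sqsubseteq$-increasing; contrapositively, absence of $\sqsubseteq$-increasing alternating trees of rank $\alpha$ gives absence of $\ll$-increasing ones. The implication $i.\Rightarrow ii.$ is item~1 specialized to $\preceq{}={}\sqsubseteq$: indeed item~1 already gives the $\leq$-version, and again $\sqsubseteq$-increasing is a weaker demand than $\leq$-increasing on a $B$-tree — wait, it is the \emph{same} ($\sqsubseteq$ \emph{is} $\leq$ on $D$); so $i.\Rightarrow ii.$ is literally item~1. (One should phrase item~1 with $\sqsubseteq$ consistently.) The real content is $iii.\Rightarrow i.$, or rather its contrapositive $\neg i.\Rightarrow\neg iii.$: if $A\notin\tbD_\alpha(D)$ then there exists a $\ll$-increasing $(A,1)$-alternating $B$-tree of rank $\alpha$.

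For that contrapositive I plan to run the residue/adjoin analysis of Hausdorff (as used in Theorem~\ref{thm:Hausdorff}) \emph{relativized to the way-below relation}. Define $F_0=D$, $F_{2\beta+1}=\overline{A\cap F_{2\beta}}$, $F_{2\beta+2}=\overline{(D\setminus A)\cap F_{2\beta+1}}$, $F_\lambda=\bigcap_{\beta<\lambda}F_\beta$; since $A\in\tbDelta^0_2(D)$ and $A\notin\tbD_\alpha(D)$, the argument of Theorem~\ref{thm:Hausdorff} shows these closed sets do not stabilize before $\alpha$ (else $A$ would be in $\tbD_{\theta+1}(D)\subseteq\tbD_\alpha(D)$ by Proposition~\ref{p:Hausdorff}). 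Then I build the tree by recursion on rank: a node carries a basis element $b$ and an ordinal $\beta$ with $b$ in the relevant $F_\beta\setminus F_{\beta+1}$ (or a point of $F_\beta$ approximated from $B$), and I use the interpolation property (Proposition~\ref{p:Scott topology}(1)) together with Proposition~\ref{p:interval} — which guarantees that for $x\in A$ there is $b\in B$ with $b\ll x$ and $\llbracket b,x]\subseteq A$ — to produce children that are $\ll$-above the parent while switching side between $A$ and $D\setminus A$. Because the $F_\beta$'s fail to stabilize below $\alpha$, I can keep branching so that the constructed tree has rank exactly $\alpha$; $\ll$-increase comes from interpolation, $(A,1)$-alternation from the $A$ vs.\ $D\setminus A$ alternation of the $F_\beta$ construction, and the $1$ at the root from choosing the root value inside $\overline{A\cap F_0}$ appropriately.

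\textbf{Main obstacle.} The delicate point is the bookkeeping that makes the tree have rank \emph{exactly} $\alpha$ rather than merely unbounded-but-well-founded rank: at limit stages one must use $\llbracket b,x]$-style neighborhoods and the cofinality argument already seen in the proof of Lemma~\ref{l:subtree} to splice together subtrees of ranks cofinal in $\lambda$ while preserving $\ll$-increase from a single basis element at the splice node — this is exactly where passing from compact elements to a general basis, and from countable to arbitrary countable $\alpha$, must be checked to go through mutatis mutandis. The rest is routine transcription of \cite{selivanov2005}.
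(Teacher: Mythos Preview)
Your plan for item~1 carries the right idea but the articulation ``forces a branch along which the indices stay $\geq\alpha$, i.e.\ an infinite branch, contradicting well-foundedness'' is garbled: the tree \emph{is} well-founded by hypothesis, and every index $\delta(\sigma)=\min\{\beta:f(\sigma)\in A_\beta\}$ is already $<\alpha$. What actually happens is that the parity flip at each edge forces $\delta(\sigma n)<\delta(\sigma)$ strictly, so $\rank_T(\nil)\leq\delta(\nil)<\alpha$, contradicting $\rank(T)=\alpha$. The paper packages the same observation as an induction on $\alpha$: from $\delta(\nil)=\beta<\alpha$ it passes to $A^-=D_\beta((A_\gamma)_{\gamma<\beta})\in\tbD_\beta$, notes that $f$ is $(A^-,0)$-alternating on all of $T$, and invokes Lemma~\ref{l:subtree} to extract an $(A^-,1)$-alternating subtree of rank $\beta$, contradicting the inductive hypothesis.

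For $iii.\Rightarrow i.$ in item~2 your route diverges from the paper's and has a genuine gap. You propose the contrapositive: from $A\notin\tbD_\alpha(D)$ build a $\ll$-increasing $(A,1)$-alternating $B$-tree of rank $\alpha$ via the Hausdorff residue sequence $(F_\beta)$. But the passage from ``the $F_\beta$ do not stabilize before $\alpha$'' to an explicit tree of rank exactly $\alpha$, rooted at a single fixed $b_0\in B\cap A$ with $\ll$-increase at every edge, is the whole difficulty, and you have only gestured at it. A label $b\in F_\beta\setminus F_{\beta+1}$ does not by itself decide whether $b\in A$; it only places $b$ in an iterated closure. Arranging $\ll$ between parent and child via interpolation while simultaneously controlling which side of $A$ each label falls on, \emph{and} making the children's ranks cofinal in $\alpha$ under a fixed root, is a real construction you have not supplied (your ``main obstacle'' is in fact the entire proof).

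The paper avoids this by proving $iii.\Rightarrow i.$ directly rather than by contraposition. For each $b\in B$ it forms the canonical tree $T_b$ of all $\ll$-increasing $A$-alternating finite sequences starting at $b$; these are well-founded by Proposition~\ref{p:no alt omega chain}. Hypothesis $iii.$ gives $\rank(T_b)<\alpha$ whenever $b\in A$ and $\leq\alpha$ otherwise. One then \emph{defines} the open sets
\[
A_\beta=\bigcup\bigl\{\uup f_b(\sigma): b\in B,\ \rank_{T_b}(\sigma)\leq\beta,\ f_b(\sigma)\in A\Leftrightarrow\beta\not\sim\alpha\bigr\}
\]
and checks $A=D_\alpha((A_\beta)_{\beta<\alpha})$, using Proposition~\ref{p:interval} and interpolation only in the verification step. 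No tree of prescribed rank ever needs to be built; the ranks of the canonical trees $T_b$ do all the work. This is the standard Selivanov argument, and the Hausdorff-residue detour is neither needed nor straightforwardly workable here.
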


\begin{proof}
1. We argue by induction on $\alpha$.
Let $A=D_\alpha((A_\beta)_{\beta<\alpha})$ where the $A_\beta$'s
are a monotone increasing $\alpha$-sequence of open sets.
By way of contradiction, suppose there exists an increasing $(A,1)$-alternating
$B$-tree $f:T\to B$ with rank $\alpha$.
Since $f(\nil)\in A$ we have
$f(\nil)\in A_\beta\setminus\bigcup_{\gamma<\beta}A_\gamma$
for some $\beta<\alpha$, $\beta\not\sim\alpha$.
Since $f(\nil)\leq f(\sigma)$ for all $\sigma\in T$ and $A_\beta$ is open
(hence, an upset),
we see that the range of $f$ is included in $A_\beta$.
If $(n)\in T$ then $f((n))\notin A$ hence
$f((n))\in A_\beta\setminus A\subseteq\bigcup_{\gamma<\beta}A_\gamma$.
Since $\bigcup_{\gamma<\beta}A_\gamma$ is open (hence, an upset)
we see that $f(\sigma)\in\bigcup_{\gamma<\beta}A_\gamma$ for all
$\sigma\in T$, $\sigma\neq\nil$.
Let $A^-=D_\alpha((A_\gamma)_{\gamma<\beta})$.
Then $f$ is an $(A^-,0)$-alternating $B$-tree.
Now, $f$ has rank $\alpha$ and Lemma~\ref{l:subtree} implies that there exists
$g\leq f$ which is an $(A^-,1)$-alternating $B$-tree with rank $\beta$.
Since $A^-$ is in $\tbD_\beta$, the inductive hypothesis is contradicted.

2. Since $i\Rightarrow ii$ is item 1 and $ii\Rightarrow iii$ is trivial,
it remains to prove $iii\Rightarrow i$.
For each $b\in B$, let $S_b$ be the family of
finite sequences $(b,b_1,\ldots,b_k)$ of elements of $B$
satisfying the following conditions:

\begin{itemize}
\item 
$b\ll b_1\ll\ldots\ll b_k$,

\item 
$b_i\in A\Leftrightarrow b_{i+1}\notin A$ for all $0\leq i<k$
(with $b_0=b$).
\end{itemize}
Fix some bijection $\theta$ between $B$ and an initial segment of $\N$.
Applying $\theta$, transform $S_b$ into an $A$-alternating $B$-tree
$f_b:T_b\to B$ such that
$$T_b=\{\nil\}\cup\{(\theta(b_1),\ldots,\theta(b_k)) \mid
                (b,b_1,\ldots,b_k) \in S_b\}$$
and $f_b(\nil)=b$, $f_b((\theta(b_1),\ldots,\theta(b_k)))=b_k$.
Since $A$ is assumed to be $\tbDelta^0_2(D)$,
Proposition~\ref{p:no alt omega chain} ensures that $f_b$ is well-founded.
Suppose there is no $(A,1)$-alternating $B$-tree with rank $\alpha$.
Then $f_b$ has rank $\leq\alpha$ if $b\notin A$
and rank $<\alpha$ if $b\in A$.
For $\beta<\alpha$ define the open sets
$$
A_\beta=
\bigcup\{\uup f_b(\sigma) \mid b\in B,\ \rank_{T_b}(\sigma)\leq\beta,\ 
f_b(\sigma)\in A\Leftrightarrow \beta\not\sim\alpha\}.
$$
To conclude we prove that $A=\tbD_\alpha((A_\beta)_{\beta<\alpha})$.
First, we show that $A\subseteq\bigcup_{\beta<\alpha}A_\beta$.
Suppose $x\in A$. Applying Proposition~\ref{p:interval}, we get
$\llbracket b,x]\subseteq A$ for some $b\ll x$, $b\in B$.
Since $b\in A$, we have $\rank(T_b)<\alpha$.
A fortiori, $\rank(T_b)\leq\beta$ with $\beta<\alpha$
and $\beta\not\sim\alpha$.
Since $f_b(\nil)=b$ we have $\uup b\subseteq A_\beta$ hence $x\in A_\beta$.
For $\beta<\alpha$, let
$A'_\beta=A_\beta\setminus\bigcup_{\gamma<\beta}A_\gamma$.
Since $A\subseteq\bigcup_{\beta<\alpha}A_\beta$,
to show $A=\tbD_\alpha((A_\beta)_{\beta<\alpha})$
it suffices to prove that
$\beta\not\sim\alpha \Rightarrow A'_\beta\subseteq A$ and
$\beta\sim\alpha \Rightarrow A'_\beta\subseteq D\setminus A$.

{\it Case $\beta\not\sim\alpha$.}
By way of contradiction, suppose $A'_\beta\not\subseteq A$ and
let $x\in A'_\beta\setminus A$. 
By Proposition~\ref{p:interval}, we have
$\llbracket c,x]\subseteq D\setminus A$ for some $c\ll x$, $c\in B$.
Now, since $x\in A_\beta$, there exist $b\in B$ and $\sigma\in T_b$
such that $x\in\uup f_b(\sigma)$, $\rank_{T_b}(\sigma)\leq\beta$,
$f_b(\sigma)\in A$.
Since $c,f_b(\sigma)\ll x$, the interpolation property gives an
$e\in B$ such that $c,f_b(\sigma)\ll e \ll x$.
Since $e\in\llbracket c,x]$ we have $e\notin A$.
Let $\sigma=(\theta(b_1),\ldots,\theta(b_k))$ where
$(b,b_1,\ldots,b_k)\in S_b$.
Since $b_k=f_b(\sigma)\in A$, $f_b(\sigma)\ll e$ and $e\notin A$,
the sequence $(b,b_1,\ldots,b_k,e)\in S_b$.
Hence $\sigma\theta(e)\in T_b$.
Now, $\rank(\sigma\theta(e))<\rank(\sigma)\leq\beta$.
Since $\beta\not\sim\alpha$, there is some $\gamma<\beta$
such that $\gamma\sim\alpha$ and
$\rank(\sigma\theta(e))\leq\gamma$.
Summing up, we have $\uup e\subseteq A_\gamma$.
Since $e\ll x$ we get $x\in A_\gamma$
which contradicts $x\in A'_\beta$.

{\it Case $\beta\sim\alpha$.}
The proof that $A'_\beta\subseteq D\setminus A$ is similar.
\end{proof}

\subsection{Ambiguous Sets in the Hausdorff Hierarchy}
\label{ss:ambiguous}

We now come to the question of whether there are ambiguous sets in the Hausdorff hierarchy.
Item 1 of the next Theorem was obtained by Selivanov (2005)
for $\omega$-algebraic domains.

\begin{theorem}\label{thm:no ambiguous degree}
Let $D$ be a continuous domain and $0\leq\alpha<\aleph_1$.
\begin{enumerate}
\item
If $D$ has a least element $\bot$ then 
$\tbD_\alpha(D)\cap\co\tbD_\alpha(D)
= \bigcup_{\beta<\alpha}\tbD_\beta(D)\cup\co\tbD_\beta(D)$.

\item
$\tbD_{\alpha+1}(D)\cap\co\tbD_{\alpha+1}(D)
= \bigcup_{\beta\leq\alpha}\tbD_\beta(D)\cup\co\tbD_\beta(D)$
for all $1\leq\alpha<\aleph_1$.

\item
In general, equality
$\tbD_\alpha(D)\cap\co\tbD_\alpha(D)
= \bigcup_{\beta<\alpha}\tbD_\beta(D)\cup\co\tbD_\beta(D)$
fails for $\alpha=1$ and for $\alpha$ limit.
\end{enumerate}
\end{theorem}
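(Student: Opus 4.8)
The inclusion $\supseteq$ in items~1 and~2 is immediate from Proposition~\ref{p:Hausdorff}(2): as $\emptyset$ and $D$ are open, $\tbD_\beta(D)\cup\co\tbD_\beta(D)\subseteq\tbD_\gamma(D)\cap\co\tbD_\gamma(D)$ for $\beta<\gamma$, and for $\alpha\ge1$ this collapses the right-hand side of item~2 to $\tbD_\alpha(D)\cup\co\tbD_\alpha(D)$. So what has to be proved is the inclusion $\subseteq$ in items~1--2 and two counterexamples in item~3, and for $\subseteq$ I would go through the tree characterisation. Fix a basis $B$ of $D$ (for item~1 enlarge it so that $\bot\in B$; adjoining a point to a basis of a continuous domain leaves it a basis). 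Since a $\ll$-increasing $(D\setminus A,1)$-alternating $B$-tree is the same thing as a $\ll$-increasing $(A,0)$-alternating $B$-tree, Theorem~\ref{thm:Dalpha}(2) gives, for $A\in\tbDelta^0_2(D)$, that $A\in\tbD_\xi(D)$ iff there is no $\ll$-increasing $(A,1)$-alternating $B$-tree of rank $\xi$ and $A\in\co\tbD_\xi(D)$ iff there is no $\ll$-increasing $(A,0)$-alternating $B$-tree of rank $\xi$. By Lemma~\ref{l:subtree} (applied to $\chi_A\circ f$) the ranks of such trees form, for each parity, an initial segment of the ordinals; let $\rho_1(A)$ (resp. $\rho_0(A)$) be the least ordinal that is not the rank of a $\ll$-increasing $(A,1)$- (resp. $(A,0)$-) alternating $B$-tree. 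Then $A\in\tbD_\alpha(D)\cap\co\tbD_\alpha(D)$ means $\rho_0(A),\rho_1(A)\le\alpha$; if $\min(\rho_0(A),\rho_1(A))<\alpha$ we are done, so everything reduces to excluding $\rho_0(A)=\rho_1(A)=\alpha$ (item~1) and $\rho_0(A)=\rho_1(A)=\alpha+1$ (item~2).

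\textbf{Item~1.}
Exchanging $A$ and $D\setminus A$ (which swaps $\rho_0,\rho_1$) I may assume $\bot\in A$. For each $\beta<\alpha$ pick a $\ll$-increasing $(A,0)$-alternating $B$-tree $g_\beta$ of rank $\beta$; there are countably many of them. Let $f$ be the $B$-tree whose root is labelled $\bot$ and whose children-subtrees are the $g_\beta$. As $\bot\ll b$ for every $b\in D$, $f$ is $\ll$-increasing; it is $(A,1)$-alternating ($\bot\in A$, each $g_\beta(\nil)\notin A$); and $\rank(f)=\sup_{\beta<\alpha}(\beta+1)=\alpha$. This contradicts $A\in\tbD_\alpha(D)$.

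\textbf{Item~2.}
Here $\alpha+1$ is a successor and $\alpha\ge1$. The idea is to re-run the argument of item~1 without a global least element, exploiting that a tree of rank $\alpha+1$ needs only one new node above a single subtree of rank $\alpha$, rather than a common root above cofinally many subtrees. Assuming $\rho_0(A)=\rho_1(A)=\alpha+1$, fix a $\ll$-increasing $(A,0)$-alternating $B$-tree $g$ of rank $\alpha$ (root $g(\nil)\notin A$) and, from a rank-$1$ $(A,1)$-alternating tree, a basic pair $a\ll u$ with $a\in A$, $u\notin A$ (available since $\alpha\ge1$). Using the interpolation property and Proposition~\ref{p:interval} --- which confines, around any point, a basic interval $\llbracket b,x]$ to one side of $A$ --- one re-roots $g$ at a basis element low enough to sit way above an element of $A$, and prepending that element yields a $\ll$-increasing $(A,1)$-alternating $B$-tree of rank $\alpha+1$, contradicting $A\in\tbD_{\alpha+1}(D)$. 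This re-rooting step is the part I expect to be delicate: one must verify that the admissible roots of rank-$\alpha$ $(A,0)$-trees reach into the ``boundary'' of $A$ far enough for interpolation to produce the needed $A$-element --- exactly where Selivanov invoked a least element, and exactly what breaks when $\alpha+1=1$ (then $A$ is clopen, hence an upset, and no element of $A$ lies below a point outside $A$).

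\textbf{Item~3.}
For $\alpha=1$, $\tbD_1(D)\cap\co\tbD_1(D)$ is the family of clopen sets and $\tbD_0(D)\cup\co\tbD_0(D)=\{\emptyset,D\}$, so any disconnected continuous domain --- e.g.\ a two-point antichain with the discrete Scott topology --- is a counterexample, witnessed by a proper non-empty clopen set. For $\alpha$ limit I would use a continuous domain without least element; for $\alpha=\omega$, take $D=\,]0,1]$ (or the right extended real line $\overrightarrow{\R}$ of Example~\ref{ex:Scott examples}) with the Scott topology, whose opens are the $]x,1]$ and whose way-below relation is $<$. Choosing $a_0>a_1>\cdots$ with $a_n\to0$ and $A=\bigcup_{k}\,]a_{2k+1},a_{2k}]$, a direct bookkeeping exhibits both $A$ and $D\setminus A$ as $D_\omega$ of increasing sequences of open sets $]x_n,1]$, so $A\in\tbD_\omega(D)\cap\co\tbD_\omega(D)$; but the $\tbD_n(D)$-sets are finite unions of intervals whereas $A$ and $D\setminus A$ each have infinitely many components, so $A\notin\tbD_n(D)\cup\co\tbD_n(D)$ for all $n<\omega$ --- equivalently, $\rho_0(A)=\rho_1(A)=\omega$, since $\ll$-increasing alternating trees realise every finite rank (start close to $0$, where arbitrarily many alternations lie above) but none of rank $\omega$ (the root of such a tree is $\ll$ all of its cofinally many children, confining them to an interval $]b,1]$ with only finitely many alternations left). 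A general countable limit $\alpha$ is handled by the transfinite analogue; in each case the mechanism is that, with no least element, the ranks of alternating trees may stop below the limit $\alpha$ while both parities occur at every smaller rank --- precisely what item~2 prevents at successor levels.
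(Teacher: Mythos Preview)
Your items~1 and~3 are correct and close to the paper. For item~1 the paper proceeds by induction on $\alpha$, splitting into successor and limit cases, but the underlying move is the same as yours: place $\bot$ (which is way-below everything, hence can be adjoined to any basis) at the root above an $(A,0)$-alternating tree, or a family of such trees, to raise the rank. Your formulation via $\rho_0,\rho_1$ handles both cases at once. For item~3 your two-point antichain works for $\alpha=1$, and your $\omega$-example in $]0,1]$ with the Scott topology is essentially the paper's $\overrightarrow{\R}$ construction; the paper carries out the general countable limit case explicitly with a continuous strictly decreasing $\alpha$-sequence $(a_\beta)_{\beta<\alpha}$, which is what your ``transfinite analogue'' should spell out.

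Item~2 is where you have a genuine gap, not merely a delicate step. Your plan requires, given a $\ll$-increasing $(A,0)$-alternating $B$-tree $g$ of rank $\alpha$, an element $a\in A\cap B$ with $a\ll g(\nil)$---or at least that \emph{some} rank-$\alpha$ $(A,0)$-tree admits such an $a$. Nothing you invoke produces this: the pair $a\ll u$ extracted from a rank-$1$ $(A,1)$-tree bears no relation to $g(\nil)$, and neither interpolation nor Proposition~\ref{p:interval} places an $A$-element way-below a prescribed point of $D\setminus A$. (Concretely, if $D$ happens to have a least element $\bot\notin A$, then $g$ may have root $\bot$, and $\ddown\bot\cap A=\emptyset$; your argument must also cover this case.) The paper bypasses the difficulty with a different device: writing $A=D_{\alpha+1}((A_\beta)_{\beta\leq\alpha})$, it adjoins a fresh point $\bot_\alpha$ with $\bot_\alpha<x$ for all $x\in A_\alpha$, forming $D^+=D\cup\{\bot_\alpha\}$ and $A^+=A\cup\{\bot_\alpha\}$. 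Since $A_\alpha$ is an open upset containing $A$, every label of any $(A,1)$-alternating tree lies above $\bot_\alpha$; this lets the item~1 argument run in $D^+$ to force $A^+$ or its complement down to level $\leq\alpha$, after which one checks that the resulting representation in $D^+$ yields one for $A$ or $D\setminus A$ in $D$. That ``partial bottom'' trick is the idea missing from your sketch.
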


\begin{proof}
1. Inclusion right to left comes from Proposition~\ref{p:Hausdorff}.
Inclusion left to right is proved by induction on $\alpha$.
Case $\alpha=0$ is trivial since both members are empty.

Suppose $\alpha=\beta+1$ and
$A\in\tbD_\alpha(D)\cap\co\tbD_\alpha(D)$.
Towards a contradiction, suppose
$A\notin\tbD_\beta(D)\cup\co\tbD_\beta(D)$,
i.e. neither $A$ nor $D\setminus A$ is in $\notin\tbD_\beta(D)$.
Theorem~\ref{thm:Dalpha} proves the existence,
for $\varepsilon=0,1$, of an increasing
$(A,\varepsilon)$-alternating $B$-tree
$f_\varepsilon:T_\varepsilon\to B$ with rank $\beta$.
Observe that every domain basis contains $\bot$, hence $\bot\in B$.
Let $\tau=0$ if $\bot\in A$ and $\tau=1$ otherwise,
so that $f_\tau(\nil)\neq\bot$.
Let $S=\{0\sigma\mid\sigma\in T_\tau\}$
and $g:S\to B$ be such that $g(\nil)=\bot$
and $g(0\sigma)=f_\varepsilon(\sigma)$.
Then $g$ is an increasing $(A,1-\tau)$-alternating
$B$-tree with rank $\beta+1=\alpha$.
Applying again Theorem~\ref{thm:Dalpha},
if $\tau=0$ this contradicts $A\in\tbD_\alpha(D)$,
if $\tau=1$ this contradicts $A\in\co\tbD_\alpha(D)$. 

Suppose now $\alpha$ is a limit ordinal and $(\alpha_n)_{n\in\N}$
is increasing with $\alpha$ as supremum.
Towards a  contradiction, suppose
$A\notin\bigcup_{\beta<\alpha}\tbD_\beta(D)\cup\co\tbD_\beta(D)$.
Then, for every $n$ and $\varepsilon=0,1$, there is
a monotone increasing
$(A,\varepsilon)$-alternating $B$-tree
$f_{n,\varepsilon}:T_{n,\varepsilon}\to B$ with rank $\alpha_n$.
Let $\tau\in\{0,1\}$ be as above.
Set $S=\{n\sigma\mid\sigma\in T_{n,\varepsilon}\}$
and $g:S\to B$ be such that $g(\nil)=\bot$
and $g(n\sigma)=f_{n,\varepsilon}(\sigma)$.
Then $g$ is a monotone increasing $(A,\tau)$-alternating
$B$-tree with rank $\alpha$.
As above, this gives a contradiction.

2. Let $A=D_{\alpha+1}((A_\beta)_{\beta\leq\alpha})$
and $D\setminus A=D_{\alpha+1}((E_\beta)_{\beta\leq\alpha})$
where the $A_\beta, E_\beta$'s are $\alpha+1$ increasing sequences
of open sets.
Set $D^+=D\cup\{\bot_\alpha\}$ and extend the order of $D$
by setting
$\bot_\alpha<x$ for all $x\in A_\alpha$.
Set $A^+=A\cup\{\bot_\alpha\}$.
Then, in $D^+$, we have
$A^+=D_{\alpha+1}((A^*_\beta)_{\beta\leq\alpha})$
where $A^*_\alpha=A_\alpha\cup\{\bot_\alpha\}$
and $A^*_\beta=A_\beta$ for $\beta<\alpha$.
Also, $D^+\setminus A^+=D\setminus A$.
Observe that the $A^*_\beta$'s and $E_\beta$'s are open in $D^+$.
Thus, $A^+$ is ambiguous at level $\alpha+1$ in $D^+$.
Though $\bot_\alpha$ is not a least element in $D^+$,
it is smaller than $A_\alpha$ hence smaller than all
elements labeling an $(A,1)$-alternating $B$-tree with rank $\alpha$.
Arguing as in item 1, we see that $A^+$ must have level  at most $\alpha$
in $D^+$. 
If $D^+\setminus A^+=D\setminus A$
has level $\alpha$ in $D^+$ then $D\setminus A$ is obtained via
open sets not containing $\bot_\alpha$ hence $D\setminus A$
has level $\alpha$ in $D$.
Suppose now that $A^+$ has level $\alpha$ in $D^+$.
Then $A^+=D_{\alpha}((C_\delta)_{\delta<\alpha})$
where the $C_\delta$'s are open in $D^+$.
Since $\bot_\alpha\in A^+$ there is some $\beta<\alpha$
with parity different from $\alpha$ such that
$\bot_\alpha\in C_\beta$ hence
$C_\beta\supset\bigcup_{\delta<\alpha}A_\delta$.
This yields 
$A=D_{\alpha}((C^-_\gamma)_{\gamma<\alpha})$
where $C^-_\gamma=C_\gamma$ for $\gamma<\beta$
and $C^-_\gamma=C_\gamma\cap D$ for $\gamma\geq\beta$.
So, $A$ has level $\beta$ in $D$.
Finally, observe that the argument breaks down if $\alpha=0$.

3. In $\mix$, the set $0\mix$ is in
$\tbD_1(D)\cap\co\tbD_1(D)$
but not in $\tbD_0(D)\cap\co\tbD_0(D)$.
Let $\alpha$ be a countable limit ordinal.
Let $(a_\beta)_{\beta<\alpha}$ be a strictly decreasing
$\alpha$-sequence of reals with no lower bound
and which is continuous:
$a_\lambda=\inf_{\delta<\lambda}a_\delta$
for all limit $\lambda<\alpha$.
In $\overrightarrow{\R}$
(cf. Example~\ref{ex:Scott examples}),
consider the set
$A=\tbD_\alpha((A_\beta)_{\beta<\alpha})$
where $A_\beta=]a_\beta,+\infty]$.
Then $\overrightarrow{\R}\setminus A=
\tbD_\alpha((A^*_\beta)_{\beta<\alpha})$
where $A^*_0=\emptyset$ and
$A^*_{\beta+1}=A_\beta$ and $A^*_\lambda=A_\lambda$
for $\lambda$ limit.
Thus, $A$ is ambiguous at level $\alpha$.
It is easy to check that $A$ is not in a lesser level.
\end{proof}

\section{Effective Borel and Hausdorff Hierarchies}
\label{s:effective hierarchies}

\subsection{Effective Topological Spaces}\label{ss:effective space}

The first step to deal with the effective Borel hierarchy
is the definition of effective topological space.
We follow Weihrauch's book (2000),
Definition 3.2.1, page 63.

\begin{definition}
An {\it effective topological space} is a pair $(E,(O_n)_{n\in\N})$ where
$E$ is a topological space admitting a countable basis
and $(O_n)_{n\in\N}$ is an enumeration (not necessarily injective)
of some topological basis of $E$.
\end{definition}

\begin{remark}
For the notion of {\it computable topological space}, one also requires that
the equivalence relation $\{(m,n)\mid O_m=O_n\}$ be computably enumerable.
For instance, this is the case if $n\mapsto O_n$ is injective,
which is usually true. We shall not need this notion.
\end{remark}

\begin{definition}\label{def:approximation eff}
An effective  approximation space is a triple
$(E,(O_n)_{n\in\N},\ll)$ such that $(O_n)_{n\in\N}$
enumerates a topological basis $\+B$,
the relation $\{(i,j)\mid O_i\ll O_j\}$ is computably enumerable
and $\ll$ is an approximation relation on $\+B$
(cf. Definition~\ref{def:approximation}).
\end{definition}

\subsection{Effective $\omega$-Continuous Domains}
\label{ss:effective domains}

\begin{definition}\label{def:computableScottdomain}
An $\omega$-continuous domain is {\it effective}
if it admits a basis $B=\{b_n \mid n\in\N\}$ such that
$\{(i,j) \mid b_j\ll b_i\}$ is computably enumerable.
\end{definition}

\begin{example}\label{def:examplescomputableScottdomain}
$(\INF,\subseteq)$ is not a continuous domain
(cf. Example~\ref{ex:Pinfini}) but it is an effective approximation space.
Other spaces in Example~\ref{ex:Scott examples} are effective $\omega$-continuous domains.
\end{example}

\begin{proposition}\label{p:eff topo approx}
Every effective $\omega$-continuous domain is an effective
approximation space (hence an effective topological space).
\end{proposition}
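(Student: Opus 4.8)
The plan is to make effective the proof of Proposition~\ref{p:approx} in the case of continuous domains. Let $D$ be an effective $\omega$-continuous domain with basis $B=\{b_n\mid n\in\N\}$ for which $G=\{(i,j)\mid b_j\ll b_i\}$ is computably enumerable; then so is the transpose $G^{-1}=\{(n,m)\mid b_n\ll b_m\}$. By Proposition~\ref{p:approx}, the family $\+B=\{\uup b\mid b\in B,\ \uup b\neq\emptyset\}$ is a topological basis of $D$ on which ``$U\ll V$ iff $V=\uup c$ for some $c\in U$'' is a (convergent) approximation relation. So to get an effective approximation space I must (a) produce a computable enumeration of $\+B$, and (b) exhibit an approximation relation on $\+B$ whose associated set of index pairs is c.e.

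For (a): by the interpolation property (Proposition~\ref{p:Scott topology}), for $b\in B$ one has $\uup b\neq\emptyset$ exactly when there is $b_m\in B$ with $b\ll b_m$ --- if $b\ll y$ then interpolation gives $b_m\in B$ with $b\ll b_m\ll y$. Hence $N_0=\{n\mid\uup b_n\neq\emptyset\}=\{n\mid\exists m\ (n,m)\in G^{-1}\}$ is the projection of a c.e. set, so it is c.e. Assuming $D\neq\emptyset$ (otherwise the statement is trivial), $N_0$ is a nonempty c.e. set; I fix a computable surjection $e\colon\N\to N_0$ and set $O_k=\uup b_{e(k)}$, so that $(O_k)_{k\in\N}$ enumerates $\+B$.

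For (b): I would use the relation $\ll$ on $\+B$ defined by ``$U\ll V$ iff there are $b,c\in B$ with $U=\uup b$, $V=\uup c$ and $b\ll c$''. Being a sub-relation of the one in Proposition~\ref{p:approx}, it satisfies conditions (1) and (4) of Definition~\ref{def:approximation} a fortiori; conditions (2) and (3) are checked exactly as in Proposition~\ref{p:approx}, using monotonicity of $\ll$ (Proposition~\ref{p:waybelow}) together with interpolation to place the required approximant, hence a generator, inside $B$. Moreover, for fixed $c$ one has $b\ll c\iff c\in\uup b\iff b_{e(i)}\ll c$ whenever $\uup b=\uup b_{e(i)}$, so $O_i\ll O_j$ depends on $i$ only through $b_{e(i)}$ and unwinds to: there is $c\in B$ with $b_{e(i)}\ll c$ and $\uup c=\uup b_{e(j)}$; combined with the c.e. relation $G$ this should yield that $\{(i,j)\mid O_i\ll O_j\}$ is c.e.

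The hard part is exactly the clause ``$\uup c=\uup b_{e(j)}$'': the map $b\mapsto\uup b$ need not be injective on $B$ (a basic open set may have several generators), and equality of basic open sets is not decidable from indices. So the delicate step is to show that the set of index pairs realising $\ll$ is nonetheless computably enumerable --- for instance by first replacing $B$ (using Lemma~\ref{l:approximation and basis} for the latitude this allows) by a basis on which $b\mapsto\uup b$ is injective, or by arguing directly that the superfluous generators never produce pairs beyond those already enumerated through $b_{e(i)}\ll b_{e(j)}$. Once (a) and (b) hold, $(D,(O_k)_k,\ll)$ is an effective approximation space in the sense of Definition~\ref{def:approximation eff}, and in particular an effective topological space.
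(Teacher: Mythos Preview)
Your approach—effectivizing the continuous-domain case of Proposition~\ref{p:approx}—is exactly what the paper intends; its entire proof reads ``Immediate from the proof of Proposition~\ref{p:approx}.'' You have gone further and correctly isolated a genuine subtlety that this one-liner glosses over: since $\ll$ is required (by Definition~\ref{def:approximation eff}) to be a relation on the \emph{set} $\+B$ of basic open sets rather than on indices, and the map $b\mapsto\uup b$ need not be injective on $B$, showing that $\{(i,j)\mid O_i\ll O_j\}$ is c.e.\ is not automatic.

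But your proposal stops at naming the difficulty; it does not resolve it, and your second suggested fix (``the superfluous generators never produce pairs beyond those already enumerated through $b_{e(i)}\ll b_{e(j)}$'') is false. Take $D=\N\cup\{\infty_1,\infty_2\}$ ordered as a chain $0<1<\cdots<\infty_1<\infty_2$, with basis $B=D$. Then $\uup\infty_1=\uup\infty_2=\{\infty_2\}$, while $\infty_1\ll\infty_2$ holds and $\infty_1\ll\infty_1$ fails. If $e(i)$ and $e(j)$ both index $\infty_1$, then $O_i=O_j=\{\infty_2\}$ and $O_i\ll O_j$ holds under your relation (witnessed by $b=\infty_1$, $c=\infty_2$), yet $b_{e(i)}\ll b_{e(j)}$ is $\infty_1\ll\infty_1$, which fails. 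So the index set you must show c.e.\ is strictly larger than the one you can read off from the given c.e.\ way-below data. Your first suggestion (pass to a sub-basis on which $b\mapsto\uup b$ is injective) might work, but you give no argument that this can be done \emph{effectively} from the given data; Lemma~\ref{l:approximation and basis} is purely topological and says nothing about effectivity. As written, the proof is incomplete.
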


\begin{proof}
Immediate from the proof of Proposition~\ref{p:approx}.
\end{proof}

\subsection{Borel Codes}\label{ss:Borel codes}

There are several ways to code Borel sets by elements in the Baire space
$\Baire$, cf. \cite{moschovakisbook} \S3H and 7B,
\cite{marker2002} \S7.
We choose a coding adapted to the context of effective topological spaces.

\begin{definition}
Let $(E,(O_n)_{n\in\N})$ be an effective topological space.
We code Borel sets by well-founded trees
(cf. Definition~\ref{def:tree}).
To any $\sigma$ in a well-founded tree $T$
we attach a Borel subset $\means{\sigma}$ of $E$
by induction on the rank.
\begin{enumerate}[(iii)]
\renewcommand{\theenumi}{(\roman{enumi})}

\item
If $\rank_T(\nil)=0$ (i.e. the tree $T$ is reduced to its root)
then $\means{\nil}_T=\emptyset$,

\item
If $\rank_T(\sigma)=0$ and $\sigma\neq\nil$ and $\sigma$ has last element $n$
then $\means{\sigma}_T=O_n$,

\item
If $\rank_T(\sigma)=1$ then
$\means{\sigma}_T
=\bigcup_{n\in\N:\sigma n\in T} O_n$.

\item
If $\rank_T(\sigma)\geq2$ then
$\means{\sigma}_T
=\bigcup_{n\in\N:\sigma\conc 2n,\sigma\conc2n+1\in T}
\means{\sigma\conc2n}_T\setminus\means{\sigma\conc2n+1}_T$.

\end{enumerate}
The $\Sigma$-Borel and $\Pi$-Borel sets coded by $T$ are
$\means{T}_\Sigma=\means{\nil}_T$
and $\means{T}_\Pi=E\setminus\means{\nil}_T$.
\end{definition}
Observing that the above inductive definition of $\means{\sigma}_T$
follows exactly that of the Borel hierarchy, a straightforward induction
on the ordinal $\alpha$ shows the following result.

\begin{proposition}
Let $(E,(O_n)_{n\in\N})$ be an effective topological space.
A subset of $E$ is in $\tbSigma^0_\alpha(E)$ (respectively $\tbPi^0_\alpha(E)$)
if and only if it is of the form
$\means{T}_\Sigma$ (respectively $\means{T}_\Pi$)
for some well-founded tree with rank at most $\alpha$.
\end{proposition}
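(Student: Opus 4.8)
The statement to prove is the correspondence between Borel classes $\tbSigma^0_\alpha$, $\tbPi^0_\alpha$ and well-founded trees of rank at most $\alpha$, via the coding $\means{\cdot}$. The plan is to argue by transfinite induction on $\alpha$, establishing both directions (every set in $\tbSigma^0_\alpha(E)$ is coded by a tree of rank $\le\alpha$, and every $\means{T}_\Sigma$ with $\rank(T)\le\alpha$ lies in $\tbSigma^0_\alpha(E)$) simultaneously, and deducing the $\tbPi^0_\alpha$ case by complementation since $\means{T}_\Pi = E\setminus\means{T}_\Sigma$ and $\tbPi^0_\alpha(E) = \{E\setminus X\mid X\in\tbSigma^0_\alpha(E)\}$ by Definition~\ref{def:Borel}.

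First I would settle the base case $\alpha=1$: a tree of rank $0$ codes $\emptyset$ (clause (i)) or a single basic open $O_n$ (clause (ii)); a tree of rank $1$ codes $\bigcup_{n:\sigma n\in T}O_n$ (clause (iii)), which is exactly an arbitrary open set once we recall the $O_n$ enumerate a basis. Conversely any open set is such a union, so it has a code of rank $\le 1$; and $\emptyset$, being open, is also fine. This matches $\tbSigma^0_1(E) = $ open sets. For the inductive step at $\alpha\ge 2$, I would use clause (iv): a root $\nil$ of rank $\ge 2$ has $\means{\nil}_T = \bigcup_n \means{\nil 2n}_T\setminus\means{\nil(2n{+}1)}_T$, where each subtree $T_{\nil 2n}$, $T_{\nil(2n+1)}$ has rank strictly smaller than $\rank_T(\nil)\le\alpha$, hence (by the induction hypothesis applied to those smaller ranks $\beta<\alpha$) the sets $\means{\nil 2n}_T$, $\means{\nil(2n+1)}_T$ lie in $\tbSigma^0_\beta(E)$ for appropriate $\beta<\alpha$. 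A difference of two such sets is a Boolean combination of sets in $\bigcup_{\beta<\alpha}\tbSigma^0_\beta(E)$, and a countable union of these is precisely a $\tbSigma^0_\alpha(E)$ set by the definition. For the converse at $\alpha\ge 2$, given $X = \bigcup_n B_n$ with each $B_n$ a Boolean combination of sets from $\bigcup_{\beta<\alpha}\tbSigma^0_\beta(E)$: by Proposition~\ref{p:Borel normal}(1) each Boolean combination of such sets is a finite union of differences $P\setminus Q$ with $P,Q\in\bigcup_{\beta<\alpha}\tbSigma^0_\beta(E)$ (the cited proposition is stated for open sets, but the same observation---a Boolean combination is a finite union of differences---holds for any class closed under finite unions and intersections, so I would either invoke it at the right level or spell out the one-line set-theoretic fact); absorbing the finite unions into the outer countable union, write $X = \bigcup_m P_m\setminus Q_m$. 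By the induction hypothesis each $P_m$, $Q_m$ has a code of rank $<\alpha$; plug these codes in as the subtrees $T_{\nil 2m}$, $T_{\nil(2m+1)}$ of a fresh root, obtaining a well-founded tree of rank $\le\alpha$ coding $X$.

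The main obstacle, and the only place requiring care, is the bookkeeping of ranks in the inductive step: one must check that when $\alpha$ is a successor the difference $P_m\setminus Q_m$ can genuinely be realized with $P_m,Q_m$ of rank $\le\alpha-1$, and when $\alpha$ is a limit that each $P_m,Q_m$ sits at some $\beta_m<\alpha$, so the glued tree still has rank $\le\alpha$ (the root's rank is $\sup_m(\rank(T_{\nil 2m})+1, \rank(T_{\nil(2m+1)})+1)$, and this supremum of ordinals strictly below $\alpha$ is $\le\alpha$ in both the successor and limit cases). A secondary subtlety is handling the degenerate low-rank coding clauses uniformly---e.g. an empty set or a single basic open appearing as a ``difference'' subtree---but these are absorbed by allowing rank $0$ and rank $1$ codes and noting $\emptyset\in\tbSigma^0_1(E)\subseteq\tbSigma^0_\beta(E)$ for all $\beta$, and by Proposition~\ref{p:Borel}(1) that lower Borel classes embed in higher ones. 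Once the $\tbSigma$ statement is proved for all $\alpha$, the $\tbPi$ statement is immediate: $Y\in\tbPi^0_\alpha(E)$ iff $E\setminus Y\in\tbSigma^0_\alpha(E)$ iff $E\setminus Y=\means{T}_\Sigma$ for some well-founded $T$ of rank $\le\alpha$ iff $Y=\means{T}_\Pi$ for such a $T$.
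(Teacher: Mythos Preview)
Your approach is correct and matches the paper's, which gives no proof beyond the sentence ``observing that the above inductive definition of $\means{\sigma}_T$ follows exactly that of the Borel hierarchy, a straightforward induction on the ordinal $\alpha$ shows the following result.'' You have supplied the details of that induction.

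One small caution on the point you flag as a ``secondary subtlety'': clause~(ii) assigns $\means{\sigma}_T=O_n$ (not $\emptyset$) at a \emph{non-root} leaf $\sigma$ with last entry $n$, so grafting the rank-$0$ tree $\{\nil\}$ as a subtree at position $m$ does not reproduce $\emptyset$ there---it produces $O_m$. Hence in the converse direction, when a subtrahend $Q_m=\emptyset$ arises, you cannot simply plug in the rank-$0$ code; and there is no rank-$1$ code for $\emptyset$ either unless some $O_k=\emptyset$. Your phrase ``absorbed by allowing rank $0$ and rank $1$ codes'' does not quite cover this. It is easily repaired (e.g.\ assume without loss of generality that $\emptyset$ occurs among the $O_n$, or rewrite the union so that every subtrahend is a nonempty basic open), but it is worth being precise since the coding clauses (i)--(ii) are genuinely asymmetric between the root and non-root leaves.
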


\subsection{The Effective Borel Hierarchy}\label{ss:effective Borel}

Borel codes lead to a definition of the  effective Borel hierarchy.
First, we recall some classical results about computable ordinals
which imply that there is a huge latitude to represent them:
from computability with very low resource complexity
up to hyperarithmeticity.

\begin{proposition}\label{p:omega1CK}
There exists an ordinal $\CK$
(the Church-Kleene ordinal)
such that, for every countable ordinal $\alpha$,
the following properties are equivalent.
\begin{enumerate}[(iii)]
\renewcommand{\theenumi}{(\roman{enumi})}

\item
$\alpha<\CK$,

\item
$\alpha$ is the rank of some computable well-founded tree,

\item
$\alpha$ is the rank of some hyperarithmetical (i.e. $\Delta^1_1$)
well-founded tree~\cite{spector1955},

\item
$\alpha$ is the order type of some computable linear order on $\N$,
i.e. $\alpha$ is computable,

\item
$\alpha$ is the order type of some hyperarithmetical linear order on $\N$,
i.e. $\alpha$ is $\Delta^1_1$,

\item
$\alpha$ is the order type of some linear order on $\N$
which is computable in real time and logarithmic space~\cite{dehornoy1986,grigorieff1990}.
\end{enumerate}
Moreover, in (iv-vi), one can suppose that, for each $n\in\N$,
the set of elements with rank exactly $n$
and that with ranks in
$\{\omega\alpha+n \mid \alpha<\CK\}$
are computable.  
\end{proposition}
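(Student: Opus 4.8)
The plan is to organise the six conditions around condition~(iv): since $\CK$ was defined as the least ordinal that is \emph{not} the order type of a computable linear order on $\N$, condition~(iv) is essentially the definition of ``$\alpha<\CK$'', and everything else will be reduced to it. Two genuinely non-elementary inputs are needed, both already available in the cited literature: Spector's boundedness theorem \cite{spector1955} for the hyperarithmetic cases, and the Dehornoy--Grigorieff efficient-representation theorem \cite{dehornoy1986,grigorieff1990} for condition~(vi).

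First I would dispose of the elementary implications. For (iv)~$\Leftrightarrow$~(i): if $\alpha$ is the order type of a computable linear order $L$ on $\N$, then for every $x$ the initial segment $\{y\mid y<_L x\}$ is again a computable subset of $\N$ carrying a computable order, and since for each $\beta<\alpha$ there is a point whose set of predecessors has order type exactly $\beta$, every $\beta<\alpha$ is computable; hence the computable ordinals form an initial segment of the ordinals, which by definition of $\CK$ is exactly $[0,\CK)$, whence $\alpha$ is computable if and only if $\alpha<\CK$. Next, (iv)~$\Rightarrow$~(ii): from a computable well-order of type $\alpha$, the tree of its finite strictly $<_L$-descending sequences is computable, well-founded and of rank~$\alpha$. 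The implications (ii)~$\Rightarrow$~(iii), (iv)~$\Rightarrow$~(v) and (vi)~$\Rightarrow$~(iv) are trivial, as ``computable'' entails ``$\Delta^1_1$'' and ``computable in real time and logarithmic space'' entails ``computable''. The hyperarithmetic cases (iii)~$\Rightarrow$~(i) and (v)~$\Rightarrow$~(i) are closed by Spector's theorem in its $\Sigma^1_1$-boundedness form (via the Kleene--Brouwer ordering for the tree case): the rank of a $\Delta^1_1$ well-founded tree, and the order type of a $\Delta^1_1$ well-order, are bounded by a computable ordinal, hence are $<\CK$. Together with (i)~$\Leftrightarrow$~(iv) this makes (i)--(v) equivalent; and (iv)~$\Rightarrow$~(vi) is precisely the Dehornoy--Grigorieff theorem, that every computable ordinal is the order type of some linear order on~$\N$ whose comparison relation is computable in real time and logarithmic space, the natural such presentation being through Cantor normal forms.

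For the ``moreover'' clause one replaces the presentation obtained above by a normalized one in which the codeword of each point $x$ displays enough of the Cantor normal form of $\mathrm{rk}_L(x)=\mathrm{otype}(\{y\mid y<_L x\})$ to recover the finite tail $k$ in the unique decomposition $\mathrm{rk}_L(x)=\lambda+k$ (with $\lambda$ a limit ordinal or $0$) and to decide whether $\mathrm{rk}_L(x)<\omega$; the Dehornoy--Grigorieff presentations are, or can be taken to be, of this shape. Then, for every~$n$, both $\{x\mid\mathrm{rk}_L(x)=n\}$ and $\{x\mid\mathrm{rk}_L(x)\in\{\omega\alpha+n\mid\alpha<\CK\}\}$ — the latter being exactly the set of $x$ whose rank has finite tail~$n$ — are computable. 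I expect this to be the only real obstacle: for an \emph{arbitrary} computable well-order neither set need be computable, since one cannot in general decide that a point has no further predecessors, so here the argument must genuinely exploit the syntactic structure of the low-complexity presentations and not merely computability of~$<_L$.
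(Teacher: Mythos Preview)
Your sketch of the equivalences (i)--(vi) is sound and, in fact, more detailed than what the paper offers: the paper states Proposition~\ref{p:omega1CK} as a compendium of classical results with citations and gives no proof of the equivalences themselves.

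The substantive divergence is in the ``moreover'' clause. You propose to exploit the internal syntactic structure of the Dehornoy--Grigorieff presentations so that the Cantor-normal-form tail of each rank can be read off the codeword. The paper instead uses a far simpler device, recorded in Remark~\ref{rk:omega1CK} and attributed to Ershov: starting from \emph{any} computable well-order $(\N,R)$ of type $\alpha$, form the lexicographic product $(\N,R)\times(\N,<)$, which has type $\omega\alpha$; then the set $\N\times\{n\}$ is exactly the set of elements whose rank is $\equiv n\pmod\omega$, and this is trivially computable. No special structure of the original presentation is needed, and the trick applies uniformly to (iv), (v) and (vi). Your closing remark that ``the argument must genuinely exploit the syntactic structure of the low-complexity presentations and not merely computability of $<_L$'' is therefore a misdiagnosis: computability of $<_L$ is enough, provided one is willing to pass from $\alpha$ to $\omega\alpha$ (which is harmless, since $\omega\alpha<\CK$ whenever $\alpha<\CK$, and for the downstream applications in \S\ref{ss:effective Hausdorff} only the parity of ranks matters). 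Your route via Cantor normal forms is plausible for (vi) specifically but is both more speculative --- it rests on unstated properties of the Dehornoy--Grigorieff encodings --- and less portable than Ershov's trick.
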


\begin{remark}\label{rk:omega1CK}
The last assertion in Proposition~\ref{p:omega1CK} is a simple trick
in~\cite{ershov1968}.
If $(\N,R)$ has type $\alpha$ then the lexicographic product of
$(\N,R)$ and $(\N,<)$ has type $\omega\alpha$ and the (computable) set
$\N\times\{n\}$ consists of all elements of rank $\equiv n\bmod\omega$.
\end{remark}
The definition of the effective version of Borel hierarchy for countably based spaces,
and the basic properties appeared previously in \cite{selivanov2008}.

\begin{definition}[Effective Borel hierarchy]
\label{def:effBorelfinite}
Let $(E,(O_n)_{n\in\N})$ be an effective topological space
and $\alpha$ an ordinal such that $1\leq\alpha<\CK$.
\begin{enumerate}
\item
The {\it effective Borel classes}
$\Sigma_\alpha^0(E)$, $\Pi_\alpha^0(E)$, $\Delta_\alpha^0(E)$,
are defined as follows:
\begin{itemize}

\item A set is in the class $\Sigma_\alpha^0(E)$ (respectively $\Pi_\alpha^0(E)$)
if and only if it is of the form
$\means{T}_\Sigma$ (respectively $\means{T}_\Pi$)
for some well-founded tree $T$ with rank at most $\alpha$
such that both $T$ and the rank order relation on $T$
(i.e. $\{(s,t)\in T\times T\mid \rank_T(s)\leq\rank_T(t)\}$) are computable.

\item The class $\Delta_\alpha^0(E)=\Sigma_\alpha^0(E)\cap\Pi_\alpha^0(E)$.
\end{itemize}

\item
A sequence $(X_n)_{n\in\N}$ of subsets of $E$
is uniformly $\Sigma_\alpha^0(E)$ if there exists a computable sequence
of well-founded trees $(T_n)_{n\in\N}$ with ranks at most $\alpha$
such that $X_n=\means{T_n}_\Sigma$.
Idem with uniformly $\Pi_\alpha^0(E)$.

\item
$G_\delta$ and $F_\sigma$ are the classes of intersections of
uniformly $\Sigma_1^0(E)$ sequences
(respectively unions of uniformly $\Pi_1^0(E)$ sequences).
\end{enumerate}
\end{definition}

\begin{remark}\label{rk:Borel effective}
\begin{enumerate}
\item  If $\preceq$ is an order on $\N$ isomorphic to the ordinal $\alpha$
then the set of sequences $(n_1,\ldots,n_k)$ such that
$n_1\succeq n_2 \ldots \succeq n_k$ is a well-founded tree with rank $\alpha$
and computable rank order relation (i.e. it satisfies the above condition (a)).

\item
 The requirement  that the rank order relation on $T$
is computable (in condition 1)
allows to get the usual definition of $\Sigma^0_\alpha(E)$
for finite $\alpha$'s.
For instance, a subset $X$ of $E$ is $\Sigma^0_2(E)$
(respectively $\Sigma^0_3(E)$) if and only if
there exists computably enumerable sets $A,B\subseteq\N^2$
(respectively $A,B\subseteq\N^3$)
such that 
$X=\bigcup_{i\in\N}
\left(\bigcup_{j: (i,j)\in A}O_j\right)
\setminus
\left(\bigcup_{j:(i,j)\in B}O_j\right)$
\\
(respectively,
$X=\bigcup_{i\in\N}\bigcap_{j\in\N}
\left(E\setminus\bigcup_{k:(i,j,k)\in A}O_k\right)
\cup
\left(\bigcup_{k: (i,j,k)\in B}O_k\right)$).
\end{enumerate}
\end{remark}

\begin{proposition}\label{p:Borel effective}
All assertions in Proposition~\ref{p:Borel}
hold for the effective Borel
classes with the proviso that
$1\leq\alpha<\CK$
and countable unions (respectively intersections)
are relative to uniformly $\Sigma_\alpha^0(E)$ (respectively $\Pi_\alpha^0(E)$)
sequences of sets.
\end{proposition}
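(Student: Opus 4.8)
The plan is to revisit the elementary proof of Proposition~\ref{p:Borel} and to notice that each closure property it asserts is realised by an explicit, \emph{effective and uniform} operation on Borel codes (the well-founded trees of \S\ref{ss:Borel codes}) which, in addition, does not raise the rank above the bound allowed by Definition~\ref{def:effBorelfinite}. Since by that definition $X\in\Sigma^0_\alpha(E)$ means exactly that $X=\means{T}_\Sigma$ for some computable well-founded tree $T$ of rank $\le\alpha$ whose rank order relation is computable (and likewise for $\Pi^0_\alpha(E)$, recalling $\means{T}_\Pi=E\setminus\means{T}_\Sigma$), it suffices, for each item, to exhibit such a tree transformation, to check that it maps computable codes to computable codes --- uniformly in the code, hence mapping uniformly $\Sigma^0_\alpha(E)$ (resp.\ $\Pi^0_\alpha(E)$) families to single computable codes --- and to verify the rank bookkeeping.

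The concrete steps. For item~1 the inclusions $\Sigma^0_\beta(E)\subseteq\Sigma^0_\alpha(E)$ and $\Pi^0_\beta(E)\subseteq\Pi^0_\alpha(E)$ are immediate, since $1\le\beta<\alpha<\CK$ and a tree of rank $\le\beta$ is one of rank $\le\alpha$; the crossed inclusions use a \emph{complementation} operation: from a computable rank-$\le\beta$ code $T$ for $Y$, form the tree whose fresh root has child $0$ carrying the fixed rank-$1$ computable code $\{\nil\}\cup\{(n)\mid n\in\N\}$ of $E=\bigcup_nO_n$ and child $1$ carrying $T$; its $\Sigma$-value is $E\setminus Y$ and its rank is $\max(2,\beta+1)=\beta+1\le\alpha$ (the case $Y=\emptyset$ being trivial via $E\setminus\emptyset=E$), whence $Y=\means{T}_\Pi\in\Pi^0_\alpha(E)$, and symmetrically $\Pi^0_\beta(E)\subseteq\Sigma^0_\alpha(E)$. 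For item~2, complementation of codes is free ($\means{T}_\Sigma\leftrightarrow\means{T}_\Pi$), so it is enough to provide two uniform rank-preserving code operations, \emph{union} and \emph{intersection}: finite and uniform countable unions of $\Sigma^0_\alpha(E)$ sets are obtained by gluing, at one common root of rank $\le\alpha$, all the difference-pairs occurring at the roots of the given codes, reindexed by a computable pairing, after rewriting each rank-$\le1$ code (an open set) in difference-union form $X=\bigcup_kO_k\setminus\emptyset$ (harmless for the rank as soon as $\emptyset$ is available as a rank-$\le1$ subtree-code, for which it is convenient and harmless to assume $\emptyset$ occurs among the $O_n$; for $\alpha=1$ no rewriting is needed, open sets being closed under all the operations at stake); finite intersections are handled dually, using that the basis is effectively closed under pairwise intersection (a standard extra hypothesis on an effective topological space, satisfied in all the concrete spaces of this paper); and then the $\Pi^0_\alpha$ classes inherit finite unions, finite intersections and uniform countable intersections by De~Morgan, while $\Delta^0_\alpha(E)=\Sigma^0_\alpha(E)\cap\Pi^0_\alpha(E)$ inherits from both sides. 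Finally, the effective form of ``closed under continuous inverse images'' is closure under inverse images along an \emph{effectively continuous} $f\colon D\to E$ (one for which $(f^{-1}(O_n))_n$ is uniformly $\Sigma^0_1(D)$): in a rank-$\le\alpha$ code for $X$, substitute for each maximal rank-$\le1$ subtree --- a leaf coding some $O_n$, or a rank-$1$ node coding $\bigcup_{n\in S}O_n$ --- the rank-$\le1$ code of $f^{-1}(O_n)$, resp.\ of $\bigcup_{n\in S}f^{-1}(O_n)=f^{-1}\bigl(\bigcup_{n\in S}O_n\bigr)$; this leaves the rank-$\ge2$ skeleton of the code untouched, so the result is a computable rank-$\le\alpha$ code with computable rank order.

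The main obstacle is not any single construction --- each is routine once written down --- but the combined bookkeeping that every transformation above (a) preserves computability of \emph{both} the tree and its rank order relation, and (b) keeps the rank within $\alpha$. For (a) one exploits that the rank of every newly created node is known in advance: it is $0$, $1$, or one more than the ranks of the finitely many subtrees being pasted in, which are computable by induction on the code. For (b) the delicate point is the treatment of the degenerate small ranks, i.e.\ having at hand controlled-rank subtree-codes for $E$ (rank $1$), for $\emptyset$ (rank $\le1$ under the convention above, else the rank-$2$ code $E\setminus E$), and for single basic opens, so that folding an open set into a difference-union or an intersection costs only what $\alpha\ge2$ can absorb; this is exactly why the statement is phrased for $1\le\alpha<\CK$, and why the ``uniformly $\Sigma^0_\alpha$'' proviso is essential: without a uniformly computable sequence of codes there is no way to assemble a single computable code for a countable union, so the hypothesis is sharp.
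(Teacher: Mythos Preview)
The paper states Proposition~\ref{p:Borel effective} without proof, treating it as the routine effective analogue of Proposition~\ref{p:Borel} (which itself is asserted to ``follow from elementary set theory'' and is also left unproved). Your proposal therefore does not conflict with anything in the paper; it simply supplies the bookkeeping the authors chose to omit.

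Your sketch is essentially correct. The key idea---that each closure property is witnessed by an explicit, uniform, rank-bounded transformation of Borel codes---is the right one, and you handle the main pitfalls: the degenerate rank-$0$ case when complementing, and the need to rewrite rank-$\le 1$ codes as trivial differences before gluing under a common root. Two minor points are worth flagging. First, your treatment of finite intersections says ``dually'', but intersection of $\Sigma^0_\alpha$ sets is not the formal dual of their union; what actually works is the distributive identity $(\bigcup_i A_i\setminus B_i)\cap(\bigcup_j C_j\setminus D_j)=\bigcup_{i,j}(A_i\cap C_j)\setminus(B_i\cup D_j)$ together with an induction on $\alpha$ for intersections and unions at lower levels---you should make that induction explicit rather than invoking duality. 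Second, you correctly note that two mild side hypotheses are needed (that $\emptyset$ occurs among the $O_n$, and that the enumerated basis is effectively closed under binary intersection); these are not part of the paper's bare Definition of effective topological space, so strictly speaking the proposition as stated needs them, and it is good that you flag them. With those caveats, your argument is sound and fills the gap the paper leaves.
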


\subsection{Hausdorff Codes and the Effective Hausdorff Hierarchy}\label{ss:effective Hausdorff}

The definition of the effective difference hierarchy for countably based spaces
and their basic properties appeared previously in \cite{selivanov2008}.

\begin{definition}\label{def:eff Hausdorff}
Let $(E,(O_n)_{n\in\N})$ be an effective topological space.
\begin{enumerate}
\item A Hausdorff $\alpha$-code for a set $X$ in $\tbD_\alpha(\tbSigma^0_\beta(E))$
is a triple $(\preceq,P,(T_n)_{n\in\N})$ such that

\begin{itemize}
\item
$\preceq$ is a well-order of type $\alpha$ on $\N$
or on a finite initial segment of $\N$,
\item
$P=\{n\in\dom(\preceq) \mid \varphi(n)\sim\alpha\}$
where $\varphi$ is the unique isomorphism from $(\dom(\preceq),\preceq)$
onto the ordinal $\alpha$,
\item
$(T_n)_{n\in\N}$ is a family of well-founded trees
with ranks at most $\beta$,
\item
$X=\bigcup_{p\in P} \means{T_p}_\Sigma 
\setminus \cup_{\varphi(q)<\varphi(p)} \means{T_q}_\Sigma$.
\end{itemize}

\item The effective Hausdorff classes $\tlD_\alpha(\Sigma^0_\beta(E))$
are defined as follows:
for $1\leq\alpha<\CK$,
a set $X$ is in $\tlD_\alpha(\Sigma^0_\beta(E))$
if and only if it admits an $\alpha$-code $(\preceq,P,(T_n)_{n\in\N})$
such that $\preceq$ and $P$ are computable
and the $T_n$'s and the rank relations on the $T_n$'s are uniformly computable,
(i.e. $\{(n,s) \mid s\in T_n\}$ and 
$\{(n,s,t)\mid s,t\in T_n, \rank_{T_n}(s) \leq \rank_{T_n}(t)\}$
are computable).
\end{enumerate}
The effective class $\tlD_\alpha(\Sigma^0_1(E))$ is also denoted by
$\tlD_\alpha(E)$.
\end{definition}

\begin{proposition}\label{p:Hausdorff Delta Borel effective}
Propositions~\ref{p:Hausdorff},
\ref{p:Hausdorff Delta Borel}
and \ref{p:inverseimagesHausdorff}
hold with the effective Hausdorff classes and effectively continuous maps.
\end{proposition}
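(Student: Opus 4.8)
The three statements will be obtained by running, \emph{mutatis mutandis}, the proofs of Propositions~\ref{p:Hausdorff}, \ref{p:Hausdorff Delta Borel} and \ref{p:inverseimagesHausdorff}, the only extra task being to check that each ordinal manipulation and each construction of a witnessing code is \emph{uniformly computable}. Three ingredients will make this routine. First, by Proposition~\ref{p:omega1CK} together with Remark~\ref{rk:omega1CK}, for every $\alpha<\CK$ there is a computable well-order of type $\alpha$ on $\N$ (or on a finite initial segment) for which, for each $n$, the set of elements whose rank lies in $\{\omega\delta+n\mid\delta<\CK\}$ is computable; since these sets partition the order one can compute, from an element, the finite part $n$ of its rank, and hence decide its parity. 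So the parity set $P$ of a Hausdorff code stays computable under any computable reindexing of the order, and computable concatenations and one-element extensions of such orders are harmless. Second, we will use the effective closure properties of the effective Borel classes recorded in Proposition~\ref{p:Borel effective}, and their evident uniform versions: uniform closure of $\Sigma^0_\beta(E)$ under countable unions, and of every $\Sigma^0_\beta(E),\Pi^0_\beta(E)$ under effectively continuous inverse images (i.e. maps $f$ sending basic open sets to uniformly $\Sigma^0_1$ sets via a computable translation of indices). Third, we will freely use the obvious uniform tree recodings---grafting a uniformly computable family of trees below the children of a fresh root, relabelling leaves---together with the computable codes $\means{\nil}=\emptyset$ (the one-point tree) and $E=\bigcup_n O_n$ (the rank-$1$ tree $\{\nil\}\cup\{(n)\mid n\in\N\}$, since a basis covers $E$).

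For the effective form of Proposition~\ref{p:Hausdorff}, items~1 and~2 are exactly the reindexings already carried out in its proof: one pads a computable type-$\beta$ code to a computable type-$\alpha$ code by inserting $\emptyset$-codes (shifting indices by one when $\beta\not\sim\alpha$), and one turns a $\co\tlD_\alpha$-code into a $\tlD_{\alpha+1}$-code by appending a top element carrying the code of $E$. By the first ingredient the new orders, the new parity sets and the ranges of the reindexings are computable, and by the third the new families of trees are uniformly computable. Item~3 is the identity $D_\alpha((A_\gamma)_\gamma)=D_\alpha((A'_\gamma)_\gamma)$ with $A'_\gamma=\bigcup_{\delta\preceq\gamma}A_\delta$: the $A'_\gamma$ form a uniformly $\Sigma^0_\beta$ family, being uniform unions over the decidable initial segments $\{\delta\mid\delta\preceq\gamma\}$, so one gets a Hausdorff $\alpha$-code with monotone blocks.

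For the effective form of Proposition~\ref{p:Hausdorff Delta Borel}, start from a code $(\preceq,P,(T_n))$ with $\preceq,P$ computable and the $T_n$'s uniformly computable of rank $\leq\beta$, and write $X=\bigcup_{p\in P}\bigl(\means{T_p}_\Sigma\setminus\bigcup_{q\prec p}\means{T_q}_\Sigma\bigr)$. The inner unions are uniformly $\Sigma^0_\beta$ in $p$, so grafting yields a computable well-founded tree of rank $\leq\beta+1$ coding $X$, which exhibits $X\in\Sigma^0_{\beta+1}(E)$; the dual code built from $D_{\alpha+1}((A_\gamma)_{\gamma<\alpha},E)$ exhibits $X\in\Pi^0_{\beta+1}(E)$. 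For the effective form of Proposition~\ref{p:inverseimagesHausdorff}, use that $f^{-1}$ commutes with the operation $D_\alpha$: if $f$ is effectively continuous and $Y$ has Hausdorff code $(\preceq,P,(T_n))$, then $(\preceq,P,(T'_n))$ codes $f^{-1}(Y)$, where $T'_n$ is a uniformly computable rank-$(\leq\beta)$ code for $f^{-1}(\means{T_n}_\Sigma)$ provided by the uniform effective closure of $\Sigma^0_\beta(D)$ under effectively continuous inverse images.

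No genuine difficulty is expected. The one point that needs a little care is the bookkeeping of rank orders through the recodings---in particular, ensuring that the tree produced for $X$ above has a \emph{computable} rank order and rank $\leq\beta+1$, which is handled exactly as in the non-effective proofs together with the uniformity built into Definition~\ref{def:eff Hausdorff}; and, more basically, the decidability of the parity of ordinals $<\CK$ and of the ranges of the reindexings, which is precisely what the ``moreover'' clause of Proposition~\ref{p:omega1CK} (see Remark~\ref{rk:omega1CK}) supplies, together with the remark that a set which is computably enumerable along with its complement is computable.
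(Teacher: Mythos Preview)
The paper states this proposition without proof, treating the effectivization of Propositions~\ref{p:Hausdorff}, \ref{p:Hausdorff Delta Borel} and~\ref{p:inverseimagesHausdorff} as routine; your write-up carries out precisely that routine verification, tracking the computability of the reindexings, parity sets, and tree graftings in the expected way, and is correct. Your explicit invocation of the ``moreover'' clause of Proposition~\ref{p:omega1CK} (via Remark~\ref{rk:omega1CK}) to keep the parity set $P$ computable through the ordinal manipulations is exactly the point that needs saying, and your appeal to Proposition~\ref{p:Borel effective} for the uniform closure properties is what the paper intends.
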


\subsection{Does Hausdorff's Theorem Fully Effectivize?}
\label{ss:Hausdorff eff}

\begin{problem}
Equality
$\displaystyle{\bigcup_{\alpha<\omega_1^{CK}}\tlD_\alpha(E) = \Delta^0_2(E)}$
holds in computable Polish spaces,
(cf. Selivanov 2003, pages 76-79 for the Baire space).
Is this also true for more general spaces including
effective $\omega$-continuous domains
endowed with the Scott topology?
\end{problem}

Contrary to what was the case with the other results,
the proof of Theorem~\ref{thm:Hausdorff} does not effectivize.
The reason is that
although the  topological closure of a $\Delta^0_2$ set
is closed,  hence $\tbPi^0_1$, it may not be $\Pi^0_1$.
For instance, let $X$ be any countable $\Delta^0_2(\N)$ subset of $\N$.
In the real line, $X$ is $\Delta^0_2(\R)$ and closed
hence $\tbPi^0_1(\R)$, but $X$ is not $\Pi^0_1(\R)$.
Idem in the Baire space with the set $\{f\in\Baire\mid f(0)\in X\}$.
This difficulty is mentioned in \cite{selivanov2005}, page 53, lines 6-7,
with open sets: in the proof of his Theorem 3.10 page 50,
open sets $A_\beta$'s are defined using some $\Delta^0_2$
set $A$: this is  a stumbling block to get $\Sigma^0_1$ sets.
For the Baire space $\Baire$,
Selivanov (2003)  uses a proof different from
Hausdorff's original one.
We adapt it to get the following
weak effective version of Hausdorff's theorem.

\begin{theorem}\label{thm:Hausdorff effective}
Let $(E,(O_n)_{n\in\N},\ll)$ be an effective topological
approximation space.
Then $(F_\sigma\cap G_\delta)(E)
\subseteq \bigcup_{\alpha<\CK}\tlD_\alpha(E)
\subseteq \Delta^0_2(D)$.
In particular, if $(F_\sigma\cap G_\delta)(E)=\Delta_2^0(E)$ then
the effective Hausdorff's theorem holds.
\end{theorem}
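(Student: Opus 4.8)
The statement has two inclusions to prove, plus the trivial "In particular" clause. The second inclusion $\bigcup_{\alpha<\CK}\tlD_\alpha(E)\subseteq\Delta^0_2(E)$ is just the effective version of Proposition~\ref{p:Hausdorff Delta Borel}, which is already asserted (as part of Proposition~\ref{p:Hausdorff Delta Borel effective}): a $\tlD_\alpha(E)$ set is a computable union of differences of $\Sigma^0_1(E)$ sets, hence $\Sigma^0_2(E)$, and its complement is $D_{\alpha+1}((\ldots),E)$, hence $\Pi^0_2(E)$. So I will dispatch that inclusion in one sentence. The "In particular" clause is then immediate: if $(F_\sigma\cap G_\delta)(E)=\Delta^0_2(E)$, the two inclusions squeeze $\bigcup_{\alpha<\CK}\tlD_\alpha(E)$ between two copies of $\Delta^0_2(E)$, giving equality, which is exactly the effective Hausdorff theorem.

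The real content is the first inclusion $(F_\sigma\cap G_\delta)(E)\subseteq\bigcup_{\alpha<\CK}\tlD_\alpha(E)$. The idea, following Selivanov (2003) for the Baire space as the excerpt indicates, is to take $A$ both $F_\sigma$ and $G_\delta$ and build a difference representation of $A$ directly from its two codes, without ever taking topological closures. Write $A=\bigcup_n C_n$ with $C_n$ uniformly $\Pi^0_1(E)$ (closed, from the $F_\sigma$ code) and $A=\bigcap_m U_m$ with $U_m$ uniformly $\Sigma^0_1(E)$ (open, from the $G_\delta$ code); WLOG the $C_n$ are increasing and the $U_m$ decreasing. The plan is to run a Hausdorff-style "residue" recursion but using these \emph{effective} closed/open sets in place of iterated closures: roughly, at even stages peel off a $C_n$-type piece and at odd stages an $U_m$-complement piece, so that the sets produced stay $\Sigma^0_1(E)$ (open) with effective codes, and the process must terminate at some ordinal which — because the basis is countable and everything in sight is computably presented — is below $\CK$. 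I would use the approximation-space structure only where Theorem~\ref{thm:Hausdorff} used the $\Pi^0_2$ Baire property, namely to guarantee that the recursion reaches $\emptyset$; the effective approximation relation $\ll$, being c.e., is what keeps the construction inside the effective hierarchy and keeps the terminating ordinal computable (one identifies it, à la Remark~\ref{rk:omega1CK}, with the rank of a computable well-founded tree indexed by basic open sets).

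More concretely, I would set up for a candidate $A\in(F_\sigma\cap G_\delta)(E)$ a tree $T$ of finite $\ll$-increasing sequences of basic open sets, decorated by which "clause" ($C_n$ being met, or $U_m$ being escaped) is being handled, mimicking the $S_b$/$f_b$ construction in the proof of Theorem~\ref{thm:Dalpha}: the effectivity of $\ll$ and the uniform computability of the codes for the $C_n$'s and $U_m$'s make $T$ and its rank relation computable, so $\rank(T)<\CK$; then one reads off an $\alpha$-code for $A$ with $\alpha=\rank(T)+1$ from the layers of $T$, exactly as open sets $A_\beta$ were extracted there. The density/Baire input (now the $\Pi^0_2$ Baire property of approximation spaces, Theorem~\ref{thm:Pi2 Baire}, applied as in the Claim of Theorem~\ref{thm:Hausdorff}) is what forces $T$ to be well-founded, so the process halts.

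\textbf{Expected main obstacle.} The delicate point is precisely the one the excerpt flags: avoiding topological closures. In the classical proof the residues $F_\alpha$ are closures of $\Delta^0_2$ sets, which are $\tbPi^0_1$ but not effectively so; the whole trick is to replace "closure of $A\cap F$" by a suitable \emph{given} closed set from the $F_\sigma$ presentation and dually for the $G_\delta$ presentation, and then to check that the two presentations interact correctly — that peeling alternately against $C_n$ and against $E\setminus U_m$ still yields, in the limit, exactly $A$ and not some larger or smaller set. Verifying this "$A=B$" identity at the end of the recursion (the analogue of the $A=B$ verification in Theorem~\ref{thm:Hausdorff}), while simultaneously maintaining that every set produced is a \emph{computable} union of basic opens, is where the argument is genuinely more subtle than its classical counterpart, and is the step I would write out in full detail; the effectivity bookkeeping (c.e.-ness of $\ll$ $\Rightarrow$ computability of $T$ $\Rightarrow$ $\rank(T)<\CK$) is routine by comparison, using Proposition~\ref{p:omega1CK}.
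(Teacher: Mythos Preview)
Your overall architecture --- build a computable well-founded tree of $\ll$-increasing basic opens with an alternation condition, then read off a difference representation indexed by a computable ordinal $<\CK$ --- matches the paper's. But the proposal has a genuine gap and one misattribution.

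\textbf{The gap: computability of the tree.} Mimicking the trees $S_b/f_b$ of Theorem~\ref{thm:Dalpha} will not give a \emph{computable} tree: those nodes are conditioned on ``$b_i\in A$'' versus ``$b_i\notin A$'', and membership in a $(G_\delta\cap F_\sigma)$ set is not decidable. Your alternative ``residue recursion peeling against $C_n$ and $E\setminus U_m$'' has the same defect: deciding whether a basic open meets a given $\Pi^0_1$ set, or is contained in a given $\Sigma^0_1$ set, is not computable. The paper's key device --- absent from your plan --- is to carry \emph{time stages}: a node is a pair $(m,t)$ of a basic-open index and an enumeration stage, and the alternation condition is stated purely in terms of the finite data enumerated by time $t$ (via computable functions $F_\varepsilon(m,t)$ counting how many of the intersectands $\bigcup_{i\in I^\varepsilon_q}O_i$ are already witnessed to contain $O_m$ at stage $t$). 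This is what makes the tree $\mathcal T$ decidable; without such a stage parameter your $T$ is only hyperarithmetical, and you do not land in any $\tlD_\alpha(E)$.

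\textbf{The misattribution: well-foundedness.} The $\tbPi^0_2$ Baire property plays no role here. Well-foundedness comes directly from condition~(4) of the approximation relation: an infinite branch yields a $\ll$-increasing sequence $(O_{m_\ell})_\ell$ whose intersection is nonempty; any $x$ in it then satisfies, by the unbounded growth of $F_0$ and $F_1$ along the branch, both $x\in A$ and $x\in E\setminus A$. No density statement is invoked. (In Theorem~\ref{thm:Hausdorff} the Baire property forced $F_\theta=\emptyset$; that is a different argument for a different construction.)

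A smaller point: the passage from the tree to the $\alpha$-sequence of open sets in the paper is via the Kleene--Brouwer linearization of $\mathcal T$ (then Ershov's trick from Remark~\ref{rk:omega1CK} to control parities), not via the rank layers; the rank function of a computable well-founded tree need not itself be computable, so indexing the $A_\beta$'s by rank, as in Theorem~\ref{thm:Dalpha}, would not yield an effective Hausdorff code.
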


\begin{proof}
Without loss of generality, we suppose that the family
$(O_n)_{n\in\N}$ is effectively closed under finite union:
$O_{i_1}\cup\ldots\cup O_{i_k}=O_{\lambda(\{i_1,\ldots,i_k\})}$
for some computable function $\lambda:\FIN\setminus\{\emptyset\}\to\N$.
Using Remark~\ref{rk:Borel effective},
for $\varepsilon=0,1$, let $(I_n^\varepsilon)_{n\in\N}$
be a family of subsets of $\N$ such that
$\{(n,i)\mid i\in I_n^\varepsilon\}$
is computably enumerable and
$$
(*)\qquad
A=\bigcap_{n\in\N} \bigcup_{i\in I_n^1} O_i
\quad,\quad
E\setminus A=\bigcap_{n\in\N} \bigcup_{i\in I_n^0} O_i\ .
$$
We can suppose that the $I_n^\varepsilon$'s are closed
under the above function $\lambda$.
Let $R$ be the computably enumerable set
$R=\{(i,j)\mid O_i\ll O_j\}$.
Let $R^{(t)}$, $I_n^{\varepsilon,t}$ be the finite parts of $R$
and $I_n^t$ obtained after $t$ steps of enumeration.
Let $F_\varepsilon:\N^2\to\N$
be the function such that 
\begin{eqnarray*}
F_\varepsilon(m,t) &=& \max\{p \mid 0\leq p\leq t\textit{\ and\ }
\forall q<p\ (m,\lambda(I_q^{\varepsilon,t}))\in R^{(t)}\}
\end{eqnarray*}
In particular, 
$$(\dagger)\qquad 
O_m
\ \subseteq\
\bigcap_{q<F_\varepsilon(m,t)} \bigcup_{i\in I_q^{\varepsilon,t}} O_i
\ \subseteq\
\bigcap_{q<F_\varepsilon(m,t)} \bigcup_{i\in I_q^{\varepsilon}} O_i.
$$
We define a family $\+T$ of finite sequences of integers in $\N$.
The empty sequence is in $\+T$.
A sequence $(t_0,\ldots,t_k)$ is in $\+T$ if and only if
the following conditions are satisfied:
\begin{enumerate}
\item[(a)]\
$m_0<\ldots<m_k$ and $t_0<\ldots<t_k$,
\item[(b)]\
$(m_{\ell+1},m_\ell)\in R^{(t_{\ell+1})}$ for all $\ell<k$.
In particular, this says that the sequence of subsets
$(O_{m_\ell})_{\ell=0,\ldots,k}$ is decreasing.
\item[(c)]\
For all $\ell\leq k$, $F_0(m_\ell,t_\ell) \neq F_1(m_\ell,t_\ell)$.
For all $\ell<k$,
if $F_\varepsilon(m_\ell,t_\ell) <F_{1-\varepsilon}(m_\ell,t_\ell)$
then $F_{1-\varepsilon}(m_{\ell+1},t_{\ell+1})
              <F_\varepsilon(m_{\ell+1},t_{\ell+1})$.
\end{enumerate}
It is clear that, as a family of finite sequences, $\+T$ is a tree
and is computable.

{\it Claim 1.} The tree $\+T$ is well-founded
(i.e. it has no infinite branch).

{\it Proof of Claim 1.}
Else consider an infinite branch $(m_\ell,t_\ell)_{\ell\in\N}$.
Condition (3) of the Definition \ref{def:effBorelfinite}
and condition (b) of the definition
of $\+T$ imply that the set $\bigcap_{\ell\in\N} O_{m_\ell}$
contains at least one element $x$.
Observe that condition (c) implies that
$F_\varepsilon(m_\ell,t)\geq\lfloor \ell/2\rfloor$.
Using $(\dagger)$ for even and odd $n$'s,
we see that
$x\in \bigcap_{q<\lfloor \ell/2\rfloor} \bigcup_{i\in I_q^{\varepsilon}} O_i$
for all $\ell$ and for $\varepsilon=0,1$.
Thus, $x$ is in both $A$ and $E\setminus A$, a contradiction.
Let us say that a pair $(m,t)\in X$ has type $\varepsilon$ if
$F_\varepsilon(m,t) >F_{1-\varepsilon}(m,t)$.

{\it Claim 2.}  Suppose
$x\in A$ and $(m,t)\in X$ has type $0$ and $x\in O_m$.
Then there exists $(p,u)\in X$ such that
$(p,u)$ has type $1$ and $x\in O_p$ and $(p,m)\in R$, $m<p$ and $t\leq u$.
Switching types $0$ and $1$, the same is true with $x\notin A$.

{\it Proof of Claim 2.}
We treat the sole case $x\in A$,
the other one being trivial modification.
Since $x\notin E\setminus A$,
condition $(*)$ insures that there exists $n$ such that
$x\notin\bigcup_{i\in I_n^0} O_i$.
Since $x\in A$, using again $(*)$, we see that
$x\in\bigcap_{r\leq n}\bigcup_{i\in I_r^1} O_i$.
Choose $i_1,\ldots,i_n$, $p$ such that
$x\in O_{i_r}$ and $(i_r,p)\in R$  for all $r\leq n$.
Let $u>t,n,i_1,\ldots,i_n$ .
Then $F_0(p,u)<n$ whereas $F_1(p,u)\geq n$
hence $(p,u)$ has type $1$ and satisfies Claim 2.
We extend \`a la Kleene-Brouwer the computable well-founded
reverse prefix partial ordering on $\+T$
into a computable total well-ordering $\preceq$ on $\+T$~:
let $\sigma,\tau\in\+T$,

\begin{itemize}
\item
If the sequences $\sigma$ and $\tau$ are prefix comparable,
then we $\preceq$-compare them
relative to the {\it reverse} prefix partial order.
\item
If the sequences $\sigma$ and $\tau$ are not prefix comparable,
then we compare the first elements on which they differ
relative to the usual order on $\N$.
\end{itemize}
Using Ershov's trick, cf. Remark~\ref{rk:omega1CK},
we consider the set
$\+S=\+T\times(\omega+2)$
well-ordered lexicographically using $\preceq$ on $\+T$
and the ordering on the ordinal
$\omega+2=\{0,1,2,\ldots,\omega,\omega+1\}$.
In other words, $\+S$ is obtained from $\+T$ by replacing
each element of $\+T$ by a chain of $\omega+2$ copies of
that element.
We also denote by $\preceq$ the well-ordering on $\+S$.
The following Claim is straightforward.

{\it Claim 3.}  $(\+S,\preceq)$ is a computable well-ordering
and its order type is an even ordinal.
Moreover, the parity of the rank relative to $\preceq$ of an element
$(\sigma,\gamma)\in\+S$ is equal to the parity of the ordinal
$\gamma$. In particular, this parity function is computable.
Attach to any element $(\sigma,\theta)\in\+S$ an open set
$U(\sigma,\theta)$ as follows:
If $\sigma$ is the empty sequence
then $U(\sigma,\theta)=\emptyset$.
Else, if the last pair $(m_k,t_k)$ of $\sigma$ 
has type $\varepsilon\in\{0,1\}$
then
$$
\forall n<\omega\ U(\sigma,n)=\emptyset\quad,\quad
U(\sigma,\omega+\varepsilon)=O_{m_k}\quad,\quad
U(\sigma,\omega+(1-\varepsilon))=\emptyset\ .
$$
If $\alpha$ is the rank of the element $(\sigma,\theta)$ in $T$
then let $A_\alpha=U(\sigma,\theta)$.
The family $(A_\alpha)_{\alpha<\xi}$ is an effective family of
open sets in the sense of Definition~\ref{def:eff Hausdorff}.
Let $\xi$ be the (limit) ordinal to which $(\+S,\preceq)$ is isomorphic.
The following Claim finishes the proof.

{\it Claim 4.} $D_\xi((A_\alpha)_{\alpha<\xi}) = A$.

{\it Proof of Claim 4.}
Suppose $x\in A$.
By Claim 2 there exists a pair $(m,t)$ such that 
the one element sequence $\tau=((m,t))$ is in $\+T$, has type $1$ and
$x\in U(\tau,\omega+1)$.
Thus, we can consider the least ordinal $\alpha$
which is the rank of some $(\sigma,\gamma)\in\+S$
such that $x\in U(\sigma,\gamma)$.
First, we show that $\alpha$ is odd.
Since $U(\sigma,\gamma)$ is not empty, its rank is of the form
$(\omega+2)\delta+\omega+\varepsilon$ where $\varepsilon$ 
is the type of $\sigma$.
If $\varepsilon=0$ then Claim 2 would allow to extend $\sigma$
to $\sigma\ (p,u)\in \+T$ such that $x\in O_p$
and $\sigma\ (p,u)$ has type $1$.
Since $\sigma\ (p,u)$ extends $\sigma$ it has lesser rank in $\+T$.
Thus, $x\in U(\sigma\ (p,u),\delta)$ where $(\sigma\ (p,u),\delta)$
has lesser rank than $(\sigma,\gamma)$, a contradiction.
By definition of $\alpha$, we have
$A_\alpha=U(\sigma,\gamma)$ hence $x\in A_\alpha$.
The choice of $\alpha$ insures that $x$ is in no $A_\beta$
for $\beta<\alpha$.
Thus, $x\in A_\alpha\setminus\bigcup_{\beta<\alpha}A_\beta$
with $\alpha$ odd. This shows that $x\in \tlD((A_\mu)_{\mu<\xi})$.
A similar argument shows that if $x\notin A$ then
$x\in A_\alpha\setminus\bigcup_{\beta<\alpha}A_\beta$
with $\alpha$ even hence $x\notin\tlD((A_\mu)_{\mu<\xi})$.
Summing up, we see that $A=\tlD((A_\mu)_{\mu<\xi})$.
\end{proof}

\begin{corollary}\label{cor:Hausdorff in omega continuous}
In every effective $\omega$-continuous domain $D$,
$$\displaystyle{G_\delta(D)\cap F_\sigma(D)
\subseteq
\bigcup_{\alpha<\aleph_1}\tlD_\alpha(D)
\subseteq \Delta^0_2(D)}.$$
\end{corollary}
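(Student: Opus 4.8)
The plan is to obtain this corollary as a direct instance of Theorem~\ref{thm:Hausdorff effective}, with essentially no new work. First I would recall, via Proposition~\ref{p:eff topo approx}, that every effective $\omega$-continuous domain $D$ is an effective approximation space. Concretely: if $B=\{b_n\mid n\in\N\}$ is a basis witnessing effectivity of $D$ in the sense of Definition~\ref{def:computableScottdomain} (so that $\{(i,j)\mid b_j\ll b_i\}$ is computably enumerable), then the basic open sets $\uup b_n$ enumerate a topological basis, and the approximation relation built in the proof of Proposition~\ref{p:approx}, namely $U\ll V$ iff $V=\uup c$ for some $c\in U$, is computably enumerable on this enumerated basis precisely because $\ll$ on $B$ is. Thus the triple $(D,(\uup b_n)_{n\in\N},\ll)$ is an effective topological approximation space.

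With this in hand, I would simply apply Theorem~\ref{thm:Hausdorff effective} to $E:=D$, which yields at once
$G_\delta(D)\cap F_\sigma(D)\subseteq\bigcup_{\alpha<\CK}\tlD_\alpha(D)\subseteq\Delta^0_2(D)$.
The only bookkeeping point left is to reconcile the index set appearing in the corollary with the one coming from the theorem: since the effective Hausdorff classes $\tlD_\alpha(\Sigma^0_1(D))$ are defined only for $1\le\alpha<\CK$ (Definition~\ref{def:eff Hausdorff}), the union $\bigcup_{\alpha<\aleph_1}\tlD_\alpha(D)$ written in the corollary is literally the same set as $\bigcup_{\alpha<\CK}\tlD_\alpha(D)$, so the displayed chain of inclusions is exactly what Theorem~\ref{thm:Hausdorff effective} gives.

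I do not expect any genuine obstacle: all the difficulty was already absorbed into Theorem~\ref{thm:Hausdorff effective} and Proposition~\ref{p:eff topo approx}. The one subtlety worth spelling out is that a single fixed enumeration is used throughout — the basis enumeration $(b_n)_{n\in\N}$ witnessing effectivity of $D$ as an $\omega$-continuous domain must be the very one with respect to which $G_\delta(D)$, $F_\sigma(D)$, $\Delta^0_2(D)$ and the $\tlD_\alpha(D)$ are computed — so that the hypotheses of Theorem~\ref{thm:Hausdorff effective} are met on the nose. Once this is made explicit, both inclusions transfer verbatim.
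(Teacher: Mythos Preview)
Your proposal is correct and matches the paper's intended derivation: the corollary is stated without proof immediately after Theorem~\ref{thm:Hausdorff effective}, and the reduction is precisely via Proposition~\ref{p:eff topo approx} (effective $\omega$-continuous domains are effective approximation spaces). Your remarks on the fixed basis enumeration and on reconciling the index set $\alpha<\omega_1$ with $\alpha<\CK$ (the $\tlD_\alpha$ being defined only for $\alpha<\CK$) are the right bookkeeping observations and go slightly beyond what the paper bothers to spell out.
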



\end{document}